\newtheorem{theorem}{Theorem}
\newtheorem{lemma}{Lemma}
\newtheorem{observation}{Observation}
\newtheorem{corollary}{Corollary}
\newtheorem{proposition}[theorem]{Proposition}
\newtheorem{definition}{Definition}
\newtheorem{remark}{Remark}
\newcommand{\pset}{{\mathcal P}}
\newcommand{\opt}{{\sf OPT}\xspace}
\newcommand{\dg}{{\sf CG}\xspace}
\newcommand{\FF}{\textsc{F}}
\newcommand{\GG}{\mathcal{G}}
\newcommand{\GR}{\textsc{Greedy}\xspace}
\newcommand{\GGR}{\textsc{Ggreedy}\xspace}
\newcommand{\WBO}{\textsc{Wilber1}}
\newcommand{\WBT}{\textsc{Wilber2}}
\newcommand{\LE}{\text{left}}
\newcommand{\RI}{\text{right}}
\newcommand{\UB}{\textsc{Ub}}
\newcommand{\PR}{\textsc{Preferred-Child}}
\newcommand{\OPT}{\textsc{Opt}}
\newcommand{\TT}{\mathcal{T}}
\newcommand{\HTT}{\hat{\mathcal{T}}}
\newcommand{\SQLN}{{\sqrt{\log n}}}
\newcommand{\EL}{\ell}
\newcommand{\WS}{\textsc{Ws}}
\newcommand{\CC}{\mathcal{C}}
\newcommand\numberthis{\addtocounter{equation}{1}\tag{\theequation}}
\newcommand{\aset}{{\mathcal A}}
\newcommand{\nset}{{\mathcal N}}
\newcommand{\wset}{{\mathcal W}}
\newcommand{\tset}{{\mathcal T}}
\newcommand{\gab}{{\sf GAB}}
\newcommand{\attention}[1]{\textcolor{red}{*** #1 *** }}
\newenvironment{proof}{\par \noindent{\bf Proof:}}{\hfill\stopproof}
\def\stopproof{\square}
\def\square{\vbox{\hrule height.2pt\hbox{\vrule width.2pt height5pt \kern5pt
\vrule width.2pt} \hrule height.2pt}}
\par\vspace{4mm}}
\title{The Group Access Bounds for Binary Search Trees}
\date{}
\begin{document}

\author[$\ast$]{Parinya Chalermsook\thanks{Supported by European Research Council (ERC) under the
European Union’s Horizon 2020 research and innovation programme (grant
agreement No. 759557)}}
\author[$\dagger$]{Manoj Gupta}
\author[$\ast$]{Wanchote Jiamjitrak}
\author[$\dagger$]{Akash Pareek\thanks{Part of the research was done when the author was visiting Aalto University, Finland.}}
\author[$\ddagger$]{Sorrachai Yingchareonthawornchai}
\affil[$\ast$]{Aalto University}
\affil[$\dagger$]{IIT Gandhinagar}
\affil[$\ddagger$]{Simons Institute for the Theory of Computing, UC Berkeley}

\renewcommand\Authands{ and } 

\maketitle

\begin{abstract}

The access lemma (Sleator and Tarjan, JACM  1985) is a property of binary search trees (BSTs) that implies interesting consequences such as static optimality, static finger, and working set property. However, there are known corollaries of the dynamic optimality that cannot be derived via the access lemma, such as the dynamic finger, and any $o(\log n)$-competitive ratio to the optimal BST where $n$ is the number of keys. 

In this paper, we introduce the {\em group access bound} that can be defined with respect to a reference \textit{group access tree}. 
Group access bounds generalize the access lemma and imply properties that are far stronger than those implied by the classical access lemma. For each of the following results, there is a group access tree whose group access bound
\begin{enumerate}

    \item Is $O(\sqrt{\log n})$-competitive to the optimal BST. 


    \item Achieves the $k$-finger bound with an \textit{additive} term of $O(m \log k \log \log  n)$ (randomized) when the reference tree is an almost complete binary tree.
    
	\item Satisfies the unified bound with an \textit{additive} term of $O(m \log \log n)$.
 
	\item Matches the unified bound with a time window $k$ with an \textit{additive} term of $O(m \log k \log \log n)$ (randomized).
     
\end{enumerate}

Furthermore, we prove the simulation theorem: For every group access tree, there is an online BST algorithm that is $O(1)$-competitive with its group access bound. In particular, any new group access bound will automatically imply a new BST algorithm achieving the same bound.  Thereby, we obtain an improved $k$-finger bound (reference tree is an almost complete binary tree), an improved unified bound with a time window $k$, and matching the best-known bound for Unified bound in the BST model. 
Since any dynamically optimal BST must achieve the group access bounds, we believe our results provide a new direction towards proving $o(\log n)$-competitiveness of Splay tree and Greedy, two prime candidates for the dynamic optimality conjecture.

\end{abstract}

\newpage

\section{Introduction}
\label{sec:intro}

In the amortized analysis of a self-adjusting binary search tree (BST), the \textit{access lemma}~\cite{SleatorT85} is perhaps the most fundamental property of a BST algorithm that allows us to prove the competitiveness against many performance benchmarks, including temporal and spatial locality. If a BST algorithm satisfies the access lemma, then, by plugging in appropriate parameters,  the algorithm also satisfies many interesting corollaries of the \textit{dynamic optimality conjecture} including, for example, {\em balance theorem~\cite{SleatorT85}, static optimality~\cite{SleatorT85,fox11}, static finger property~\cite{SleatorT85},  working set property~\cite{SleatorT85}, and key-independent optimality~\cite{Iacono_optimality}}. For example, Splay tree~\cite{SleatorT85}, Greedy~\cite{demaine_geometry,munro2000competitiveness}, and Multi-splay tree~\cite{multi-splay-thesis2006} are all known to satisfy the access lemma~\cite{SleatorT85,fox11,multi-splay-thesis2006}.

Despite these many applications, several strong BST properties cannot be implied via the access lemma, including \textit{``non-trivial'' competitiveness}, $k$-\textit{finger property} (or even the (weaker) dynamic finger property~\cite{cole2000dynamic}), and the \textit{unified bound}~\cite{iacon_unified,derryberry2009skip}.  
For completeness, we discuss each one of them in turn.   
\paragraph{Competitiveness.} Dynamic optimality conjecture~\cite{SleatorT85} postulates that there is an online binary search tree (BST) on $n$ keys whose cost to perform a search (access) sequence $(x_1, \ldots, x_m) \in \{1,2,\ldots, n\}^m$ (including the cost to adjust the internal structure in between the sequence) is at most that of the offline optimum up to a constant factor. We say that a BST algorithm is $f(n)$-\textit{competitive} if its total cost is, at most, the cost of the offline optimum up to a factor of $f(n)$. A BST algorithm is dynamically optimal if it is $O(1)$-competitive.

Splay tree~\cite{SleatorT85} and Greedy~\cite{demaine_geometry} are widely regarded as the prime candidates for dynamic optimality conjecture. However, the best-known competitiveness of both the algorithms remains $O(\log n)$, which can be shown from the access lemma (via static optimality) or using any balanced trees. The access lemma cannot be used to prove $o(\log n)$-competitive (for example, we cannot even derive the \textit{sequential access theorem}~\cite{tarjan1985sequential} via the access lemma). In contrast, there are BST algorithms with $O(\log \log n)$-competitiveness. In \cite{tango_trees}, the authors presented the first  $O(\log \log n)$-competitive binary search tree, which they call Tango Trees. Several subsequent results  provided alternate $O(\log \log n)$-competitive algorithms~\cite{georgakopoulos2008chain,wang2006log,bose2010log,chalermsook2020new}. Despite achieving the best-known competitive ratios,  the fact that these algorithms do not satisfy many corollaries of the dynamic optimality conjecture makes them less promising than Splay and Greedy. 


\paragraph{$k$-Finger Bound.} 
Several notions of finger bounds~\cite{SleatorT85,cole2000dynamic,IaconoL16} have been introduced to capture the ``locality'' of input sequences. The strongest finger bound, called $k$-Finger bound,  was motivated by the connection between BSTs and the $k$-server problem~\cite{chalermsook2018multi}. It has still remained unclear whether any online BST algorithm achieves this property. 

We define $k$-Finger bound as follows. Assume we have an almost complete binary tree where each leaf represents a distinct key from $\{1,\dots n\}$. This tree is called the {\em reference tree}. Assume that there are $k$-fingers stationed at $k$ arbitrary leaves in the reference tree.

\begin{definition} (\textbf{$k$-Finger bound}): When a key $x_t$ is searched at time $t$, we must move one finger from its current position to the node containing $x_t$. 
The cost of the search is the number of nodes on the unique path connecting the finger's source to its destination. We define $\FF^k(X)$ as the minimum, overall finger movement strategies distance traversed by the fingers to process the sequence $X$ when the reference tree is an almost complete binary tree. 
\end{definition}



 The classical access lemma cannot imply the $k$-finger property even when $k = 1$ (called the lazy finger bound~\cite{IaconoL16}, which generalizes the dynamic finger bound~\cite{cole2000dynamic}). Nonetheless, it is possible to prove a non-trivial bound w.r.t. $k$-Finger  using different techniques. In \cite{chalermsook2018multi}, the authors claimed the existence of an online BST algorithm with cost $O((\log k)^7 \FF^k(X))$. However, this claim has an error since the algorithm employs Lee's \cite{lee2018fusible} $k$-server result, which the author has retracted. Instead of Lee's result, we can use the k-server result of Koutsoupias and Papadimitriou \cite{koutsoupias1995k}, $(2k-1)$-competitive. By using \cite{chalermsook2018multi}, it implies that an online BST algorithm exists with a running time of $O(k \FF^k(X))$.
This is a relatively large gap when compared with the best achievable offline BST bound of $O(\log k)F^k(X)$~\cite{chalermsook2018multi}; in fact, whether there exists a BST whose cost is additive in the $k$-finger bound, that is $O(F^k(X) + m \log k)$, has remained open. 


\paragraph{Unified Bound.} 
To describe the unified bound, we should first understand the {\em working set bound} \cite{iacono2005key,SleatorT85} and the {\em dynamic finger bound} \cite{cole2000dynamic,SleatorT85,IaconoL16}. If $x_t$ is a search key, then the working set bound requires $x_t$ to be searched in amortized time $O(\log \WS(x_t))$, where $\WS(x_t)$ is the number of distinct keys searched since the last search of $x_t$. The working set bound is based on temporal locality and implies many more bounds, such as the static finger bound,  the static optimality bound, etc. The dynamic finger bound states that the amortized time to search $x_t$ is $O(\log|x_t-x_{t-1}|+2)$. The dynamic finger bound is based on spatial locality.

The {\em unified bound} \cite{iacon_unified} implies both the working set and the dynamic finger. 
\begin{definition}
    The unified bound can be defined as $\UB(X) = \sum_{t=2}^m\log(\min_{t'<t}
\{t-t' + |x_t-x_{t'}| + 2\}).$
\end{definition} The unified bound is stronger than both the working set and the dynamic finger, as it suggests that it is in-expensive to search a key that is close to a recently searched key. It is true that the access lemma cannot prove the unified bound, although there are data structures that satisfy the unified bound. Iacono \cite{iacon_unified} gave a comparison-based data structure called the unified structure that achieves the unified bound. 
In \cite{buadoiu2007unified}, the authors gave a dynamic comparison-based data structure that achieves the unified bound. 
Derryberry and Sleator \cite{derryberry2009skip} designed the first BST algorithm called Skip-Splay tree that nearly achieves the unified bound with its running time of  $O(\UB(X) + m \log\log n)$.  \cite{bose2012layered} modified Skip-Splay using layered working-set trees to get amortized  $O(\UB(x_i) + \log\log n)$ time. Whether the unified bound can be achieved by a BST has remained an intriguing open question. 

\paragraph{Unified bound with time window.}  

In the unified bound, for each $x_t$, we find a key $x_{t'}$ that minimizes the term $
(t-t' + |x_t-x_{t'}| + 2)$ where $t'<t$. We can add another condition that $t'$ should be one of the last $k$ searched keys before time $t$. Thus, $t'$ comes from some time window. We call this variant {\em Unified Bound with a time window}. We formally define it as follows.

\begin{definition}
   Given an integer $k$, the unified bound with a time window is $$\UB^k(X) = \sum_{t=2}^m\log\Big(\min_{t'\in [t-k \dots t-1]}
\{t-t' + |x_t-x_{t'}| + 2\}\Big).$$
\end{definition}

This bound can be seen as ``interpolating'' between the dynamic finger and the unified bounds: 
When $k=1$, $\UB^k(X)$ is the dynamic finger bound, and when $k=m$, $\UB^m(X)$ is the unified bound. For $k=1$, we know that both Splay tree and $\GR$ satisfy the dynamic finger bound. The authors of  \cite{chalermsook2018multi}, showed a relation between $\UB^k(X)$ and $\OPT(X)$, i.e, $$\OPT(X) \le \beta(k) \UB^k(X)$$ where $\beta(\cdot)$ is some fixed (super exponentially growing) function.  

In sum, the access lemma is an intrinsic property of a BST which implies nice properties but cannot seem to imply any of the aforementioned properties. 

\subsection{Our New Concept: The Group Access Bounds} 

The main conceptual contribution of this paper is to introduce the \textit{group access bounds} that generalize the bound from the access lemma. We give an informal definition here (see \Cref{sec:group access tree bound} for the formal definitions).  Denote $[n] = \{1,\ldots, n\}$. The key gadget to describe our bound is the notion of \textit{group access tree}, which captures a hierarchical partition of $[n]$ until singletons are obtained. That is, in a group access tree $\tset$, each node $v \in V(\tset)$ is associated with an ``interval'' $I_v \subseteq [n]$ (consecutive integers). 
The root $r \in V(\tset)$ has $I_r = [n]$, if a node $v$ has children $v_1,\ldots, v_k$, then we have that $\{I_{v_1},\ldots, I_{v_k}\}$ forms a partition of $I_v$. This process continues until each leaf $v \in V(\tset)$ is associated with a singleton. 
Note that the tree does not have to be binary. See Figure~\ref{fig:group-tree} for illustration.

\begin{figure}[h]
    \centering
    \includegraphics[width=0.8\textwidth]{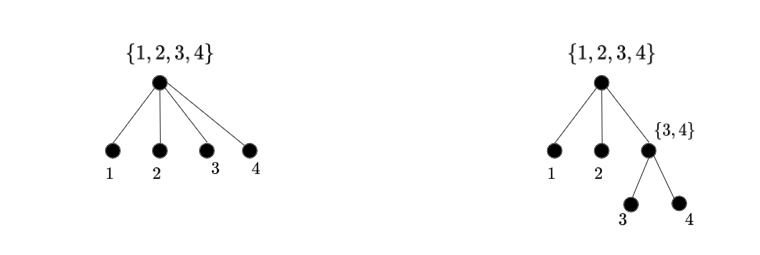}
    \caption{Examples of two group access trees. Each represents a hierarchical partition of $[n]$ until singletons are obtained. When a group access tree is a star (LHS), our bound is simply the access lemma. }
    \label{fig:group-tree}
\end{figure}

Let $w$ be a weight function that assigns a real-valued weight to each node in $\tset$. We define the cost to access the tree $\mathcal{T}$ w.r.t. the weight function $w$ as follows. For any edge $(u,v)$ where $u$ is the parent of $v$, the cost on $(u,v)$ is $ \log \frac{W(u)}{w(v)}$ where $W(u)$ is the total weight of all the children of $u$. The access cost of a key $a \in [n]$, denoted as $\text{cost}_{\mathcal{T},w}(a)$, is defined as the total cost of all the edges in the path $P_a$ from the root to the leaf containing $a$ in the group access tree $\mathcal{T}$. That is, 
$$\text{cost}_{\mathcal{T},w}(a) = \sum_{(u,v) \in P_a} \log \frac{W(u)}{w(v)}. $$

Some readers may have observed the similarity between our cost function and that of the access lemma. Indeed, one can show that it generalizes the access lemma. 
\begin{observation}
    If the group access tree $\mathcal{T}$ is a star, then the access $a \in [n]$ on $\mathcal{T}$ gives the same (amortized) cost as the access lemma on the same weight function. 
\end{observation}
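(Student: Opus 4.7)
The plan is to unfold both definitions and observe that they coincide term-for-term. A star group access tree $\mathcal{T}$ on $[n]$ consists of a root $r$ with $n$ children, where each child $v_a$ is a leaf associated with the singleton $\{a\}$ for $a \in [n]$. Consequently, for any key $a$, the root-to-leaf path $P_a$ has exactly one edge, namely $(r, v_a)$.

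Applying the definition of $\text{cost}_{\mathcal{T}, w}(a)$, the access cost becomes the cost of this single edge:
\[
\text{cost}_{\mathcal{T}, w}(a) \;=\; \sum_{(u,v) \in P_a} \log \frac{W(u)}{w(v)} \;=\; \log \frac{W(r)}{w(v_a)}.
\]
Here $W(r)$ is the sum of the weights of the children of the root, which, since every leaf of $\mathcal{T}$ is a child of $r$, equals $\sum_{a \in [n]} w(v_a)$, the total weight.

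Identifying $w(v_a)$ with the Sleator--Tarjan weight $w(a)$, the expression $\log(W(r)/w(v_a))$ is precisely the amortized access bound $\log(W / w(a))$ guaranteed by the classical access lemma. Hence the two quantities coincide, and the observation follows.

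The only step that deserves care is the bookkeeping of what $W(u)$ means in the group access tree definition (sum over children weights) versus what $W$ means in the access lemma (sum over all node weights); for a star these two notions coincide because the children of the root are exactly the $n$ leaves, and no intermediate node contributes. No obstacle is expected: the statement is essentially a definitional sanity check that the new cost function specializes to the access lemma in the simplest case.
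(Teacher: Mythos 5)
Your proposal is correct and matches the paper's treatment: the paper states this observation without an explicit proof, regarding it as an immediate definitional unfolding, and your argument is exactly that unfolding — for a star the search path is the single edge $(r,v_a)$, so the cost is $\log(W(r)/w(v_a))$ with $W(r)$ equal to the total weight, which is the access lemma's amortized bound $\log(W/w(a))$. Nothing further is needed.
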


Intuitively, the group access bound offers a ``search tree'' (which is not necessarily binary) where the cost of searching $a$ is the sum of the cost of the edges on the search path $P_a$.

Let  $\mathcal{W} = (w^{(1)}, \ldots, w^{(m)})$ be a sequence of weight functions. The total cost  of the group access tree $\mathcal{T}$ on an access sequence $X = (x_1,\ldots, x_m)$ , where $x_t \in [n]$ for all $t$, is 
$$ \text{cost}_{\mathcal{T}, \mathcal{W}}(X) = \sum_t\text{cost}_{\mathcal{T},w^{(t)}}(x_t). $$

Similar to the access lemma, the weight functions should change in a controllable manner in order to be meaningful (e.g., the weight functions for the working set bound are changed in a structured way in the access lemma).  Here, we introduce the notion of \textit{locally bounded} weight families. We say that a sequence of weight functions $\mathcal{W} = (w^{(1)},\ldots, w^{(m)})$ is \textit{locally bounded} if for all $t$, the weight increase from time $t$ to $t+1$ can happen only at the nodes in the path from root to $x_t$ in the group access tree $\mathcal{T}$. 

Given an input sequence $X = (x_1,\ldots,x_m)$,  \textit{the group access bound} $\gab(\tset, X)$ w.r.t.  a group access tree $\mathcal{T}$ is
\[ \gab(\tset,X) = \min_{\wset \mbox{ locally bounded}}  \text{cost}_{\mathcal{T}, \mathcal{W}}(X).  \]


\subsection{Our Technical Results: Deriving BST Bounds from GAB}

We show that the cost of the group access tree (with respect to certain weight functions) is competitive against many strong BST bounds that are not known via the access lemma.  
We say a group access bound is \emph{randomized} if the group access tree $\mathcal{T}$ is obtained by a random process that iteratively partitions $[n]$ until singletons are obtained.

\begin{theorem} \label{thm:main gab}
    For each of the following bounds, there exists deterministic group access trees $\mathcal{T}_1, \tset_3$ and randomized group access trees $\tset_2, \tset_4$ such that for all access sequences $X = (x_1,\ldots,x_m)$ where $x_t \in [n]$ for all $t$, 
\begin{enumerate}
    \item $\gab(\tset_1,X) = O(\sqrt{\log n}) \cdot \OPT(X)$ where  $\OPT(X)$ is the cost of offline optimal BST on $X$. 
    \item   $\gab(\tset_2,X) = O(\FF^k(X) + m \log k \log \log n)$ (randomized). That is, it is competitive with $k$-finger up to an additive term when the reference tree is an almost complete binary tree. 
    \item $\gab(\tset_3,X) = O(\UB(X) + m \log\log n)$. That is, it is competitive with the unified bound up to an additive term. 
    \item  $\gab(\tset_4,X) = O(\UB^k(X) + m \log k \log \log n)$ (randomized).  That is, it is competitive with unified bound with a time window up to an additive term. 
\end{enumerate}
The group access tree for the second and the fourth bound can be efficiently constructed from a probability distribution. 
\end{theorem}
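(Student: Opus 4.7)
The plan is to prove the four parts of \Cref{thm:main gab} independently: each part will be witnessed by its own group access tree $\tset_i$ together with its own locally bounded weight sequence $\wset_i$. In every case my strategy is the same three steps: (i) describe the hierarchical partition underlying $\tset_i$, (ii) specify an online update rule for $\wset_i$ that respects the locality constraint (weight can increase only on the root-to-$x_t$ path), and (iii) upper bound the resulting total cost by an access-lemma-style telescoping applied level-by-level, charging each level separately to an appropriate lower bound on the target benchmark.

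For Part~1, I would take a deterministic layered tree $\tset_1$ of depth $\sqrt{\log n}$ in which each internal node at depth $i$ splits its interval into $2^{\sqrt{\log n}}$ equal consecutive pieces. The weights would be driven by Wilber's interleave lower bound applied \emph{separately on each of the $\sqrt{\log n}$ levels}, i.e.\ at each level we simulate an ``access-lemma potential'' that counts interleaves restricted to that level. Because Wilber's interleave count on any one level is at most $\OPT(X)$, summing the contributions over the $\sqrt{\log n}$ levels yields $O(\sqrt{\log n})\cdot\OPT(X)$.

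For Part~3, I would use a deterministic $\tset_3$ whose levels correspond to doubly-exponential scales (level~$i$ groups have size $2^{2^i}$, in the spirit of Skip-Splay), so that the tree has only $O(\log\log n)$ levels. The weight on level~$i$ above key~$a$ would be a working-set count restricted to the level-$i$ interval containing~$a$. A per-level working-set argument then gives an amortized cost of $\log\min_{t'<t}(t-t'+|x_t-x_{t'}|+2)+O(\log\log n)$ per access, which sums to $O(\UB(X)+m\log\log n)$.

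For Parts~2 and~4, I would randomize the decomposition using a shifted-grid construction reminiscent of probabilistic HST embeddings: at every scale, intervals of size roughly~$k$ are formed using a uniformly random offset. The weight rule tracks, respectively, the current positions of the $k$ fingers (Part~2) and a windowed working-set over the last $k$ accesses (Part~4). Randomness is what allows the ``finger distance / recent-access distance'' charge to be absorbed by a single edge of $\tset$ at the correct scale in expectation, contributing the extra $\log k$ factor, while the $\log\log n$ comes from the depth of the random hierarchy. The main obstacle throughout, and most acutely in the randomized parts, will be tuning the weight update so that it is simultaneously locally bounded and tight at every level: a greedy update tends to raise weights off the accessed path (violating locality), while an overly cautious update leaks more than the promised additive term. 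Once these update rules are in place, each part reduces to a per-level access-lemma telescoping combined with the appropriate lower bound---Wilber for Part~1, reference-tree distance for Part~2, working set for Parts~3 and~4.
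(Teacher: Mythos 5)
Your constructions for Parts~2--4 do track the paper's: a doubly-exponential hierarchy of $O(\log\log n)$ levels with local working-set weights for the unified bound (the conversion from per-level working-set costs to $\UB(X)+m\log\log n$ that you assert is exactly Lemmas~2--3 of Derryberry--Sleator, which the paper also invokes as a black box), and randomly shifted partitions with finger-boosted weights for Parts~2 and~4. One correction there: the level-$j$ intervals must have size $n^{1/2^j}$, independent of $k$ (otherwise you do not get a hierarchy down to singletons of depth $O(\log\log n)$); the parameter $k$ enters only through the weights, $\max\{1/k,\,1/n^{1/2^j}\}$ for a finger-containing group versus $1/n^{1/2^j}$ otherwise, which is where the $O(\log k)$ per level comes from.

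The genuine gap is the accounting in Part~1. With fan-out $M=2^{\sqrt{\log n}}$, an access whose level-$j$ child currently holds an $o(1)$ fraction of its siblings' total weight costs $\omega(1)$ on that edge, and at any moment only $O(1)$ children of a node can hold a constant fraction of the weight; so under any locally bounded weight family (in particular the natural one, weight $M$ for the last-accessed child and $1$ otherwise) a switch to a not-recently-used child costs $\Theta(\log M)=\Theta(\sqrt{\log n})$, not $O(1)$. Hence the total group access cost has the form $O(m\sqrt{\log n})+O(\sqrt{\log n})\cdot\sum_j I_j$, where $I_j$ is the number of such switches at level $j$. Your charge ``the interleave count on any one level is at most $\OPT(X)$, summed over the $\sqrt{\log n}$ levels'' therefore yields only $O(\sqrt{\log n})\cdot\sqrt{\log n}\cdot\OPT(X)=O(\log n)\cdot\OPT(X)$, the trivial ratio, unless you silently assume unit cost per switch, which the cost structure forbids. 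What is needed, and what the paper proves in \Cref{sec:newsqrt}, is the \emph{joint} bound $\sum_j I_j = O(\OPT(X))$: replace each node's edges to its $M$ children by an arbitrary BST gadget $T_g$ to binarize the whole group access tree into a single leaf-oriented reference tree $R$, and observe that a switch at any level at time $t$ forces a preferred-child change at some node inside the corresponding gadget, with distinct gadgets across groups and levels, so $\sum_t \gamma_t \le O(\WBO_R(X)) \le O(\OPT(X))$; only then is the $O(\sqrt{\log n})$ per-switch cost multiplied in. Without this single-reference-tree Wilber charge (or an equivalent joint argument), your Part~1 does not reach $O(\sqrt{\log n})$-competitiveness.
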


It is not immediately clear that these group access bounds can be realized by BST algorithms. We show that every group access tree $\mathcal{T}$ can be simulated by an online BST algorithm. 
Let $\mathcal{A}$ be a BST algorithm. We denote $\text{cost}_{\mathcal{A}}(X)$ to be the cost of algorithm $\mathcal{A}$ running on a sequence $X = (x_1,\ldots, x_m)$. 
\begin{theorem} [Simulation Theorem]\label{thm:simulation}
For any group access tree $\mathcal{T}$, there exists an online BST algorithm $\mathcal{A}$ such that for any sufficiently long access sequence $X$ ,  $\text{cost}_{\mathcal{A}}(X) = O(\gab(\mathcal{T},X))$.  Furthermore, if the group access tree $\mathcal{T}$ is randomized, then $\mathcal{A}$ is randomized, where the competitive ratio is measured in the oblivious adversary model. 
\end{theorem}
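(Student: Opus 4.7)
The plan is to simulate the group access tree $\mathcal{T}$ by a hierarchy of splay trees, one for each internal node of $\mathcal{T}$, and then embed them into a single BST on $[n]$ using a tango-tree-style construction. Concretely, for each internal node $u$ of $\mathcal{T}$ with children $u_1, \ldots, u_k$ ordered by their intervals, I maintain an auxiliary splay tree $S_u$ over the children. When the adversary requests $x_t$, the algorithm traces the root-to-leaf path $P_{x_t}$ in $\mathcal{T}$ and, at each internal node $u \in P_{x_t}$, performs a splay in $S_u$ on the child $u_i$ lying on $P_{x_t}$. By the Sleator-Tarjan access lemma applied to $S_u$ with weights $w^{(t)}(u_1), \ldots, w^{(t)}(u_k)$, the amortized cost of this single splay is $O(\log(W(u)/w^{(t)}(u_i)))$. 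Summing along $P_{x_t}$ gives amortized cost $O(\text{cost}_{\mathcal{T}, w^{(t)}}(x_t))$, and summing over $t$ and taking the infimum over locally bounded $\mathcal{W}$ yields $O(\gab(\mathcal{T}, X))$.

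The main technical obstacle is to realize the collection $\{S_u\}_{u \in V(\mathcal{T})}$ as a single in-order correct BST on $[n]$ such that rotations inside any one $S_u$ translate to valid rotations in the outer BST, and access costs combine additively rather than multiplicatively in the depth of $\mathcal{T}$. I would follow the layout used for tango trees and multi-splay trees: identify each node of $\mathcal{T}$ with a canonical representative key in its interval and store each $S_u$ contiguously in the in-order layout, with the subtrees of $\mathcal{T}$ rooted at the non-visited children of $u$ hanging off the corresponding leaves of $S_u$. Since the intervals $I_{u_1}, \ldots, I_{u_k}$ are consecutive and pairwise disjoint, every rotation inside $S_u$ only permutes contiguous in-order ranges, so it is a legal BST rotation, and moving from level $u$ to level $u_i$ along $P_{x_t}$ costs $O(1)$ extra work at the boundary.

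For the analysis I would sum the splay potential functions across all levels $u$ and invoke the standard access lemma at each level separately. The locally bounded hypothesis is precisely what makes this decomposition work: weight increases between consecutive accesses occur only along $P_{x_t}$, so the only potential functions that change between time $t$ and $t+1$ are those of the $S_u$ whose owner lies on $P_{x_t}$, which is exactly the setting in which the Sleator-Tarjan amortized bound remains valid under weight updates. The ``sufficiently long sequence'' assumption absorbs the one-time $O(n)$ construction cost. For the randomized case, the same deterministic construction is applied to a single sample of $\mathcal{T}$ drawn before the sequence begins; taking expectations of the deterministic pointwise bound against the oblivious adversary gives the randomized statement. I expect the most delicate step to be verifying that the tango-style embedding charges each rotation inside an $S_u$ only $O(1)$ times in the outer BST (so that no cascading work of order $\Omega(\text{depth}(\mathcal{T}))$ is triggered), which may require an additional ``bounded-depth-per-level'' invariant on the embedding and a careful accounting of boundary edges between successive $S_u$'s.
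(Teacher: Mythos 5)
Your route is genuinely different from the paper's: the paper never builds an explicit pointer structure at all. It defines \GGR{} (``Greedy on groups'') in the geometric model of \cite{demaine_geometry} -- at each level $j$ it touches the boundary keys of the accessed group $g_j(x_t)$ and of every unsatisfied sibling group inside $g_{j-1}(x_t)$ -- proves the output is arborally satisfied (\Cref{lem:GGarborally}), so \Cref{lem:bsttogeometry} yields an online BST, and then proves a \emph{group access lemma} (\Cref{sec:newgroupaccess}) for this algorithm with a sum-of-logs potential on group neighborhoods, following Fox's proof of the access lemma for \GR{}, finishing with a two-phase argument for the locally bounded weight changes (\Cref{sec:newggreedycomp}). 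Your plan instead keeps the classical Sleator--Tarjan access lemma per level and pushes all the difficulty into a tango/multi-splay-style embedding of one splay tree $S_u$ per internal node of $\tset$. That trade is legitimate in principle, but the embedding is exactly where your argument has a genuine gap, and it is not a detail: unlike tango or multi-splay trees, where the auxiliary trees store actual keys (nodes of the reference tree, partitioned into preferred paths), here the internal nodes of $\tset$ are groups, not keys, so $S_u$ must live on ``canonical representative keys.'' The natural choices collide across levels: the minimum (or maximum) of $I_{u_i}$ is also the minimum (maximum) of $u_i$'s extreme child, so the same key would have to serve as a node of both $S_u$ and $S_{u_i}$. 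With arbitrary representatives $r_1<\cdots<r_k$, the in-order gap between $r_i$ and $r_{i+1}$ must host keys from two \emph{different} child structures ($I_{u_i}\cap(r_i,\infty)$ and $I_{u_{i+1}}\cap(-\infty,r_{i+1})$), so the hanging subtrees of distinct children become entangled, and it is no longer clear that a splay step inside $S_u$ is a legal rotation sequence carrying only $O(1)$ extra pointers, nor that crossing from $S_u$ to $S_{u_i}$ costs $O(1)$. You flag this as ``the most delicate step'' but give no construction or invariant; resolving it is essentially the content of the theorem, and it is precisely what the paper's boundary-key, arborally-satisfied-set formulation is designed to sidestep.

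Two smaller points would also need repair even granting the embedding. First, the weighted access lemma gives $O(\log(W(u)/w^{(t)}(u_i))+1)$ per level, so summing along $P_{x_t}$ yields $O\bigl(\text{cost}_{\tset,w^{(t)}}(x_t)+\mathrm{depth}(\tset)\bigr)$, not $O(\text{cost}_{\tset,w^{(t)}}(x_t))$; the paper's own guarantee carries only an additive $O(m)$ (absorbed by ``sufficiently long'' sequences together with the $O(n\log^2 n)$ initial potential), so you need either an argument that the per-level $+1$'s telescope away or an explicit additive $m\cdot\mathrm{depth}(\tset)$ term and a check that it is harmless in the intended applications. Second, ``the Sleator--Tarjan bound remains valid under weight updates'' is not automatic: when $w(u_i)$ increases after the access, the sum-of-logs potential of $S_u$ increases, and you must argue that because $u_i$ has just been splayed to the root of $S_u$ this increase is only $O(\log(W^{\text{new}}(u)/w^{\text{old}}(u_i)))$, i.e.\ another term of the same order that can be absorbed by doubling the bound -- this is the analogue of \Cref{lem:potentialdec} in the paper, and it has to be stated and proved rather than asserted (weight decreases off the path are indeed free, as you say).
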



The Simulation Theorem~(\Cref{thm:simulation}) signifies that one can prove new BST bounds by just proving the existence of a group access tree along with locally bounded weight families.

\begin{table}[h]
\centering
\begin{tabular}{|l|l|l|l|l|}
\hline 
                               & \textbf{Offline Upper}                    & \textbf{Offline Lower}       & \textbf{Online Upper}                     & \textbf{Our new bounds}                           \\
                               \hline 
 $\opt$  &  $\opt \log \log n$ \cite{tango_trees}& $\opt$ & $ \opt \log \log n$ \cite{tango_trees} & $\opt \sqrt{\log n}$ \\ 
$F^k$                       & $F^k(X) \log k$     \cite{chalermsook2018multi}           & $F^k(X) + m \log k$  & $k F^k(X)$      \cite{chalermsook2018multi}              & $F^k(X) + m \log k \log \log n$ \\
${\sf UB}$                  & ${\sf UB}(X) + m \log \log n$  \cite{derryberry2009skip}& ${\sf UB}(X)$       & ${\sf UB}(X) + m \log \log n$ \cite{derryberry2009skip} & ${\sf UB}(X) + m \log \log n$     \\
${\sf UB}^k$ & $\beta(k) {\sf UB}^k(X)$ \cite{chalermsook2018multi}        & ${\sf UB}^k(X)$     & $\beta(k) {\sf UB}^k(X)$  \cite{chalermsook2018multi}       & ${\sf UB}^k(X) + m \log k \log \log n$   \\
                               \hline 
\end{tabular}
\caption{Summary of the new bounds that can be derived via the group access bound. The first three columns show the best-known results prior to this paper. Asymptotic notations (Big-Oh and Big-Omega) are hidden for brevity. }
\end{table}

\subsection{Significance of our results}

Beyond the access lemma, group access bounds serve as the first step towards providing a unified framework for proving binary search tree bounds systematically. To resolve the dynamic optimality conjecture, a candidate algorithm must satisfy all dynamic optimality corollaries simultaneously, and we believe group access bounds are the starting point for this.

Apart from the competitiveness result, the results we present here either match the best-known bounds (such as unified bound) or provide an improvement upon even the best-known offline algorithms ($k$-finger bound when reference tree is an almost complete binary tree and unified bound with bounded time window) in the BST model. 

We remark that even though our competitive ratio does not match the $O(\log \log n)$ best-known factor, our algorithm (that we call $\GGR$) is very similar to $\GR$ -- a prime candidate for dynamic optimality. $\GGR$ resembles Greedy and inherits its conceptual simplicity. 
Even after a lot of work in this area \cite{fox11,IaconoL16,chalermsook2015greedy,chalermsook2015self,chalermsook2015pattern,chalermsook2018multi,chalermsook2020new,goyal2011dynamic,goyal2019better,derryberry2009skip,kozma2018smooth,chalermsook2023improved}, only the trivial bound for $\GR$ is known:  $\GR(X) = O(\log n)\ \OPT(X)$ for all possible $X$. Thus, $\GR$ is $O(\log n)$-competitive.
Our result raises some hope of proving $o(\log n)$-competitive ratio for Greedy by showing that Greedy satisfies the group access bound.

We illustrate the power of the group access bound by deriving two new BST bounds that have not been known for even offline BST algorithms. 

%
%
\begin{corollary} For each of the following items, there is a randomized online BST algorithm $\mathcal{A}$ such that the expected cost for any search sequence $X = (x_1,\ldots, x_m)$ where $x_t \in [n]$ for all $t$, 
\begin{itemize}
    \item  $\text{cost}_\mathcal{A}(X) = O(\FF^k(X) + m \log k \log \log n)$, and
    \item  $\text{cost}_\mathcal{A}(X) = O(\UB^k(X) + m \log k \log \log n)$. 
\end{itemize}
\end{corollary}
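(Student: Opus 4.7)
The plan is to derive the corollary as a direct composition of Theorem~\ref{thm:main gab} (items 2 and 4) with the Simulation Theorem (Theorem~\ref{thm:simulation}). Nothing fundamentally new has to be proved here: the whole point of the framework is that algorithmic BST bounds follow automatically from the existence of a suitable group access tree together with locally bounded weight families. Still, I would write out the composition carefully to make the handling of randomness transparent.

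For the first bullet, I would first invoke item 2 of Theorem~\ref{thm:main gab} to obtain a randomized group access tree $\mathcal{T}_2$ such that $\mathbb{E}[\gab(\mathcal{T}_2, X)] = O(\FF^k(X) + m \log k \log \log n)$ for every access sequence $X$. Then I would apply Theorem~\ref{thm:simulation} to $\mathcal{T}_2$ to produce a randomized online BST algorithm $\mathcal{A}_1$ whose expected cost on $X$ satisfies $\mathbb{E}[\text{cost}_{\mathcal{A}_1}(X)] = O(\mathbb{E}[\gab(\mathcal{T}_2, X)])$ in the oblivious adversary model. Chaining the two inequalities yields the stated bound. The second bullet is identical with item 4 of Theorem~\ref{thm:main gab} substituting for item 2, producing an algorithm $\mathcal{A}_2$ with $\mathbb{E}[\text{cost}_{\mathcal{A}_2}(X)] = O(\UB^k(X) + m \log k \log \log n)$.

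The one subtlety I would want to be careful about, and the most natural place for a slip to creep in, is the interaction between the two independent sources of randomness (the draw of $\mathcal{T}$ and the internal randomness of the simulation) together with the fact that the Simulation Theorem guarantees only oblivious-adversary competitiveness. Since Theorem~\ref{thm:simulation} states its hidden constant uniformly over \emph{all} group access trees (with a single $O(\cdot)$ factor that does not depend on $\mathcal{T}$), the outer expectation over the choice of $\mathcal{T}$ can be taken on the outside without any additional loss, and the corollary follows. Because Theorem~\ref{thm:main gab} additionally asserts that the group access trees for items 2 and 4 can be efficiently constructed from a probability distribution, the resulting BST algorithms are genuine online procedures with polynomial preprocessing, matching the claim of the corollary.
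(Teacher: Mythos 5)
Your proposal is correct and is exactly the argument the paper intends: the corollary is obtained by composing items 2 and 4 of Theorem~\ref{thm:main gab} (equivalently, Theorems~\ref{thm:treekfinger} and~\ref{thm:treeubtw}) with the Simulation Theorem~(\Cref{thm:simulation}), whose randomized case and tree-independent constant make the outer expectation over the sampled group access tree go through as you describe. The paper treats this as an immediate consequence and gives no separate proof, so your write-up matches its route (your only unstated caveat is the ``sufficiently long $X$'' hypothesis inherited from \Cref{thm:simulation}, which the corollary statement also leaves implicit).
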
 
The algorithms are randomized because the group access trees are chosen based on a probability distribution. 
The bound $O(F^k(X) + m \log k \log \log n)$ gives an improvement from the best known online BST that costs $O(k F^k(X))$ in~\cite{chalermsook2018multi} and improves upon the offline bound $O(\log k) F^k(X)$ for some range of parameter $k$ when the reference tree is an almost complete binary tree.

From such an improvement, we can derive a new ``pattern-avoiding'' bound for BSTs.  We say that a sequence \textit{contains} another sequence (or pattern) $\pi$ if it contains a subsequence that is order-isomorphic to $\pi$.

\begin{corollary}
Let $X \in [n]^m$ be a sequence that does not contain the pattern $(k,k-1,\ldots, 1)$. Then there exists a randomized BST that accesses $X$ with cost $O(nk + m \log k \log \log n)$.     
\end{corollary}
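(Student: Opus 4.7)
The plan is to combine the previous corollary with a classical pattern-avoidance decomposition of sequences.

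First, I would invoke the Erd\H{o}s--Szekeres / Dilworth argument: if $X \in [n]^m$ does not contain the decreasing pattern $(k, k-1, \ldots, 1)$, then its longest strictly decreasing subsequence has length at most $k-1$. Applying Dilworth's theorem to the poset on time indices $t$ ordered by $t \le t'$ and $x_t \le x_{t'}$ simultaneously, this means $X$ admits a partition into at most $k - 1$ (weakly) increasing subsequences $X^{(1)}, \ldots, X^{(k-1)}$.

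Second, I would bound the $k$-finger cost $F^k(X)$ on the almost complete binary reference tree of $n$ leaves by assigning one finger to each subsequence $X^{(j)}$, placed initially at the leaf of its first element. The finger handling $X^{(j)}$ then visits the leaves corresponding to $X^{(j)}$ in sorted order. Since an in-order (Euler) traversal of the leaves of an almost complete binary tree on $n$ leaves crosses each of the $O(n)$ edges at most twice, and any monotone leaf traversal is a subsequence of such a walk, this finger pays at most $O(n)$ edges over the entire sequence. Summing over the at most $k-1$ fingers gives $F^k(X) = O(nk)$.

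Finally, I would plug this into the previous corollary, which provides a randomized online BST of expected cost $O(F^k(X) + m \log k \log \log n)$ on the same (almost complete binary) reference tree; substituting $F^k(X) = O(nk)$ yields the desired bound $O(nk + m \log k \log \log n)$. The only mildly technical point is verifying that a single finger traversing a monotone sequence of leaves incurs $O(n)$ total edge cost; this is a routine Euler-tour observation, and the remaining ingredients (the decomposition step and the invocation of the previous corollary) are immediate.
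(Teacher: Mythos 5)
Your proof is correct and is essentially the argument the paper relies on (via \cite{chalermsook2018multi}): decompose the $(k,k-1,\ldots,1)$-avoiding sequence by Dilworth into at most $k-1$ increasing subsequences, serve each with one finger whose monotone sweep over the leaves of the almost complete binary reference tree costs $O(n)$ by the Euler-tour argument, and plug $F^k(X)=O(nk)$ into the preceding corollary. The only nit is that the paper's finger cost counts nodes on the path (so every access pays at least $1$), giving $F^k(X)=O(nk+m)$ rather than $O(nk)$, but the extra $m$ term is absorbed by $O(m\log k\log\log n)$ in the final bound.
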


This bound improves the best-known online algorithm~\cite{chalermsook2023improved} that gives a bound of $O(m k^2)$ and the best-known bound on the offline optimum $O(mk)$~\cite{chalermsook2018multi} for some range of parameters when the reference tree used to calculate $F^k(X)$ is an almost complete binary tree. 

The unified bound obtained in this paper matches the best-known bound of $O(\UB(X) +m\log \log n)$ by Derryberry and Sleator \cite{derryberry2009skip}. In fact, we show that we can use the analysis of \cite{derryberry2009skip} as a black box once we obtain the group access tree for Unified bound. 

We do expect more applications of the group access bounds in binary search trees since group access bounds are generic and yet (unlike the dynamic optimality conjecture) maintain a certain flavor of being ``static'' (since $\tset$ is still fixed). We show that this static component can be algorithmically leveraged. 
We believe that our bound offers a ``bridge'' between the relatively static access lemma to the dynamic optimality conjecture, which requires a full understanding of dynamic BSTs.

\paragraph{Contribution to potential function analysis.} Another interesting aspect of our work is that once the group access bound is formulated, the proof relies solely on the use of the standard Sum-of-logs potential function, which in general, does not seem sufficient to prove any strong bounds beyond the access lemma. 
We show that by ``augmenting'' the access lemma with a group access tree, a natural and standard sum-of-logs potential function immediately provides significantly stronger BST bounds. 

In general, designing a potential function for analyzing a given algorithm is a highly innovative but rather ad-hoc task. Our work suggests that the sum-of-logs potential function on the group access tree might be a good candidate for proving that Greedy or Splay is $O(\sqrt{\log n})$-competitive.



\paragraph{Combining BSTs.} In \cite{demaine2013combining}, the authors showed that different BSTs with well-known bounds can be combined into a single BST that achieves all the properties of the combined BSTs. For example, Tango trees and Skip-Splay trees can be combined to get a BST which is $O(\log \log n)$-competitive and achieves the Unified bound with an additive term of $O(\log \log n)$. The combined BST from their approach, however, results in a different BST algorithm. 
Our group access bounds offer another way to combine known BST bounds, as we have illustrated in the above discussion, while retaining the original algorithm, e.g., proving that Splay satisfies the group access bound implies that Splay itself possesses all the nice properties derived from GAB. 
 


\subsection{Concluding Remarks}

We propose the group access bound --  a far-reaching extension of the standard access lemma and present applications in deriving new and unifying old bounds. Some of our bounds even improve the best-known upper bound on the offline optimum.

An immediate (and perhaps most interesting) open question is whether Greedy or Splay satisfies the group access bound via the sum-of-logs potential function. We believe that this question is very concrete (since it involves a specific potential function), so proving or refuting it would not be beyond the reach. 

Developing further understanding and finding more applications of our group access bounds are interesting directions. For instance, what are other BST bounds that can be implied by the group access bounds? Can we show that $\gab(\tset, X) \leq O(\opt(X))$ for some (distribution of) group access tree $\tset$?  Can one derive a non-trivial result about more general pattern-avoiding bounds~\cite{chalermsook2015pattern,goyal2019better}? Can we design a BST algorithm with a running time $O(\FF^k(X) + m\log k)$ for any sequence $X$ of length $m$ where the $k$-finger bound is calculated on any arbitrary reference tree?

There are also open questions to settle the complexity of specific BST bounds both in the online and offline settings. Most notably, is there any BST data structure that satisfies the unified bound?

\subsection{Organization}

We introduce notation and terminology in \Cref{sec:prelims}. We formalize the description of the group access bounds in \Cref{sec:group access tree bound}. We prove each of the items in \Cref{thm:main gab} in \Cref{sec:newsqrt,sec:newkfinger,sec:newunified,sec:newubtw}, respectively. To prove the Simulation theorem (\Cref{thm:simulation}), we first define an algorithm, called $\GGR$, that simulates the group access tree in \Cref{sec:ggreedyalgo}. We derive the group access lemma in \Cref{sec:newgroupaccess} and finally prove the Simulation theorem in \Cref{sec:newggreedycomp}.

\section{Preliminaries}
\label{sec:prelims}

\paragraph{The Geometric View}

Let $X=(x_1,x_2,\dots,x_m)$ be a sequence of $m$ accesses where each access is from the set $\{1,2,...,n\}$. This sequence can be represented as points in the plane, that is, $X_p = \{(x_t, t): t \in [m]\} \subseteq {\mathbb R}^2$. Imagine these points on a plane with an origin and X-Y axis. Since both the coordinates of a point 
are positive, all points lie in the first quadrant. The positive $x$-axis represents the key space, and the $y$-axis represents time. 
For any two points $p,q$ in a point-set $P$, if they are not in the same horizontal or vertical line, we can form a rectangle $\square pq$. 
A rectangle $\square pq$ is said to be {\em arborally satisfied} if $\exists r \in P \setminus \{p,q\}$ such that $r$ lies in $\square pq$.
\cite{demaine_geometry} introduced us to the following beautiful problem:

\begin{definition}(Arborally Satisfied Set)
	
\noindent Given a point set $X_p$, find a point set $Y$ such that  $|X_p \cup Y|$ is  minimum and  every pair of points in $X_p \cup Y$ is arborally satisfied.
\end{definition}
\cite{demaine_geometry} showed that finding the best BST execution for a sequence $X$ is equivalent to finding the minimum cardinality set $Y$ such that $X_p \cup Y$ is arborally satisfied. 

\begin{lemma}
\label{lem:bsttogeometry} (See Lemma 2.3 in \cite{demaine_geometry})
Let $A$ be an online algorithm that outputs an arborally satisfied set on any input representing $X$.  Then, there is an online BST algorithm $A'$ such that the cost of $A'$ is asymptotically equal to the cost of $A$, where the cost of $A$ is the number of points added by $A$ plus the size of $X$.
\end{lemma}

At time $t$, we say that $x_t$ is an {\em access key}. An algorithm adds (or touches) {\em points} while processing a key with the aim of making the final point set arborally satisfied. For a point $p$, denote $p.x$ and $p.y$ as its $x$-coordinate and $y$-coordinate, respectively. Note that $p.x$ denotes a key, and $p.y$ denotes the time when the point $p$ was added.

Let $q$ be a key. When we say that $\GR$ (or any other algorithm) adds a {\em point at key $q$} at time $t$, it means that $\GR$ adds a point at coordinates $(q,t)$.  Given two points $p$ and $q$, $q$ lies to the right of $p$ if $q.x > p.x$, else it lies to the left of $p$. For brevity, we will avoid using ceil and floor notation for various parameters used in this paper.

\section{The Group Access Bound} \label{sec:group access tree bound}



 In this section, we describe the \emph{group access bound}, which generalizes the access lemma. The concept of group access bound consists of two ingredients: 
\begin{itemize}
    \item \textbf{Hierarchical partition}: An \textbf{interval partition} of $[n]$ is a partition $\pi$ such that each set $S \in \pi$ is an interval (i.e. consecutive integers). Let $\pi$ be an interval partition of $[n]$. We say that $\pi$ is a refinement of another partition $\pi'$ if for all $S \in \pi$ and $S' \in \pi'$, we have $S \subseteq S'$ or $S \cap S' = \emptyset$. For instance $\{\{1,2\}, \{3,4\}, \{5,6\}\}$ is a refinement of $\{\{1,2,3,4\}, \{5,6\}\}$.  

    A hierarchical partition is a sequence of partitions $\pset = \{\pi_0, \pi_1, \ldots, \pi_k\}$ such that (i) for all $i$, $\pi_{i+1}$ is a refinement of $\pi_i$, (ii) $\pi_0 $= $\{[n]\}$ and (iii) $\pi_k = \{\{i\}\}_{i \in [n]}$ (singletons). 
    Given such $\pset$, each interval (set) in $\pi_i$ is referred to as a \textbf{group}. 
    The sets in $\pi_i$ are called level-$i$ groups for $\pset$. 

    A hierarchical partition $\pset$ has a natural corresponding tree ${\mathcal T}$ where each node in $V({\mathcal T})$ corresponds to a group. The root of ${\mathcal T}$ is $[n]$ (the group in $\pi_0$). Level-$i$ of the tree contains nodes that have 1-to-1 correspondence with sets in $\pi_i$. Moreover, there is an edge connecting $S \in \pi_i$ to $S' \in \pi_{i+1}$ if $S' \subseteq S$.   
    

    \item \textbf{Weight functions}: 
    Given a canonical hierarchical partition $\pset$, a $\pset$-weight function is an assignment of real values to nodes in $V(\tset)$, i.e., $w: V(\tset) \rightarrow {\mathbb R}_{\geq 0}$. 
    
\end{itemize}

We use the term \textbf{group access tree} to denote a hierarchical partition $\pset$ (as well as its corresponding tree $\tset$). 
See Figure~\ref{fig:supertree} for illustration. 

\begin{figure}[htp!]
  \centering

  \includegraphics[trim={0 140 10 60},clip,scale=.5]{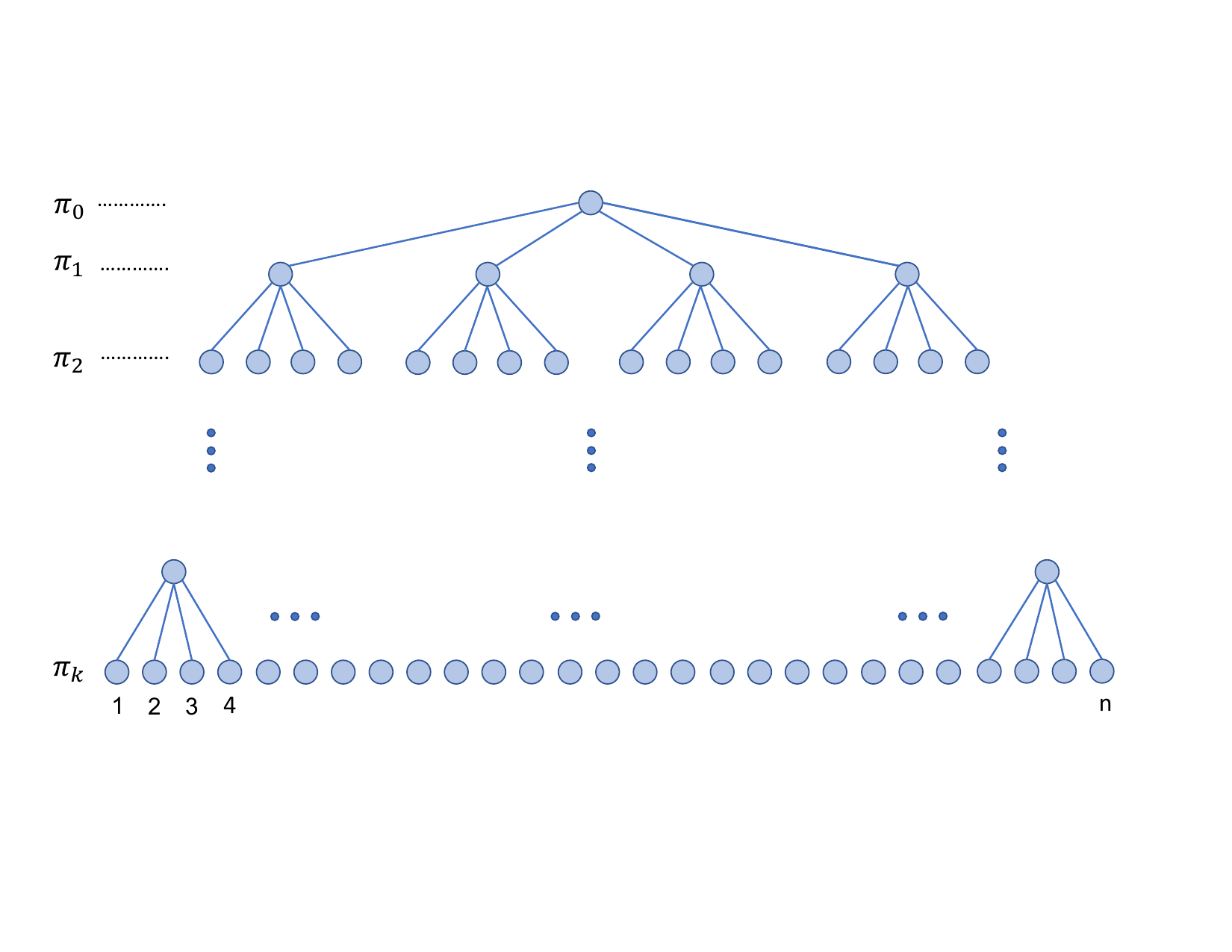}
  \caption{An illustration of group access tree} 
  \label{fig:supertree}
\end{figure}






\begin{definition} (\textbf{Group given a key})
\noindent	Let $\pset$ (or $\tset$) be a group access tree and  let $p \in [n]$ be a key. Then, for each $j$, we use $g_j(p)$ to denote the (unique) level-$j$ group $S \in \pi_j$ in which $p$ lies.  
\end{definition}



\begin{definition}(\textbf{Level $j$ groups of a group})
Given a level-$(j-1)$ group (interval) $g$,  denote by $\dg({g})$ the set of children groups in level $j$ that are contained in $g$, i.e., $\dg(g)= \{g' \in \pi_{j}: g' \subseteq g\}$. These are exactly the same as the groups that are children of node $g \in V(\tset)$. 
\end{definition}

We are now ready to define the access cost in the group access tree.
When accessing key $a \in [n]$ in the group access tree ${\mathcal T}$ and $\pset$-weight $w$, denote by $\tset(a)$ the path from root to $a$ in the tree $\tset$; in particular, this path contains $\tset(a) = (g_0(a), g_1(a), \ldots, g_k(a) = \{a\})$.  The cost incurred on edge $e = (g_{j-1}(a), g_{j}(a))$ is 
\[c_e({\mathcal T}, w, a) = \log \left(\frac{W^j}{w(g_j(a))}\right)\]

where $W^j = \sum_{g' \in \dg(g_{j-1}(a))} w(g')$. 
The total access cost is $c({\mathcal T}, w, a) = \sum_{e \in \tset(a)} c_e({\mathcal T}, w, a)$. 
Notice that this access cost is very similar to the access lemma cost. In fact, one can show that it generalizes the access lemma:  

\begin{observation}
The access lemma corresponds to the cost $c(\tset,w, a)$  when $\tset$ is a star. 
\end{observation}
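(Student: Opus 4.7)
The plan is to unfold the definitions of the group access tree cost when $\tset$ is a star. First I would observe that a star corresponds to the trivial two-level hierarchical partition $\pset = \{\pi_0, \pi_1\}$ with $\pi_0 = \{[n]\}$ and $\pi_1 = \{\{1\}, \{2\}, \ldots, \{n\}\}$, i.e.\ the root plus $n$ singleton leaves. Consequently, for any key $a \in [n]$ the root-to-leaf path $\tset(a) = (g_0(a), g_1(a)) = ([n], \{a\})$ has length one, so the total access cost $c(\tset, w, a)$ collapses to a single edge cost $c_e(\tset, w, a)$.

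Next I would evaluate that single edge cost. By definition
\[ W^1 \;=\; \sum_{g' \in \dg(g_0(a))} w(g') \;=\; \sum_{i \in [n]} w(\{i\}), \]
which, after making the natural identification of the weight $w(\{i\})$ attached to the singleton leaf $\{i\}$ with the classical weight $w(i)$ assigned to key $i$, coincides with the usual total weight $W = \sum_{i \in [n]} w(i)$ of the access lemma. The cost therefore simplifies to
\[ c(\tset, w, a) \;=\; \log\!\left(\frac{W^1}{w(g_1(a))}\right) \;=\; \log\!\left(\frac{W}{w(a)}\right), \]
which is exactly the amortized per-access bound from the Sleator--Tarjan access lemma. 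A routine check against the standard statement of the access lemma would then confirm that the locally bounded weight-update condition on $\wset$ specialises to the usual rule that weights on $[n]$ are updated after each access.

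There is no real obstacle in this proof: the statement is an immediate unpacking of the definitions and is included only to justify calling $\gab$ a generalisation of the access lemma. The only thing worth writing explicitly is the identification between $\pset$-weight functions on the leaves of $\tset$ and classical key weights, together with the remark that on a star the internal summation $W^j$ and the path $\tset(a)$ each contain exactly one nontrivial term.
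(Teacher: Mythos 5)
Your proposal is correct and matches the intended argument: the paper states this observation without proof precisely because it is the direct unpacking of definitions you give — a star is the two-level partition $\{[n]\}$, singletons, so the root-to-leaf path has one edge and the cost collapses to $\log(W/w(a))$, the access lemma's amortized bound. Nothing further is needed.
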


In this way, our group access bound can be seen as an attempt to strengthen the standard access lemma by introducing hierarchical partitioning. As in the access lemma, the interesting application to BSTs happens when the changes of weights are ``controllable''  (e.g. in the working set bound). We introduce the concept of \emph{locally bounded} weight families $\wset$ to capture this property. 

\begin{definition}
The weight family $\wset$ is locally bounded if 
for all time $t$, for every group $g \not \in \tset(x_t)$, we have $w_{t+1}(g) \leq w_t(g)$.  This means that the weight can increase only when $g \in \tset(x_t)$. 
\end{definition}

We are interested in the total access cost on a sequence $X = (x_1,\ldots, x_m) \in [n]^m$ where the group access trees are allowed weight changes over time, that is, we are given a sequence of weight functions $\wset = \{w_1,\ldots, w_m\}$ where $w_t$ denotes the weight function at time $t$.

The \textbf{group access bound} w.r.t. $(\tset,X)$ is: 
\[ \gab(\tset,X) = \min_{\wset \mbox{ locally bounded}} \sum_{t \in [m]} c(\tset, w_t, x_t) \]

Note that the use of "minimum" in the definition of group access bound serves to select the weight family with the lowest weight among multiple locally bounded weight families.

Our main contribution is in showing that the group access bound is competitive to many strong bounds in the binary search tree model.
We say that the group access bound is $(\alpha,\beta)$-\textbf{competitive} to function $f: [n]^m \rightarrow {\mathbb R}$ if there exists a group access tree $\tset$ such that $\gab(\tset,X) \leq \alpha f(X) + \beta |X|$.

\begin{theorem}\label{thm:treesqrt} 
The group access bound is $(O(\sqrt{\log n}), O(1))$-competitive to ${\sf OPT}$.  
\end{theorem}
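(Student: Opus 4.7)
The plan is to select a balanced hierarchical partition $\tset$, equip it with a ``preferred-child'' locally bounded weight family, bound the resulting per-access cost by a count of child switches, and then charge those switches to an interleave lower bound on $\opt$.

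\textbf{Choice of tree.} Take $\tset$ to be a complete $b$-ary hierarchical partition of $[n]$ with branching $b = 2^{\lceil \sqrt{\log n}\rceil}$ and depth $d = \lceil \sqrt{\log n}\rceil$, so that $b^d \geq n$ and $d\log b = \Theta(\log n)$. The choice $d = \log b = \sqrt{\log n}$ comes from the AM-GM tradeoff that falls out of the cost bound below.

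\textbf{Weight family.} For each internal node $u \in V(\tset)$ and time $t$, define the \emph{active child} $a(u,t)$ of $u$ to be the child $g_j(x_{t'})$ corresponding to the most recent access $t' < t$ that descends through $u$ (breaking ties arbitrarily for $t=1$). Set
\[ w_t(c) = \begin{cases} 1 & \text{if } c = a(u,t), \\ 1/b & \text{otherwise}. \end{cases} \]
Between times $t$ and $t+1$ the active child can change only at internal nodes on the path $\tset(x_t)$, and at each such node the only weight that \emph{increases} is that of the new active child $g_j(x_t) \in \tset(x_t)$. Hence the family is locally bounded.

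\textbf{Per-access cost.} At every internal node we have $W(u) = 1 + (b-1)/b \in [1,2]$. An edge to an active child thus costs $\log(W(u)/1) = O(1)$, while an edge to a non-active child costs $\log(W(u)\cdot b) = O(\log b)$. Letting $S_t$ count the levels $j$ with $g_j(x_t) \ne a(g_{j-1}(x_t), t)$,
\[ c(\tset, w_t, x_t) \leq O(d) + O(\log b) \cdot S_t. \]
Summing over $t$, and writing $S_u(X)$ for the total number of child switches at $u$,
\[ \gab(\tset, X) \leq O(md) + O(\log b) \cdot \sum_{u \in V(\tset)} S_u(X). \]

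\textbf{Charging to $\opt$.} Refine each $b$-ary internal node of $\tset$ by a balanced binary tree over its $b$ children to obtain a binary reference tree $P$ of depth $O(d\log b) = O(\log n)$. Every child switch at some $u\in V(\tset)$ forces at least one preferred-child change at the LCA (inside $u$'s binary replacement) of the old and new active children. Hence $\sum_u S_u(X) \leq \mathrm{ILB}_P(X)$, and by the Wilber/Tango interleave lower bound~\cite{tango_trees}, $\mathrm{ILB}_P(X) = O(\opt(X))$. Combining with $m \leq \opt(X)$,
\[ \gab(\tset, X) = O\!\bigl((d + \log b)\cdot \opt(X)\bigr) = O(\sqrt{\log n}) \cdot \opt(X), \]
which yields the claimed $(O(\sqrt{\log n}), O(1))$-competitiveness.

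\textbf{Main obstacle.} The delicate step is the charging $\sum_u S_u(X) \leq \mathrm{ILB}_P(X)$: one must verify that distinct $\tset$-switches at different levels contribute to distinct preferred-child changes in the refined tree $P$, and that the preferred-child labels induced on $P$ by the access sequence are consistent with our active-child labels on $\tset$ under all corner cases (first access into a subtree, consecutive accesses down the same path, etc.). Everything else is a direct computation from the definition of the group access bound plus the AM-GM balance $d = \log b = \sqrt{\log n}$.
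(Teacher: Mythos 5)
Your proposal is correct and follows essentially the same route as the paper: the same $2^{\sqrt{\log n}}$-ary hierarchical partition of depth $O(\sqrt{\log n})$, a last-accessed/preferred-child weight family with the same weight ratio, a per-access cost of $O(\sqrt{\log n})$ plus $O(\log b)$ per child switch, and a charging of the switch count to Wilber's first bound via a binary refinement of each internal node (the paper uses an arbitrary BST gadget per node where you use a balanced one, which is immaterial). The LCA/branching-vertex argument you flag as the delicate step is exactly how the paper discharges it, with distinctness across levels following from the disjointness of the per-node gadgets.
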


\newcommand{\dset}{{\mathcal D}}

The group access bound can also be used by allowing randomization in choosing the hierarchical partition $\pset$. We say that the group access bound is randomized  $(\alpha,\beta)$-\textbf{competitive} to function $f: [n]^m \rightarrow {\mathbb R}$ if there is an efficiently computable distribution $\dset$ that samples $\tset$ such that 
\[ {\mathbb E}_{\tset \sim \dset}[\gab(\tset,X)] \leq \alpha f(X) + \beta |X|\]
 
\begin{theorem}\label{thm:treekfinger}
The group access bound is (randomized) $(O(1), O(\log k \log \log n))$-competitive to the $k$-finger bound when the reference tree is an almost complete binary tree. 
\end{theorem}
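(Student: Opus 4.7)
The plan is to build a distribution over group access trees whose blocks shrink doubly exponentially, and to pair it with a single locally bounded weight family driven by the positions of the $k$ fingers.

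Let $\ell=\lceil\log\log n\rceil$ and set $\lambda_i=2^{2^{\ell-i}}$ for $i=0,\dots,\ell$, so $\lambda_0\ge n$ and $\lambda_\ell=2$. Embed $[n]$ into a power-of-two universe of size $\lambda_0$, sample a single shift $s$ uniformly in $[0,\lambda_0)$, and at level $i$ cut the universe into consecutive intervals of length $\lambda_i$ with left endpoints at $s+j\lambda_i$. Because $\lambda_{i+1}\mid\lambda_i$ the sequence of partitions is automatically nested, and appending a singleton level produces a group access tree $\tset$ whose branching factor between levels $i-1$ and $i$ is $b_{i-1}=\lambda_{i-1}/\lambda_i=\lambda_i$. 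Now fix an optimal offline $k$-finger strategy $\sigma$ achieving cost $F^k(X)$, and let $F_t\subseteq[n]$ denote the finger positions just before time $t$. For each $g\in V(\tset)$ at level $i$, define
\[
w_t(g)=\begin{cases}b_{i-1}&\text{if }F_t\cap g\ne\emptyset,\\ 1&\text{otherwise.}\end{cases}
\]
When $\sigma$ moves one finger to $x_t$, the only groups that can become finger-carrying are those containing $x_t$, i.e.\ the nodes of $\tset(x_t)$; hence $\wset=(w_1,\dots,w_m)$ is locally bounded.

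For the cost, observe that at level $i$ at most $k$ of the $b_{i-1}$ children of $g_{i-1}(x_t)$ are finger-carrying, so the children's weight sum satisfies $W\le(k+1)\,b_{i-1}$. If $g_i(x_t)$ is finger-carrying the edge cost is $\log(W/b_{i-1})=O(\log k)$; otherwise it is $\log W=O(\log k+\log b_{i-1})$. Let $f^*$ be the finger that $\sigma$ moves at time $t$, let $d_t$ be its tree distance to $x_t$ in the almost complete binary tree, and let $D_t=|f^*-x_t|=O(2^{d_t/2})$. Under the random shift, $f^*$ and $x_t$ lie in distinct level-$i$ groups with probability at most $\min(D_t/\lambda_i,1)$. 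The doubly exponential growth of the $\lambda_i$'s then yields $\sum_i\min(D_t/\lambda_i,1)=O(\log\log n)$, because only the $O(\log\log D_t)$ levels with $\lambda_i\le D_t$ saturate at $1$ while the rest form a super-geometric tail, and, since $\log b_{i-1}=2^{\ell-i}$, also $\sum_i\log(b_{i-1})\cdot\min(D_t/\lambda_i,1)=O(\log D_t)=O(d_t)$. Combining the two cases gives $\mathbb{E}[\,\text{cost at time }t\,]=O(\log k\cdot\log\log n)+O(d_t)$, and summing over $t$ with $\sum_t d_t\le F^k(X)$ produces $\mathbb{E}_{\tset\sim\dset}[\gab(\tset,X)]\le O(F^k(X))+O(m\log k\log\log n)$.

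The main obstacle is calibrating the weight of a finger-carrying group so that $\log(W/w)$ collapses to $O(\log k)$ independently of the (very large) branching factor, while still respecting local boundedness; the choice $w=b_{i-1}$ is precisely the magnitude that makes good-level cost $O(\log k)$ and that can only rise when the group becomes an ancestor of $x_t$. The second delicate point is balancing the number of ``bad'' levels against the cost at each bad level with a single random shift: the doubly exponential schedule $\lambda_i=2^{2^{\ell-i}}$ is tuned so that the first sum is $O(\log\log n)$ and the second telescopes to $O(d_t)$; any single-exponential schedule would break one of the two sides, producing either a multiplicative $O(\log n)$ blowup or an additive $\omega(\log\log n)$ overhead per access.
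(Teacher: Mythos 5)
Your proposal is correct and follows essentially the same route as the paper: a height-$O(\log\log n)$ group access tree with doubly-exponentially shrinking, randomly shifted intervals, finger-indicator weights that make the edge cost $O(\log k)$ when the accessed child contains a finger and $O(\log(\text{group size}))$ otherwise, a random-shift separation probability of $\min(D_t/\lambda_i,1)$ per level, and a charge of the expensive levels to $O(d_t)\le O(\FF^k(t))$ and the cheap levels to $O(\log k\log\log n)$. The only differences are cosmetic (a single global shift instead of independent per-group shifts, unnormalized weights $b_{i-1}$ vs.\ $1$ instead of $1/k$ vs.\ $1/n^{1/2^j}$, and a direct summation over levels instead of splitting at a threshold level $j$), and they do not change the argument.
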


\begin{theorem}\label{thm:treeub}
The group access bound is $(O(1), O(\log \log n))$-competitive to the unified bound. 
\end{theorem}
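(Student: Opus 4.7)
The plan is to construct a group access tree $\tset$ whose hierarchical structure mirrors the Skip-Splay tree of Derryberry and Sleator \cite{derryberry2009skip} and to equip it with a locally bounded weight family whose per-level cost matches the per-level splay cost that Skip-Splay incurs. The bound then follows by invoking the Skip-Splay analysis as a black box, as anticipated in the discussion preceding the theorem.

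Set $\ell = \lceil \log_2 \log_2 n \rceil$. At level $j \in \{0,1,\ldots,\ell\}$, partition $[n]$ into consecutive intervals of size $n_j := \lceil n^{1/2^j}\rceil$, with $n_0 = n$ and a final refinement to singletons; this yields $O(\log\log n)$ levels, and each level-$(j-1)$ group has $c_j \approx \sqrt{n_{j-1}}$ children. For the weights, at each time $t$ and every level-$j$ group $g$, let $r_t(g) \in \{1,\ldots,c_j\}$ be the rank of $g$ among its $c_j$ siblings ordered by most-recent access time (with a canonical tie-breaking for siblings that have never been accessed), and set $w_t(g) := 1/r_t(g)^2$. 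This family is locally bounded: if $g \notin \tset(x_t)$, then the last-access time of $g$ itself is unchanged, and $g$'s rank among its siblings can only increase (hence $w$ only decrease) when a sibling on $\tset(x_t)$ overtakes $g$ in recency.

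Because $\sum_i 1/i^2 = O(1)$, the child-weight total at every level satisfies $W^j = O(1)$, so the cost of accessing $x_t$ at level $j$ is $\log(W^j / w_t(g_j(x_t))) = O(\log r_t(g_j(x_t)))$. This is exactly the working-set cost that a splay tree on the $c_j$ children of $g_{j-1}(x_t)$ would amortize when accessing $g_j(x_t)$. Summing over the $O(\log\log n)$ levels reproduces, term by term, the per-access cost of Skip-Splay, which by \cite{derryberry2009skip} sums over $X$ to $O(\UB(X) + m \log\log n)$, giving the desired $(O(1), O(\log\log n))$-competitiveness.

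The main obstacle is the faithful translation of Derryberry and Sleator's potential-function analysis---phrased for a restructuring splay tree---into a statement about our static tree with time-varying weights. Since $r_t(g)$ depends only on the last-access ordering of $g$'s siblings (which share a common parent), and weight changes at $g$ are triggered only by accesses inside $g$'s parent subtree (which puts the parent on $\tset(x_t)$), the translation should go through cleanly; nevertheless, one must verify that the rank-based weights dominate the splay tree's internal potential, and that residual potential from Skip-Splay's analysis can be absorbed into the additive $O(m \log\log n)$ term.
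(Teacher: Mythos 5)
Your strategy is the same as the paper's: build a group access tree whose levels mirror the Skip-Splay region hierarchy, put local working-set-style weights on each sibling family (normalized so that $\sum_i 1/i^2$ makes $W^j=O(1)$), conclude that the per-level cost is the logarithm of the local working-set number, and then invoke the analysis of \cite{derryberry2009skip} as a black box. Your rank-based weights $w_t(g)=1/r_t(g)^2$ are fine and your local-boundedness argument is correct; in fact $\log r_t(g_j(x_t))$ is exactly the quantity $w_i(x_j,j)$ appearing in Lemma~1 of \cite{derryberry2009skip}, so your per-level bound is, if anything, closer to what that lemma states than the paper's own variant, which uses time-based weights $1/(\WS_t(g))^2$ and per-level cost $O(\log \WS_t(g))$, and then relies on Lemma~3 of \cite{derryberry2009skip} to convert time-based to key-based counts. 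Also, your closing concern about ``dominating the splay tree's internal potential'' is moot: in the group access bound there is no splay potential to dominate; the only parts of \cite{derryberry2009skip} that are needed are the combinatorial Lemmas~2 and~3, which convert $\sum_i \log w_i(x_j,j) + O(\log\log n)$ per access into $O(\UB(X)+m\log\log n)$ in total.

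The genuine gap is your choice of partition. You hard-code aligned intervals of size $n^{1/2^j}$, i.e.\ the offset $\delta=0$, whereas the paper defines the groups to be exactly the $\delta$-offset regions $R_i(\cdot)$ of \cite{derryberry2009skip}, proves its per-level lemma (\Cref{lem:unifiedlemma1}) \emph{for every} $\delta$, and closes by appealing to the existence of a suitable $\delta$ ``same as described in \cite{derryberry2009skip}''. That freedom in $\delta$ is not cosmetic: Lemma~2 of \cite{derryberry2009skip} --- the black box you invoke --- is proved using it, because it is what lets one assume that a query $x_j$ and its unified-bound partner $ub(j)$ at key distance $d$ lie in the same region at the levels whose region size is much larger than $d$. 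With a single fixed aligned partition, two keys at distance $1$ (e.g.\ straddling a level-$1$ boundary) lie in different groups at \emph{every} level, so the working-set ranks of $x_j$'s groups at those levels are not controlled by $d$ or by the time since $ub(j)$, and the charging in Lemma~2 does not apply verbatim to your partition. Consequently ``invoke the Skip-Splay analysis as a black box'' is not justified as written: you must either carry the offset through the construction as the paper does (and then address how $\delta$ is fixed, which is exactly what the paper's final remark is for), or prove a replacement for Lemma~2 tailored to the aligned partition --- a missing argument, not a routine verification.
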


\begin{theorem}\label{thm:treeubtw}
The group access bound is (randomized) $(O(1), O(\log k \log \log n))$-competitive to the unified bound with time window of size $k$. 
\end{theorem}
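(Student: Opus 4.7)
The plan is to construct the randomized group access tree $\tset_4$ and a locally bounded weight family that together combine the backbone used for the $k$-finger bound in \Cref{thm:treekfinger} with a recency-aware weight scheme reminiscent of the one used for the unified bound in \Cref{thm:treeub}. Concretely, I would take $\tset_4$ to be the same $O(\log\log n)$-level hierarchical partition of $[n]$ as in \Cref{thm:treekfinger}, where level-$i$ groups have size roughly $2^{2^i}$ and the partition boundaries are shifted by a uniformly random offset so as to smooth out boundary artefacts. This common backbone is exactly what provides both the $O(\log\log n)$ depth (which controls the additive term) and the small boundary-crossing probability needed for the spatial locality part of the analysis.

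For the weights, associate to every group $g$ a timestamp $\tau_g(t)$ equal to the most recent time in the window $[t-k,t-1]$ at which a key of $g$ was accessed, set to $-\infty$ if no such time exists. Define $w_t(g)$ to be a decreasing function of $t-\tau_g(t)$ that vanishes once $\tau_g(t) < t-k$, scaled so that the total weight at each level is $O(1)$. This family is locally bounded in the sense of \Cref{sec:group access tree bound}: the only way $\tau_g(t)$ can strictly grow when advancing from $t$ to $t+1$ is if $x_t \in g$, and the set of groups containing $x_t$ is precisely $\tset_4(x_t)$; for every $g \notin \tset_4(x_t)$, $\tau_g$ is unchanged or moves out of the window, so $w_{t+1}(g) \le w_t(g)$.

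To estimate the cost $c(\tset_4, w_t, x_t)$, fix the minimizer $t^\star$ in the definition of $\UB^k$ at time $t$, let $\Delta = t-t^\star+|x_t-x_{t^\star}|+2$, and let $j^\star$ be the smallest level at which $x_t$ and $x_{t^\star}$ lie in a common group. For each level $j \le j^\star$ the group $g_j(x_t)$ contains $x_{t^\star}$, so $w_t(g_j(x_t))$ is bounded below by the recency contribution of $x_{t^\star}$, which is polynomially large in $1/(t-t^\star)$; standard access-lemma telescoping then yields $\sum_{j\le j^\star} \log(W^j/w_t(g_j(x_t))) = O(\log\Delta)$. For the $O(\log\log n)$ remaining levels $j > j^\star$, the window restriction caps the number of recent accesses within $g_j(x_t)$ by $k$, so each such level contributes $O(\log k)$ to the access cost, giving $O(\log k\log\log n)$ in total. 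Summing over $t$ and taking expectations yields the claimed $(O(1), O(\log k\log\log n))$-competitiveness.

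The main obstacle is controlling the boundary event in which $x_t$ and $x_{t^\star}$ are close in key value but are nevertheless separated at a coarse level of a deterministic partition, which would inflate $j^\star$ and destroy the spatial bound. This is the sole reason the construction must be randomized: averaged over the random shift, the probability that a level-$i$ boundary falls between $x_t$ and $x_{t^\star}$ is $O(|x_t-x_{t^\star}|/2^{2^i})$, and charging the extra cost at such crossings against this probability across the $O(\log\log n)$ levels restores the $O(\log\Delta)$ bound in expectation. A secondary technical point is choosing the weight decay so that the level totals $W^j$ remain $O(1)$ despite potentially many children, which follows from the window length $k$ via a geometric summation; getting this normalization compatible with both local boundedness and the telescoping estimate is where the most careful bookkeeping will be needed.
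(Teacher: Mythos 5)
Your high-level setup (the same randomized $O(\log\log n)$-level tree as in \Cref{thm:treekfinger}, plus weights that favor groups touched within the window $[t-k,t-1]$) is in the right spirit, but the cost accounting as written does not go through, and it is essentially transposed relative to what the weights can deliver. First, for the levels $j$ where $g_j(x_t)$ contains \emph{no} access from the window, your weights ``vanish'', so the term $\log\bigl(W^j/w_t(g_j(x_t))\bigr)$ is unbounded; no floor weight is specified. The justification you give --- that ``the window restriction caps the number of recent accesses within $g_j(x_t)$ by $k$, so each such level contributes $O(\log k)$'' --- bounds the numerator $W^j$, but the cost is governed by the \emph{denominator} $w_t(g_j(x_t))$, which your scheme makes arbitrarily small exactly at these levels. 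The paper avoids this by giving every group a floor weight $1/n^{1/2^j}$ (inverse of the fan-out), so a ``windowless'' level costs $O(2^{-j}\log n)$, and these costs sum geometrically over the deep levels to $O(\log \UB^k(t))$; the random shift (\Cref{lem:newsamegroupprobabilityubwindow}) is then needed to ensure that at the coarse levels, where $2^{-i}\log n$ would be too expensive, the separation event is rare enough that its expected contribution is $O(\log k)$ per level. In other words, it is the \emph{disjoint} (fine) levels that must be charged to the spatial term $O(\log\Delta)$, not the shared ones.

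Second, your claim that the shared levels telescope to $O(\log\Delta)$ is unsupported: with the normalization you state (total weight $O(1)$ per level), there is no telescoping structure, since telescoping would require $W^j$ to be comparable to $w_t(g_{j-1}(x_t))$ along the search path. Each shared level costs up to $O(\log(t-t^\star))$, and without telescoping the sum over $O(\log\log n)$ levels is $O(\log(t-t^\star)\log\log n)$; using $t-t^\star\le k$ this is $O(\log k\log\log n)$, which is fine for the additive term but is the opposite of the $O(\log\Delta)$ you assert. The paper's proof sidesteps both issues by literally reusing the $k$-finger machinery: it declares the last $k$ accessed keys to be $k$ fingers that ``teleport'' each step (the finger at $x_{t-k}$ disappears and reappears at $x_t$), keeps the $k$-finger weights $\max\{1/k,\,1/n^{1/2^j}\}$ versus $1/n^{1/2^j}$, and reuses \Cref{lem:newcostkfinger} (restated as \Cref{lem:newcostubtw}) together with the probability lemma, so that shared/coarse levels each cost $O(\log k)$ in expectation (additive term) and the remaining levels sum to $O(\log \UB^k(t))$. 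To repair your argument you would need to (i) add a per-group floor weight of order the inverse fan-out, (ii) redo the level-by-level accounting so the disjoint levels are charged geometrically to $O(\log\Delta)$ (in expectation over the random shift) and the shared levels to $O(\log k)$ each, at which point you would have reconstructed essentially the paper's proof.
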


We say that a BST algorithm $\aset$ \textbf{satisfies the group access bound} w.r.t. group access tree $\tset$ if the cost of the algorithm is at most $O(\gab(\tset, X))$ for all input sequence $X$. 

\begin{proposition}
If a BST algorithm satisfies the group access bound and the group access bound is  $(\alpha,\beta)$-competitive to function $f$, then the cost of the BST algorithm on any sequence $X$ is at most $O(\alpha \cdot f(X) + \beta|X|)$.     
\end{proposition}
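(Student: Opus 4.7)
The proof is a direct chaining of the two defining inequalities, so the plan is essentially one of careful unpacking. First I would unfold the hypothesis that the group access bound is $(\alpha,\beta)$-competitive to $f$: by definition, this produces a specific group access tree $\tset^{\star}$ (in the randomized version, a distribution $\dset$ over trees) such that for every access sequence $X \in [n]^m$,
\[
\gab(\tset^{\star}, X) \;\leq\; \alpha\, f(X) + \beta\, |X|.
\]

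Next I would invoke the hypothesis that the BST algorithm $\mathcal{A}$ satisfies the group access bound. Here the only subtle point worth flagging is that ``satisfies the group access bound'' must be read with respect to the very same tree $\tset^{\star}$ that witnesses competitiveness to $f$; this is legitimate because the Simulation Theorem (Theorem~\ref{thm:simulation}) guarantees, for \emph{every} group access tree $\tset$, the existence of an online BST algorithm whose cost is $O(\gab(\tset, X))$, so we may specialize to $\tset = \tset^{\star}$. Then by definition
\[
\text{cost}_{\mathcal{A}}(X) \;=\; O\!\left(\gab(\tset^{\star}, X)\right).
\]

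Chaining the two inequalities immediately gives the desired bound $\text{cost}_{\mathcal{A}}(X) = O(\alpha\, f(X) + \beta\, |X|)$. For the randomized version, I would instead take expectations and apply linearity: choosing $\mathcal{A}$ to be the BST algorithm obtained from $\tset \sim \dset$ via Theorem~\ref{thm:simulation}, one has
\[
\mathbb{E}_{\tset \sim \dset}\!\left[\text{cost}_{\mathcal{A}}(X)\right] \;=\; O\!\left(\mathbb{E}_{\tset \sim \dset}[\gab(\tset, X)]\right) \;\leq\; O(\alpha\, f(X) + \beta\, |X|),
\]
which is precisely the statement in the oblivious adversary model.

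There is essentially no technical obstacle: the entire content of the proposition is the composition of the two definitions plus an appeal to the Simulation Theorem. The only care needed is bookkeeping — ensuring that the tree (or distribution) witnessing competitiveness to $f$ is the same one used to instantiate the BST algorithm, and that in the randomized setting the expectations are taken over a single common source of randomness so that linearity applies.
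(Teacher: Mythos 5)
Your proposal is correct and is essentially the argument the paper intends: the proposition is an immediate chaining of the two definitions (cost of the algorithm $= O(\gab(\tset,X))$ and $\gab(\tset,X) \leq \alpha f(X) + \beta|X|$ for the same tree $\tset$), which is why the paper states it without proof. Your appeal to the Simulation Theorem is harmless but unnecessary here, since the proposition's hypothesis already supplies a BST algorithm satisfying the group access bound with respect to the relevant tree; the Simulation Theorem is only needed later to show such algorithms exist.
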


We will later show that a family of BST algorithms (named $\GGR$) satisfies the group access bound and possesses all the aforementioned properties.  
\section{$(O(\sqrt{\log n}),O(1))$-competitiveness}\label{sec:newsqrt}

In this section, we prove Theorem~\ref{thm:treesqrt}. We will define the appropriate group access tree so that the group access bound is upper bounded by $(O(\sqrt{\log n}), O(1))\cdot \opt(X)$. 

\subsubsection*{Group access tree} 
Define the partition $\pset$ inductively as follows. Let $M = 2^{\sqrt{\log n}}$. First $\pi_0 = \{[n]\}$. Given $\pi_i$, we define $\pi_{i+1}$ by, for each interval $S \in \pi_i$, partitioning $S$ into $S_1, S_2, \ldots, S_M$ equal-sized intervals and adding them into $\pi_{i+1}$. This would give us a group access tree where each non-leaf node has   $M$ children and its height is at most $h=O(\sqrt{\log n})$.

\subsubsection*{Weight function} 
Given a sequence $X= (x_1,x_2,\ldots, x_m)$, we define a weight function $\wset$ which is locally bounded and such that $c(\tset, \wset, X) = \sum_t c(\tset, \wset, x_t) \leq O(\sqrt{\log n})\cdot (\opt(X)+m)$. This will give us the desired result. 
Our weight function uses the notion of \emph{last access}. 

\begin{definition}
Consider time $t$ and the group $g = g_{j-1}(x_t)$. Let $t' <t$ be the last time before $t$ at which $g$ is on the search path $\tset(x_t)$. We say that a child $g_1 \in \dg(g)$ is \textbf{last accessed (child) group} of $g$ at time $t$ if the edge $(g,g_1)$ is on search path $\tset(x_{t'})$.     
\end{definition}

Remark that each group can have at most one child in $\tset$ that is the last access group. Now, we are ready to define the weight function: 

$w_t(g) = 
\begin{cases}\label{eq:weightsqrt}
	M  \hspace{1.3cm} \text{if $g$ is the last accessed group of its parent}\\
        1\hspace{1.5cm} \text{otherwise}
\end{cases}$

It is easy to verify that this family of weight functions $\wset$ is locally bounded (there is only one key whose weight can increase between time $t$ and $(t+1)$). 

\begin{lemma}
Consider edge $e = (g_{j-1}(x_t), g_j(x_t))$. We have 
   $$c_e(\tset, \wset,x_t)=
\begin{cases}
	O(\log M)  \hspace{1.44cm} \text{if $g_j(x_t)$ is not the last accessed group of $g_{j-1}(x_t)$}\\
        O(1)\hspace{2.2cm} \text{otherwise}
\end{cases}$$
\end{lemma}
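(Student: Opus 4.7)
The plan is to compute $c_e(\tset,\wset,x_t)$ directly by unpacking its definition $c_e = \log(W^j / w_t(g_j(x_t)))$, so everything reduces to (i) bounding the numerator $W^j = \sum_{g' \in \dg(g_{j-1}(x_t))} w_t(g')$ and (ii) reading off the denominator from the two cases in the definition of $\wset$.

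For the numerator, I would first recall that by the construction of $\tset$ in this section, every internal node has exactly $M$ children, so $|\dg(g_{j-1}(x_t))| = M$. Then I would observe that the predicate ``$g'$ is the last accessed group of $g_{j-1}(x_t)$ at time $t$'' can hold for at most one $g' \in \dg(g_{j-1}(x_t))$: it refers to the unique edge out of $g_{j-1}(x_t)$ used on the search path at the latest time $t' < t$ when $g_{j-1}(x_t)$ itself was on the path (and for no earlier visit the predicate holds, by the ``last'' qualifier). Hence at most one sibling has weight $M$ under $\wset$ and the remaining at least $M-1$ siblings have weight $1$, giving $W^j \le M + (M-1) = 2M-1 = O(M)$.

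For the denominator, I would simply case-split according to the definition of $\wset$. If $g_j(x_t)$ is the last accessed group of $g_{j-1}(x_t)$, then $w_t(g_j(x_t)) = M$, so $c_e \le \log((2M-1)/M) = O(1)$. Otherwise $w_t(g_j(x_t)) = 1$, and $c_e \le \log(2M-1) = O(\log M)$. The only edge case worth mentioning explicitly is when $g_{j-1}(x_t)$ has not previously been on any search path: then no sibling qualifies as the last accessed group, all siblings have weight $1$, and we are automatically in the second case with $W^j = M$, giving $c_e = \log M$, consistent with the stated bound.

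I do not anticipate any real obstacle here: once the cardinality of $\dg(g_{j-1}(x_t))$ and the uniqueness of the last accessed child are in hand, the lemma is a one-line arithmetic consequence of the definitions. The only conceptual point to be careful about is to phrase the uniqueness precisely relative to the time $t$ (i.e., relative to the weight function $w_t$ used in $c_e(\tset,\wset,x_t)$) rather than confusing it with the last accessed group at a different time.
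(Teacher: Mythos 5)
Your proposal is correct and matches the paper's argument essentially step for step: bound $W^j\le 2M$ using the uniqueness of the last accessed child (weight $M$) plus at most $M$ siblings of weight $1$, then read off the two cases from the denominator $w_t(g_j(x_t))\in\{M,1\}$. The extra remarks on the edge case with no last accessed child and on fixing the time $t$ are harmless refinements of the same proof.
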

\begin{proof} 
    By definition, $W^j =\sum_{g' \in \dg(g_{j-1}(x_t))} w_t(g') \leq 2M$ because there is only one group in $\dg(g_{j-1}(x_t))$ with weight $M$ (the last accessed group) and there can be at most $M$ groups with weight 1.

    We analyze the cost in two cases:
    If $g=g_j(x_t)$ is the last accessed group, then we have $c_e(\tset, \wset, x_t) \leq \log (W^j/w_t(g)) \leq \log (2M/M)=O(1)$. 
    Otherwise, if $g_j(x_t)$ is not the last accessed group, then we have $\log (2M) \leq O(\log M)$.  
\end{proof}

\subsubsection*{Cost analysis}
Consider the search path $\tset(x_t)$. Denote by $\gamma_t$ the number of groups on $\tset(x_t)$ that are not the last accessed group of its parents. 
The cost $c(\tset, w_t, x_t) = \sum_{e \in \tset(x_t)} c_e(\tset, w_t,x_t)$, and from the above lemma, the cost is at most $O(\log M) \cdot \gamma_t + O(h) \leq O(\sqrt{\log n}) (\gamma_t+1)$. 
Therefore, the total cost is $O(\sqrt{\log n}) \cdot (\sum_t \gamma_t + m)$. 
The sum of $\gamma_t$ will be upper bounded by the Wilber bound. 

\paragraph{The Wilber bound:} Wilber \cite{wilber1989lower} gave two lower bounds on the running time of any BST on a sequence $X$. These bounds are known as $\WBO$ and $\WBT$.\footnote{They can also be derived using an elegant geometric language  of~\cite{demaine_geometry}.} We now describe $\WBO(X)$. 
Let $R$ be a leaf-oriented (keys at the leafs) binary search tree, and for each $a \in [n]$, denote by $R(a)$ the search path in $R$ of key $a$.  
When searching a sequence $X$ in $R$, for each node $v \in V(R)$, the \textbf{preferred child} of node $v$ at time $t$ (denoted by $\PR_t(v)$) is the child of $v$ on the last search path in $R$ at time $t$.  
If node $v$ is not on the search path $R(x_t)$, we know that the preferred child cannot change, i.e., $\PR_t(v) = \PR_{t-1}(v)$. 

The Wilber bound with respect to $R$ at time $t$ and node $v$ is: 
$$\WBO^t_R(v) =
\begin{cases}
	1 \hspace{1cm} \text{if} \ \PR_t(v) \neq \PR_{t-1}(v)\\
	0 \hspace{1cm} \text{otherwise}
\end{cases}
$$
The total Wilber bound of a sequence $X$ is $\WBO_R(X) = \sum_{t} \sum_{v} \WBO_R^t(v)$. The Wilber bound is defined as the maximum, over all reference BST $R$, of $\WBO_R(X)$. 

\begin{lemma}
We have that $\sum_{t} \gamma_t \leq O(\WBO(X))$    
\end{lemma}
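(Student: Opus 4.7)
The plan is to construct a specific reference BST $R$ from the group access tree $\tset$ and then show that, on this $R$, every group $g$ counted by $\gamma_t$ forces a preferred-child change somewhere on the search path $R(x_t)$. From there the desired inequality follows by summing over $t$ and invoking $\WBO(X) \ge \WBO_R(X)$ together with the standard fact $\WBO(X) = O(\opt(X))$.

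The reference tree $R$ is built as follows. Recall that every internal node $g$ of $\tset$ has exactly $M$ children $g_1,\ldots,g_M$ corresponding to consecutive key intervals. Replace each such node $g$ by a balanced (leaf-oriented) binary tree $B_g$ on $M$ leaves, identifying its $i$-th leaf with (the root of) the subtree $g_i$. Applying this substitution at every internal node of $\tset$ yields a single leaf-oriented binary search tree $R$ whose leaves are exactly the keys $[n]$, in sorted order. Searching for $x_t$ in $R$ traverses, at each level-$(j-1)$ group $g=g_{j-1}(x_t)$ of $\tset$ on the search path, a root-to-leaf path inside $B_g$ from the root of $B_g$ down to the leaf representing $g_j(x_t)$.

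The key observation is the correspondence between ``last accessed child'' in $\tset$ and ``preferred child'' in $R$. Fix an internal node $g$ of $\tset$ and let $t' < t$ be the most recent earlier time that $g$ lay on the search path, so that the last accessed child of $g$ at time $t$ is the child $g_i$ with $g_i = g_{j}(x_{t'})$. Between times $t'$ and $t$, no search entered $g$, so no preferred child inside $B_g$ can have changed in the interim. Therefore the preferred-child configuration of $B_g$ at time $t$ is exactly the one induced by the path from the root of $B_g$ to $g_i$. Now suppose $g_j(x_t)$ contributes to $\gamma_t$, i.e.\ $g_j(x_t) \ne g_i$. Consider the lowest common ancestor $v$ of the leaves $g_i$ and $g_j(x_t)$ inside $B_g$: at time $t-1$ its preferred child points toward $g_i$, and at time $t$ the new search path forces it to point toward $g_j(x_t)$. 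Hence $\WBO_R^t(v)=1$ for this $v$, and distinct groups $g$ on the search path yield preferred-child changes at nodes inside disjoint subtrees $B_g$, so the preferred-child changes are counted with no overlap.

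Summing this injection gives $\gamma_t \le \sum_v \WBO_R^t(v)$ for every $t$, hence $\sum_t \gamma_t \le \WBO_R(X) \le \WBO(X) = O(\opt(X))$, which is the claim. The only mildly delicate point is the disjointness/no-double-counting of the identified $B_g$-internal nodes across different levels of the search path; this is immediate because the subtrees $B_g$ for distinct internal nodes $g$ of $\tset$ are vertex-disjoint by construction.
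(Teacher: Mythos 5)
Your proof is correct and follows essentially the same route as the paper: expand the group access tree into a leaf-oriented reference BST by replacing each internal node $g$ with a local binary tree on its children, and charge each non-last-accessed group on the search path to a preferred-child change at the branching node (the LCA of the old and new child leaves) inside that local tree, which is exactly the paper's argument. The only cosmetic difference is that you take the local trees $B_g$ to be balanced, whereas the paper allows them to be arbitrary BSTs $T_g$; this does not change the argument.
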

\begin{proof}
It is sufficient to define a reference tree $R$ that allows us to charge the cost of $\sum_t \gamma_t$ to $\WBO_R(X)$. Notice that our group access tree $\tset$ is not a binary search tree. However, it can be naturally extended into a binary search tree $R$ as follows: We process the non-leaf nodes in $V(\tset)$ in non-decreasing order of distance from the root. When a group $g \in V(\tset)$ is processed, we remove the edges from $g$ to its children. Let $T_g$ be an arbitrary BST rooted at $g$ and leafs $\dg(g)$; we add the tree $T_g$ in place of the deleted edges. After all vertices are processed, it is straightforward to see that  the resulting tree is a (leaf-oriented) BST. 

Now we claim that $\sum_t \gamma_t$ can be upper bounded by $\WBO_R(X)$. Recall that $\gamma_t$ is the number of groups on the search path $\tset(x_t)$ that are not last accessed. Let $G_t \subseteq V(\tset(x_t))$ be those groups on the search path that are not last accessed. For each such group $g_1 \in G_t$, let $g$ be its parent, so we know that the last time $g$ was on the search path, some other group $g_2 \in \dg(g)$ was instead chosen. Notice that the search paths $R(g_1)$ and $R(g_2)$ also visit $g$ but branch away at some vertex $b$ inside $T_g$ (it could be that $b= g$). This would imply that $\PR_t(b) \neq \PR_{t-1}(b)$ and therefore $\WBO^t_R(b) =1$. This implies that $\gamma_t \leq \sum_{v} \WBO^t_R(v)$ and hence the lemma. 
\end{proof}

The lemma implies that the total group access bound is at most $$O(\sqrt{\log n}) (\WBO(X)+m)\leq O(\sqrt{\log n})\cdot  (\opt(X)+m)$$

\begin{figure}[htp!]
  \centering

  \includegraphics[trim={0 200 10 100},clip,scale=.5]{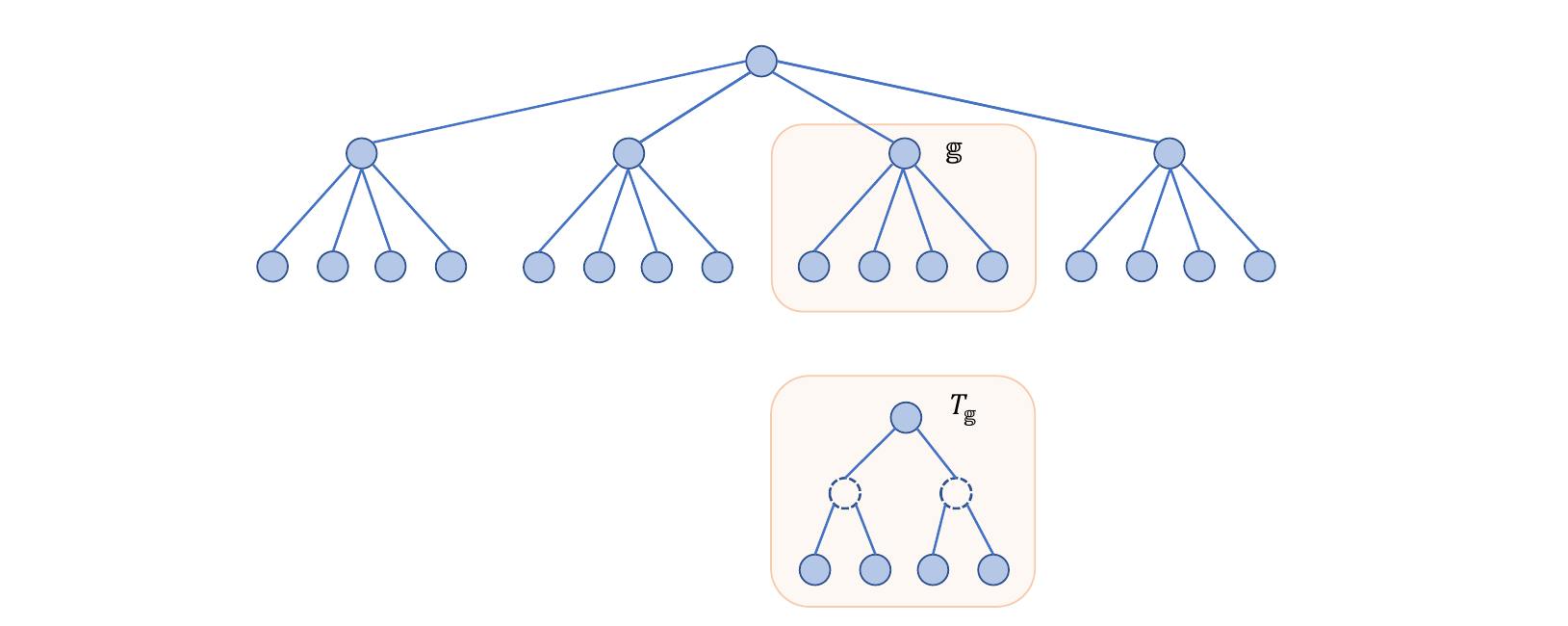}
  \caption{An example of $g$ and $T_g$} 
  \label{fig:supertree}
\end{figure}

\section{The $k$-finger bound}\label{sec:newkfinger}
In this section, we prove Theorem~\ref{thm:treekfinger}. We first define the appropriate group access tree such that the group access bound is $((O(1), O(\log k \log \log n))$-competitive with $k$-finger.

Define the partition $\pset$ inductively as follows. Let $\EL_i = \frac{\log n}{2^i}$. First $\pi_0 = \{[n]\}$. Given $\pi_i$, we define $\pi_{i+1}$ by, for each interval $S \in \pi_i$, partitioning $S$ into $S_1, S_2, \ldots, S_{\EL_i}$, almost equal-sized intervals and add them into $\pi_{i+1}$. This builds a group access tree of height at most $h=O(\log \log n)$. However, unlike the previous section, we will induce some randomization while constructing $\pi_{i+1}$ from $\pi_i$.

Let $g$ be a group in $\pi_{j-1}$. Let the interval of $g$ contain keys $(a,a+1,\dots,b)$. Pick an integer  $s\in [2^{\EL_j}]$ uniformly at random. Partition $(a,a+1,\dots ,b)$ into groups (intervals) $g_1 = (a,\dots,a+s-1), g_2 = (a+s,\dots,a+s+2^{\EL_j})$, $ g_3 = (a+s+2^{\EL_j}+1, \dots, a+s+2^{\EL_j+1})$ and so on. Thus, each group contains $2^{\EL_j}$ keys except possibly the first and the last one. $\pi_{j}$ contain groups $g_1, g_2, g_3, \dots $ and so on.

 Some remarks are in order. There are $O\big(2^{\EL_1}\big) = O(\sqrt n)$ groups at level 1, and each level 1 group is of size $\frac{n}{2^{\EL_1}} \le \sqrt n$. If $g$ is a level $j-1$ group, then $|\dg(g)|=O(2^{\EL_j}) = O(n^{1/2^j})$  . Using induction assume that there are $\le n^{1/2^{j-1}}$ keys in $g$. Then the number of keys in $g_1\in \dg(g)$  is $\le \frac{n^{1/2^{j-1}}}{2^{\EL_j}} = n^{1/2^j}$.  Thus, we observe the following:

 \begin{observation}
\label{obs:treenoofleveljgroup}
	Let $g$ be a level $j-1$ group. Then $|\dg(g)|=O(n^{1/2^j})$. Each group in $\dg(g)$ represents a set of consecutive keys of size $\le  n^{1/2^j}$.
\end{observation}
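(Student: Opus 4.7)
The plan is a straightforward induction on the level index $j$, proving simultaneously both assertions of the observation: that any level-$(j-1)$ group has size at most $n^{1/2^{j-1}}$, and that when such a group is partitioned, the resulting children are $O(n^{1/2^j})$ in number and each of size at most $n^{1/2^j}$. The key algebraic identity driving the induction is
\[
 \frac{1}{2^{j-1}} - \frac{1}{2^j} = \frac{1}{2^j},
\]
which says that splitting a group of size $n^{1/2^{j-1}}$ into pieces of size $n^{1/2^j}$ produces exactly $n^{1/2^j}$ pieces.

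For the base case $j=1$, the root group in $\pi_0$ equals $[n]$, which has size $n = n^{1/2^0}$. By construction, it is partitioned into subgroups, each of size $2^{\ell_1} = 2^{(\log n)/2} = \sqrt{n} = n^{1/2^1}$ (with at most two boundary pieces of smaller size due to the random shift $s$). The number of pieces is therefore $\Theta(n/\sqrt{n}) = \Theta(\sqrt{n}) = O(n^{1/2^1})$, confirming both claims at level~$1$.

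For the inductive step, assume any level-$(j-1)$ group $g$ contains at most $n^{1/2^{j-1}}$ keys. By the construction of $\pi_j$ from $\pi_{j-1}$, the interval of $g$ is cut into consecutive subintervals of length $2^{\ell_j} = n^{1/2^j}$ (again with possibly two shorter boundary pieces arising from the random offset $s \in [2^{\ell_j}]$). This immediately gives the size bound: every group in $\dg(g)$ represents at most $n^{1/2^j}$ consecutive keys. Counting the pieces, we obtain
\[
 |\dg(g)| \;\le\; \left\lceil \frac{n^{1/2^{j-1}}}{n^{1/2^j}} \right\rceil + 1 \;=\; O\!\left(n^{\,1/2^{j-1} - 1/2^j}\right) \;=\; O\!\left(n^{1/2^j}\right),
\]
using the identity above. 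This closes the induction and establishes the observation. I do not expect any real obstacle here; the only minor care needed is handling the boundary groups introduced by the random shift $s$, which contribute only additive $O(1)$ terms absorbed by the big-$O$.
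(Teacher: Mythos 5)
Your proposal is correct and follows essentially the same argument as the paper: an induction maintaining that a level-$(j-1)$ group has at most $n^{1/2^{j-1}}$ keys, combined with the fact that the randomized construction cuts it into pieces of size $2^{\ell_j}=n^{1/2^j}$ (plus at most two shorter boundary pieces), giving $O(n^{1/2^{j-1}-1/2^j})=O(n^{1/2^j})$ children. The only difference is presentational — you spell out the base case, the ceiling, and the boundary pieces explicitly, which the paper handles in a brief remark preceding the observation.
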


Let $\CC$ be an almost complete binary tree in which all the leaf nodes contain keys. $\CC$ is denoted as the reference tree. Let us assume that there are $k$ fingers stationed at some leaves of $\CC$. When $x_t$ is searched at time $t$, the finger that moves to $x_t$ in $\CC$ will be denoted by $f^k(t)$. The finger $f^k(t)$ is moved to $x_t$ via the shortest path in $\CC$. Define $\FF^k(t)$ as the length of this shortest path plus 1. The $k$-finger bound, $\FF^k(X)$, is defined as $\FF^k(X) = \sum_{t=1}^m \FF^k(t)$.

We are now ready to define the weight of a group. Let $g_1$ be a level $j$ group, then weight of $g_1$ at time $t$ is:

$w_t(g_1) = 
\begin{cases}\label{eq:weightsqrt}
	\max\{\frac{1}{k}, \frac{1}{n^{1/2^j}}\} \hspace{1.2cm} \text{if group $g_1$ contains a finger }\\
        \frac{1}{n^{1/2^j}}\hspace{2.6cm} \text{otherwise}
\end{cases}$

Note that the weight function is locally bounded from time $t-1$ to $t$. Consider the edge $e = (g_{j-1}(x_t), g_j(x_t))$, we will abuse the notation and denote $e_j=(g_{j-1}(x_t),g_j(x_t))$. We now bound the cost  $c_{e_j}(\tset, \wset,x_t)$ when $k < n^{\frac{1}{2^j}}$. The same analysis works when $k\ge n^{\frac{1}{2^j}} $. Note that when  $k < n^{\frac{1}{2^j}}$, $w_t(g_1)=\frac{1}{k}$ if $g_1$ contains a finger.

Let $W^j=\sum_{g'\in \dg(g_{j-1}(x_t))}w_t(g')$, then $W^j\le c$. Let $x_t$ be a key accessed at time $t$ and let $g=g_{j-1}(x_t)$, $g_1=g_{j}(x_t)$. Then we prove the following lemma:

\begin{lemma}
\label{lem:newcostkfinger}
Assume that $k < n^{1/2^j}$. The  cost $c_{e_j}(\tset, \wset,x_t)$ at time $t$ is:

 $c_{e_j}(\tset, \wset,x_t) =
\begin{cases}
	O(\frac{1}{2^j} \log n) \hspace{0.8cm}\text{if no finger lies in  $g_1$ at time $t-1$} \\
	O(\log k) \hspace{1.2cm} \text{otherwise}
\end{cases}$
\end{lemma}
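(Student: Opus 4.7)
The plan is to compute $c_{e_j}(\tset, \wset, x_t) = \log(W^j/w_t(g_j(x_t)))$ directly, by first establishing a constant upper bound on $W^j$ and then splitting into the two cases according to whether $g_1 = g_j(x_t)$ contains a finger.

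First I would bound $W^j = \sum_{g' \in \dg(g_{j-1}(x_t))} w_t(g')$. By Observation~\ref{obs:treenoofleveljgroup}, $|\dg(g_{j-1}(x_t))| = O(n^{1/2^j})$, and every such sibling group has base weight $1/n^{1/2^j}$; so the contribution from the base weights alone is $O(n^{1/2^j} \cdot 1/n^{1/2^j}) = O(1)$. For the extra contribution from groups containing a finger, under the hypothesis $k < n^{1/2^j}$ we have $1/k > 1/n^{1/2^j}$, so each finger-containing group receives weight $1/k$. Since there are only $k$ fingers globally and at most one finger can place a given sibling group into the ``finger-containing'' category, the extra mass contributed across all of $\dg(g_{j-1}(x_t))$ is at most $k \cdot 1/k = 1$. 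Combining, $W^j = O(1)$.

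Next I would split on whether a finger sits in $g_1$ at time $t-1$ (equivalently, when the weight $w_t$ is determined). If \emph{no} finger lies in $g_1$ at time $t-1$, then $w_t(g_1) = 1/n^{1/2^j}$, so
\[
c_{e_j}(\tset, \wset, x_t) = \log\!\left(\frac{W^j}{w_t(g_1)}\right) = \log\!\bigl(O(1) \cdot n^{1/2^j}\bigr) = O\!\left(\tfrac{1}{2^j}\log n\right).
\]
Otherwise, a finger lies in $g_1$ at time $t-1$, so $w_t(g_1) = 1/k$, and
\[
c_{e_j}(\tset, \wset, x_t) = \log\!\left(\frac{W^j}{w_t(g_1)}\right) = \log\bigl(O(1) \cdot k\bigr) = O(\log k),
\]
which are exactly the two claimed bounds.

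The only slightly delicate step is the bound $W^j = O(1)$, and in particular making sure we do not over-count the finger contribution: even though multiple fingers may reside in the same level-$j$ group, this can only decrease the total mass of finger-containing groups, so the pessimistic bound of assigning $1/k$ to up to $k$ distinct groups is indeed an upper bound. The symmetric case $k \ge n^{1/2^j}$ is claimed in the lemma statement to follow by the same calculation (in that regime both weight options coincide at $1/n^{1/2^j}$, so both sub-cases collapse to the first bound, which dominates $O(\log k)$ anyway), and I would simply note that no modification to the argument above is needed.
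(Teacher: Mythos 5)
Your proof is correct and follows essentially the same route as the paper: bound $W^j$ by a constant and then split on whether $g_1$ holds a finger, reading off $\log(W^j/w_t(g_1))$ in each case. The only difference is that you explicitly justify $W^j=O(1)$ (base weights sum to $O(1)$ and the at most $k$ finger-containing groups add at most $k\cdot\frac{1}{k}=1$), whereas the paper simply asserts $W^j\le c$; this is a welcome addition rather than a departure.
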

\begin{proof}
    Let $W^j=\sum_{g' \in \dg(g_{j-1}(x_t))} w_t(g')$, then $W^j\le c$.

    If $g_1$ has a finger at time $t-1$, then its weight is $\frac{1}{k}$. Therefore,
    $$c_{e_j}(\tset, \wset,x_t)=\log\left(\frac{c}{1/k}\right)=O(\log k)$$
    Otherwise, if $g_1$ has no finger at time $t-1$, then its weight is $\frac{1}{n^{1/2^j}}$. Therefore,
    $$c_{e_j}(\tset, \wset,x_t))=\log\left(\frac{c}{1/n^{1/2^j}}\right)=O(\frac{1}{2^j}\log n)$$
\end{proof}

\subsection{Finishing the proof of \Cref{thm:treekfinger}}
In the ensuing discussion, we will show that the cost of the group access tree in \Cref{lem:newcostkfinger} can be {\em paid} by the $k$-finger bound with some additive term. To this end, we first prove the following useful lemma.  This lemma states that if  $\FF^k(t)$ is sufficiently small,  then $f^k(t)$ and $x_t$ will lie in the different level $j$ group with a small probability.

\begin{lemma}
\label{lem:newsamegroupprobability}
	Let $j$ be the largest integer such that  $\FF^k(t) \le \log\Big(\frac{n^{1/2^j}}{\log n^{1/2^j}}\Big)$. Then, for any $i \in [0,j]$, $$Pr\Big[ \text{$x_t$ and  $f^k(t)$ lie in different level $i$ group}  \Big] \le \frac{cd}{\log n^{1/2^i}}$$  for some constant $c \ge 1$, $d \ge 1$.
\end{lemma}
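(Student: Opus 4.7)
The plan is to reduce the claim to a union-bound over levels combined with the independence of the random shifts used to define the hierarchical partition, and then to perform one short algebraic check using the fact that the group access tree has depth $O(\log\log n)$.

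First I would translate the finger bound hypothesis into a bound on the key distance $|x_t - f^k(t)|$. Since the reference tree is an almost complete binary tree, two leaves whose shortest path has length $\FF^k(t) - 1$ share an ancestor whose subtree contains at most $2^{\FF^k(t)}$ leaves, so $|x_t - f^k(t)| \le 2^{\FF^k(t)} \le n^{1/2^j}/\log n^{1/2^j}$ by assumption on $j$.

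Next, I would condition on the random shifts used at each level. For a fixed level $i' \le i \le j$, conditional on $x_t$ and $f^k(t)$ lying in the same level-$(i'-1)$ group, the level-$i'$ partition is determined by an independent uniform shift in $[2^{\ell_{i'}}]=[n^{1/2^{i'}}]$, and the two keys are separated iff a block boundary falls strictly between them. Since $|x_t - f^k(t)| \le n^{1/2^j} \le n^{1/2^{i'}}$, this probability is at most $|x_t - f^k(t)|/n^{1/2^{i'}}$. The events ``first separated at level $i'$'' are disjoint across $i'$, so
\[
\Pr[\text{different level-}i\text{ group}] \le \sum_{i'=1}^{i} \frac{|x_t - f^k(t)|}{n^{1/2^{i'}}} \le \frac{2\,|x_t - f^k(t)|}{n^{1/2^i}},
\]
where the last inequality uses that $n^{1/2^{i'-1}} = (n^{1/2^{i'}})^2$, so the summands shrink doubly-exponentially as $i'$ decreases and the sum is dominated by the $i'=i$ term.

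Substituting the key-distance bound gives $\Pr \le 2\, n^{1/2^j - 1/2^i}/\log n^{1/2^j}$, and comparing to the target $O(1/\log n^{1/2^i})$ reduces to checking $n^{1/2^j - 1/2^i}\cdot 2^{j-i} = O(1)$. Writing $v = \log n/2^j$ and $\delta = j - i \ge 0$, this quantity equals $2^{\delta - (2^{\delta}-1)v}$. The main (and only nontrivial) step is observing that the tree has only $O(\log \log n)$ levels, hence $2^j \le O(\log n)$ and $v \ge \Omega(1)$; for $v \ge 1$ one verifies $\delta - (2^{\delta}-1)v \le \delta + 1 - 2^{\delta} \le 0$ for every $\delta \ge 1$, so the product is bounded by a constant. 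Folding all constants together yields the stated bound $cd/\log n^{1/2^i}$.

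The main obstacle is the final algebraic step: one must be careful that although $2^{j-i}$ can grow as large as $\log n$, the term $n^{1/2^j - 1/2^i}$ compensates, and this compensation relies crucially on the depth-$O(\log\log n)$ structure of the group access tree (equivalently, $v = \log n/2^j \ge 1$). Everything else is standard union-bounding over the independently chosen shifts of the hierarchical partition.
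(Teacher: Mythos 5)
Your proposal is correct and follows essentially the same route as the paper: decompose the event by the first level at which $x_t$ and $f^k(t)$ are separated, bound each level's separation probability by (key distance)/(size of the random shift range) using $|x_t-f^k(t)|\le 2^{\FF^k(t)}\le n^{1/2^j}/\log n^{1/2^j}$, and sum the doubly-exponentially decaying terms, which is dominated by the level-$i$ term. Your explicit verification of the final comparison $n^{1/2^j-1/2^i}\,2^{j-i}=O(1)$ (using $\log n/2^j\ge 1$) is a step the paper only asserts, but it is the same underlying argument.
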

\begin{proof}
    Let $Y_p$ be the event that $x_t$ and $f^k(t)$ lie in different level $p$ groups but the same $p-1$ level group (where $1 \le p \le i$). Then the probability that $x_t$ and $f^k(t)$ lie in the different level $i$ group is:  $Y =  \cup_{p=1}^i Y_p$. We now calculate $Y_p$.
    
	Let us assume that $x_t$ and $f^k(t)$ lie in the same level $p-1$ group.  Let $g_{p-1}(x_t)$ contains the keys $(a, a+1, \dots , b)$. When we make groups at level $p$ in $g_{p-1}(x_t)$, we choose a random number $s \in [2^{\EL_p}] =  [n^{1/2^p}] $ and partition the keys using $s$. Since $\FF^k(t) \le \log \Big(\frac{n^{1/2^j}}{\log n^{1/2^j}}\Big)$, there are at most $\frac{c n^{1/2^j}}{\log n^{1/2^j}}$ keys between $s$ and $f^k(t)$ in the reference tree $\CC$ for some constant $c$. Thus, there are at most $\frac{cn^{1/2^j}}{\log n^{1/2^j}}$ values of $s$ for which $x_t$ and $f^k(t)$ lie in different groups. This implies the following:
	
	\begin{align*}
		\Pr[Y_p] &= \frac{cn^{1/2^j}}{\log (n^{1/2^j}) \times n^{1/2^p}} \Pr[\text{$x_t$ and $f^k(t)$ lie in the same level $p-1$ group}]\\
		& \le \frac{cn^{1/2^j}}{\log (n^{1/2^j}) \times n^{1/2^p}}
	\end{align*}
	
	We can now bound $Y$ as follows:
	\begin{align*}
		\Pr[Y] &= \sum_{p=1}^i \Pr[Y_p]\\
		&= 	\frac{cn^{1/2^j}}{\log (n^{1/2^j})} \sum_{p=1}^{i}\frac{1}{n^{1/2^p}}\\
		&\le \frac{cn^{1/2^j}}{\log (n^{1/2^j})} \frac{d}{n^{1/2^i}}
	\end{align*}
For all $i \le j$,  the above quantity is $\le \frac{cd}{\log n^{1/2^i}}$. This completes the proof.
	
\end{proof}

Let $j$ be the largest integer such that  $\FF^k(t) \le \log\Big(\frac{n^{1/2^j}}{\log n^{1/2^j}}\Big)$. Thus, $\FF^k(t) > \log \Big(\frac{n^{1/2^{j+1}}}{\log n^{1/2^{j+1}}}\Big)$ and certainly $\FF^k(t) > \log(n^{1/2^{j+2}})$. Hence, the number of keys between $s$ and $f^k(t)$ is $> cn^{1/2^{j+2}}$ where $c$ is a constant. At any level $\ge j+2$, we claim that $x_t$ and $f^k(t)$ will lie in different groups. This is because, using \Cref{obs:treenoofleveljgroup}, the group size is less than the number of keys between $x_t$ and $f^k(t)$. Thus, at all levels, $\ge j+2$, the group containing $x_t$ does not contain any finger.  Using \Cref{lem:newcostkfinger}, the cost of $c_{e_i}(\tset, \wset,x_t))$ from level $j+2$ to $\log \log n$ is 

\begin{align*}
	\sum_{i=j+2}^{\log \log n} c_{e_i}(\tset, \wset,x_t)) &= \sum_{i=j+2}^{\log \log n} \frac{1}{2^i} \log n \\
	&= O\Big(\frac{1}{2^j} \log n\Big) \\
	&= O(\FF^k(t)) \numberthis \label{eq:jton}
\end{align*}

For level $j+1$, using  \Cref{lem:newcostkfinger}, 
\begin{equation}
\label{eq:j+1}
	c_{e_{j+1}}(\tset, \wset,x_t)) = O(\max\{\frac{1}{2^{j+1}} \log n, \log k\}) = O(\max\{\FF^k(t), \log k\})
\end{equation}
For the levels $i \in [0,j]$, $x_t$ and $f^k(t)$ may lie in the same level $i$ group. The expected cost of $c_{e_i}(\tset, \wset,x_t))$ in iteration $i \in [0,j]$ can be given using \Cref{lem:newsamegroupprobability} and \Cref{lem:newcostkfinger}:

\begin{align*}
E[c_{e_i}(\tset, \wset,x_t))] &= \Pr[x_t \ \text{and $f^k(t)$ lie different level $i$ group} ] \times \frac{1}{2^i} \log n \\
& \hspace{1cm}+ \Pr[x_t \ \text{and $f^k(t)$ lie in same level $i$ group} ] \times \log k\\
\intertext{Using \Cref{lem:newsamegroupprobability}, we get}
&\le \frac{c^2}{\log n^{1/2^i}} \frac{1}{2^i} \log n +  \Big(1 - \frac{1}{\log n^{1/2^i}}\Big) \log k\\
&\le c^2 + \log k\\
&= O(\log k) \numberthis \label{eq:0toj}
\end{align*}

Thus, the total expected cost of $c(\tset, \wset,x_t))$ at time $t$ is 

\begin{align*}
\sum_{i=1}^{\log \log n} E[c_{e_i}(\tset, \wset,x_t))] &= \sum_{i=1}^{j} E[c_{e_i}(\tset, \wset,x_t))] + E[c_{e_{j+1}}(\tset, \wset,x_t))] + \sum_{i=j+2}^{\log \log n} E[c_{e_i}(\tset, \wset,x_t))] \\
\intertext{Using \Cref{eq:jton}, \Cref{eq:j+1}, and \Cref{eq:0toj}, we get }\\
&\le (j+1) \log k + O(\max\{\FF^k(t), \log k\})+ O(\FF^k(t))\\
&\le O(\log \log n \log k) + O(\FF^k(t))
\end{align*}

This implies that the expected cost of $c(\tset,\wset,X)$ is: 

\begin{flalign*}
	E[c(\tset,\wset,X)] &= \sum_{t=1}^m\sum_{i=1}^{\log \log n} E[c_{e_i}(\tset,\wset,x_t))]\\
	& \le \sum_{t=1}^m O(\FF^k(t) + \log k \log \log n)\\ 
	&= O(\FF^k(X) + m \log k \log \log n)
\end{flalign*}

Note that the above cost of $c(\tset,\wset,X)$ is expected. But this implies that there exists a group access tree such that its cost on the sequence $X$ is $(O(1), O(\log k \log \log n))$-competitive with $k$-finger when the reference tree is an almost complete binary tree.  Thus, we have proven \Cref{thm:treekfinger}.

\section{Unified Bound}\label{sec:newunified}
In this section, we will show that the group access bound is $O(\UB(X) + m\log \log n)$ for any sequence $X$. This bound matches the best-known bound for the Unified bound in the BST model (Skip-Splay tree) by Derryberry and Sleator \cite{derryberry2009skip}. In fact, we show that structurally the group access tree for Unified bound and Skip-Splay tree is the same. To show this correspondence, we will provide some notations and definitions from \cite{derryberry2009skip} and relate them with the definitions we have used in this paper.

The Skip-Splay tree $T$ is initially perfectly balanced with keys $\{1,\dots,n\}$ where $n=2^{2^{k-1}}-1$ for some $k>0$. The height of the tree $T$ is $\log \log n$, where the size (number of keys) of each internal node at any level is the square of the size of the node one level deeper, and the leaf nodes have a constant number of keys. Note that the tree $T$ is the same as the group access tree defined for $k$-finger if we omit the randomization part. 

We now define some notations from \cite{derryberry2009skip} and provide similarities with our definitions.
\begin{enumerate}
    \item Let $\rho_k=1$ and for $i<k$ let $\rho_i=2^{2^{k-i-1}}$ such that $\rho_i=\rho_{i+1}^2$. 
    According to our definition, $\rho_i$ is nothing but $|\pi_i|$ for the group access tree $\mathcal{T}$.

    \item Let $R_i(x)$ be the level-$i$ region of $x \in  T$, which is defined as follows. First, define the offset $\delta_i=\delta \text{ mod } \rho_i$, where $\delta$ is an integer that is arbitrary but fixed for all levels of $T$. Then, $R_i(x)=R_i^*(x)\cap T$ where

    $$R_i^*(x)=\left\{\left\lfloor\frac{ x+\delta_i }{\rho_i}\right\rfloor\rho_i - \delta_i,\dots, \left\lfloor\frac{ x+\delta_i }{\rho_i}\right\rfloor\rho_i - \delta_i + \rho_i -1\right\}.$$

    With the definition of $R_i(x)$ in mind we define $g_i(x)$ the same as $R_i(x)$. Therefore, the group access tree is not exactly the same as defined for $k$-finger, where we used randomization to partition the intervals. Instead, here we use the definition of $R_i(x)$ to partition the intervals (groups).

    \item Let $\mathcal{R}_i(x)$ be the set of level-$i$ regions that are subsets of $R_{i-1}(x)$. According to our definition $\mathcal{R}_i(x)$ is nothing but $\dg(g)$ where $g=g_{i-1}(x)$.

    \item For $x\in T$, let $w_i(x,t)$ represent the number of regions in $\mathcal{R}_i(x)$ that contains a query (access) since the previous access to a member of $R_i(x)$ at time $t$. Note that $w_i(x,t)$ is the working set number of the level-$i$ region $R_i(x)$ with respect to $\mathcal{R}_i(x)$. We now provide a similar definition for the working set of a level $i$ group.

    \begin{definition}(\textbf{Working Set of a level $i$ group}):
        Let $g_1$ be a level $i$ group such that $g_1 \in \dg(g)$ where $g=g_{i-1}(x_t)$.
 Let $t'$ be the largest integer less than $t$ such that $x_{t'}$ lies in $g_1$. Then, the working set number of $g_1$ at time $t$ is  $\WS_t(g_1) = t-t'$.
\end{definition}
 By definition, there can be at most one level $i$ group in $\dg(g)$ with working set number $1,2,3,\dots$, and so on. Let us define the weight of $g_1$ at time $t$ as :
 $w_t(g_1) = \frac{1}{(\WS_t(g_1))^2}$. Note that this weight function is locally bounded from time $t-1$ to $t$.
 \end{enumerate}

 The analysis, which proves that the Skip-Splay tree is $O(\UB(X) + m\log \log n)$ for a sequence $X$ of length $m$, consists of three lemmas. The first lemma divides the cost of the Skip-Splay tree into "local working set cost" with one cost term per level in $T$. We will also show that the cost of the group access tree $\mathcal{T}$  can be divided into local working set costs. Therefore, the group access tree $\mathcal{T}$ for the Unified bound also satisfies Lemma 1 of \cite{derryberry2009skip}. 
 
 The second lemma, which uses the first lemma, shows that the Skip-Splay tree satisfies a bound similar to the Unified bound plus an additive $O(\log \log n)$. The proof of this lemma only uses the structure of the Skip-Splay tree $T$ and the first lemma. As the structure of the group access tree $\mathcal{T}$ for the unified bound is the same as $T$, we immediately satisfy Lemma 2 of \cite{derryberry2009skip}. 

 The third lemma of \cite{derryberry2009skip} shows that the bound proved in Lemma 2 is within a constant factor of the Unified bound, which gives the desired result. For completeness, we provide the three lemmas derived in  \cite{derryberry2009skip} and then prove that Lemma 1 of \cite{derryberry2009skip} is satisfied by the group access tree $\mathcal{T}$ for the unified bound.

\begin{lemma}(Lemma 1 of \cite{derryberry2009skip})
    For a query sequence $X$ that is served by a Skip-Splay tree $T$ with $k$-levels, the amortized cost of the query $x_j$ using an arbitrary value of $\delta$ to define the regions is 

    $$O\left(k+ \sum_{i=1}^k \log w_i(x_j,j)\right).$$
\end{lemma}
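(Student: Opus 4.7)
The plan is to prove the lemma by applying the classical Sleator--Tarjan access lemma locally at each of the $k$ levels of the Skip-Splay tree $T$, with weights chosen to reflect the working-set number of each region among its siblings.

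First, I would fix weights on regions as follows. At the time of query $x_j$, for every level $i$ and every region $R \in \mathcal{R}_i(x_{j-1})$, assign weight $w(R) = 1/\WS(R)^2$, where $\WS(R)$ denotes the working-set index of $R$ within its parent region $R_{i-1}(x_{j-1})$. Because working-set numbers among siblings are distinct positive integers, the total sibling weight at every level satisfies $W_i = \sum_{R \in \mathcal{R}_i(x_{j-1})} w(R) \le \sum_{\ell \ge 1} 1/\ell^2 = O(1)$. I would then define a global potential $\Phi = \sum_{i=1}^{k} \Phi_i$, where $\Phi_i$ is the standard sum-of-logs splay potential of the local splay tree that Skip-Splay maintains at level $i$ over the siblings of $R_i(x_{j-1})$, using the weights above.

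Next, I would invoke the access lemma level by level. On query $x_j$, the Skip-Splay procedure performs exactly one local splay at each level $i \in \{1,\dots,k\}$, bringing the representative of $R_i(x_j)$ to the root of the local splay tree that lives inside $R_{i-1}(x_j)$. Applying the Sleator--Tarjan access lemma to that local splay, with the weights above, yields amortized cost
\[
O\!\left(1 + \log \frac{W_{i-1}}{w(R_i(x_j))}\right) \;=\; O\!\left(1 + \log w_i(x_j,j)\right),
\]
where the equality uses $W_{i-1} = O(1)$ and $w(R_i(x_j)) = 1/w_i(x_j,j)^2$. Summing across the $k$ levels gives the claimed amortized bound $O\!\left(k + \sum_{i=1}^{k} \log w_i(x_j,j)\right)$.

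Finally, I would audit the post-access weight update: after the query, $\WS(R_i(x_j))$ collapses to $1$ (its weight increases), while the $\WS$-values of the other siblings in $\mathcal{R}_i(x_{j-1})$ either remain equal or shift up by one (their weights weakly decrease). Decreases only lower $\Phi$ and can be discarded; each increase occurs on a region actually on the access path, and the resulting potential jump telescopes to $O(1)$ per level, hence is absorbed into the additive $k$ term. The main obstacle will be the clean composition of the local analyses: I must check that the level-$i$ splay does not corrupt $\Phi_{i+1},\dots,\Phi_k$ (so that the local access lemmas can be added), and that the arbitrary offset $\delta$ defining the regions does not create boundary regions so small that the bound $W_i = O(1)$ or the per-level $O(\log w_i)$ charge fails. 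Once these bookkeeping points are settled, the telescoping sum yields the stated amortized cost.
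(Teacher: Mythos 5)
This statement is not proved in the paper at all: it is quoted verbatim from Derryberry and Sleator \cite{derryberry2009skip} and used as a black box. What the paper actually proves is the analogue for the group access tree, Lemma~\ref{lem:unifiedlemma1}, and that proof is a one-line computation: the group-access cost of the edge $e_i$ is \emph{by definition} $\log\left(W^i/w_j(g_i(x_j))\right)$, with $W^i \le c$ and $w_j(g_i(x_j)) = 1/\WS_j(g_i(x_j))^2$, so each level contributes $O(\log \WS_j(g_i(x_j)))$ and summing over the $k$ levels finishes. Your weight choice, your $O(1)$ bound on the total sibling weight, and your per-level charge $O(1+\log w_i(x_j,j))$ are exactly this computation; that part of your proposal is sound, but it establishes the idealized (group-access-tree) statement rather than the stated claim about the physical Skip-Splay BST.

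The gap is precisely the step you defer to ``bookkeeping'': passing from the idealized model, in which each sibling region is a single leaf of a small local splay tree, to the actual data structure, in which there is one global BST $T$, a level-$i$ region contains up to $\rho_i = 2^{2^{k-i-1}}$ keys, and the level-$i$ splay walks over individual keys drawn from many sibling regions. To invoke the Sleator--Tarjan access lemma there you must place weights on \emph{keys}, not regions, so that the total weight stays $O(1)$ while the node being splayed still carries weight $\Omega(1/w_i(x_j,j)^2)$; a naive uniform split of a region's weight over its keys incurs an extra $\log \rho_i$ per level, which is fatal since $\sum_i \log \rho_i = \Theta(\log n)$. You also need the invariant that each region's keys remain a contiguous subtree of $T$ under splaying, which is what allows the $k$ local analyses to compose without corrupting one another's potentials. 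These two points are the actual content of Derryberry and Sleator's proof, not routine bookkeeping. As written, your argument proves the paper's Lemma~\ref{lem:unifiedlemma1}; to prove the cited Lemma 1 you would have to supply the key-level weight assignment and the region-subtree invariant, or else simply treat the lemma as the external result the paper takes it to be.
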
 

Before describing the second lemma, we first define the bound, which is very similar to the Unified bound as described in \cite{derryberry2009skip}. Let $ub(j) = \text{argmin}_{j'<j}\{j-j' + |x_j - x_{j'}| +2\}$ and let $t(x_j,j)$ represent the number of queries (instead of distinct keys accessed) since the previous access to $x_j$. Define $\UB'(X)$ a variant of Unified bound as follows

$$\UB'(X)=\sum_{j=1}^m \log (t(ub(j),j)+ |x_j - ub(j)|).$$

We now describe the second lemma of \cite{derryberry2009skip}.

\begin{lemma}(Lemma 2 of \cite{derryberry2009skip})
    Executing the Skip-Splay algorithm on a query sequence $X$ of length $m$ costs $O(m\log \log n + \UB'(X)).$
\end{lemma}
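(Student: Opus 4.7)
The plan is to use Lemma 1 as a black box and then argue that the cumulative working-set cost across all levels of the tree is, per query, dominated by the single logarithm appearing in $\UB'(X)$. Applying Lemma 1 to every query and summing gives
\[
\text{Cost}(X) \;=\; O\!\left(\sum_{j=1}^m k + \sum_{j=1}^m\sum_{i=1}^k \log w_i(x_j,j)\right) \;=\; O\!\left(m\log\log n + \sum_{j=1}^m\sum_{i=1}^k \log w_i(x_j,j)\right),
\]
since $k = O(\log\log n)$. The additive $m\log\log n$ term matches the statement of Lemma~2, so the remaining task is to show that $\sum_{i=1}^k \log w_i(x_j,j) = O\!\left(\log\bigl(t(ub(j),j) + |x_j - ub(j)|\bigr)\right)$ for every query $x_j$; summing over $j$ then yields the $\UB'(X)$ term.

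To prove this per-query bound, I would fix a query $x_j$ and let $j' = ub(j)$, $t = j - j'$, $d = |x_j - x_{j'}|$. Because the region sizes $\rho_i = \rho_{i+1}^2$ grow doubly exponentially as $i$ decreases, there is a unique \emph{critical} level $i^*$ at which $R_{i^*}(x_j)$ first contains $x_{j'}$; moreover $\rho_{i^*+1} < d \le \rho_{i^*} < d^2$. Split the sum at $i^*$. For levels $i > i^*$, the trivial bound $w_i(x_j,j) \le |\mathcal{R}_i(x_j)| = \rho_i$ combined with the geometric identity $\log\rho_i = 2^{k-i-1}$ yields a geometric series telescoping to $\sum_{i>i^*}\log w_i = O(\log\rho_{i^*+1}) = O(\log d)$. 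For levels $i \le i^*$, the containment $x_{j'} \in R_i(x_j)$ implies that $R_i(x_j)$ was accessed at time $j'$, so the time $\tau_i$ since the last access to $R_i(x_j)$ is at most $t$, which gives the pointwise bound $w_i \le \min(t,\rho_i)$ and, more importantly, permits a telescoping argument on $\tau_i$.

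The main obstacle is the analysis for $i \le i^*$: the naive per-level bound $\log w_i \le \log t$, summed over all $k = \log\log n$ levels, gives $O(\log\log n \cdot \log t)$ per query, which is too weak by a $\log\log n$ factor. The trick is to exploit the fact that $w_i$ counts \emph{distinct sub-regions} of $R_{i-1}(x_j)$ rather than individual queries, and that the sequence $\tau_i$ of "times since previous access" is monotone increasing in $i$ with $\tau_{i-1} \le w_i \cdot \tau_i$ (each sub-region of $R_{i-1}$ that was visited after the previous access to $R_i$ contains at least one query, and the earliest such query gives the last access to $R_{i-1}$). This telescopes to $\prod_{i\le i^*} w_i \le \tau_{i^*} / \tau_0 = O(t)$, hence $\sum_{i\le i^*} \log w_i = O(\log t)$. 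Combining the two ranges gives $\sum_i \log w_i(x_j,j) = O(\log(t+d))$, as needed.

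Finally, I would verify that this argument depends only on (i)~the hierarchical structure of the tree with $\rho_{i-1} = \rho_i^2$, (ii)~the definition of $w_i$ in terms of last access to a region, and (iii)~Lemma~1. Since the group access tree $\mathcal{T}$ constructed in this section shares exactly these properties with the Skip-Splay tree $T$, the argument carries over without modification, giving the claimed $O(m\log\log n + \UB'(X))$ bound for $\gab(\mathcal{T},X)$.
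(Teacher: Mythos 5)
You should first note a mismatch in kind: the paper never proves this statement. It is Lemma~2 of \cite{derryberry2009skip}, imported as a black box; the paper's only obligation in \Cref{sec:newunified} is to re-prove the Lemma~1 analogue for the group access tree (\Cref{lem:unifiedlemma1}) and to observe that the cited proof of Lemma~2 uses nothing beyond Lemma~1 and the structure of the tree, so it transfers. Your proposal instead tries to reconstruct the proof of Lemma~2 itself, and the reconstruction has two concrete gaps.

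First, the critical-level claim $\rho_{i^*+1} < d \le \rho_{i^*} < d^2$ is false for an arbitrary fixed offset $\delta$: two keys at distance $d=1$ can straddle a region boundary at the coarsest level, so $i^*$ can be far coarser than the distance $d$ dictates and $\rho_{i^*+1}$ can be polynomial in $n$; your geometric-series step $\sum_{i>i^*}\log w_i=O(\log\rho_{i^*+1})=O(\log d)$ then breaks. This boundary problem is precisely why the offset $\delta$ exists: one must bound the probability (over a random $\delta$) that a level-$i$ boundary separates $x_j$ from $ub(j)$ by roughly $\min(1,d/\rho_i)$, take expectations, and only then fix a single good $\delta$ --- the step the paper alludes to with ``there exists at least one value of $\delta$'', and the same role the random shift plays in \Cref{lem:newsamegroupprobability}. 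Second, the telescoping for $i\le i^*$ is invalid: the inequality you state, $\tau_{i-1}\le w_i\tau_i$, is trivially true (since $w_i\ge 1$ and $\tau_{i-1}\le\tau_i$) and yields only the useless lower bound $\prod_i w_i\ge \tau_0/\tau_{i^*}$; the inequality your conclusion needs, $w_i\le\tau_i/\tau_{i-1}$, is false (a burst of queries to many distinct sibling regions followed by a quiet period makes $w_i$ large while $\tau_i/\tau_{i-1}=O(1)$). What disjointness of the counted queries across levels really gives is $\sum_{i\le i^*}w_i\le\tau_{i^*}\le t$, and that is not enough: with $t=\Theta((\log\log n)^2)$ one can arrange $w_i=\Theta(\log\log n)$ on $\Theta(\log\log n)$ coarse levels without changing $ub(j)$ (the burst queries sit in far-away siblings), so $\sum_i\log w_i(x_j,j)=\Theta(\log\log n\cdot\log\log\log n)$ while $\log(t+d)+k=O(\log\log n)$. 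Hence the per-query inequality $\sum_i\log w_i(x_j,j)=O\bigl(\log(t(ub(j),j)+|x_j-ub(j)|)\bigr)$ on which your whole plan rests is simply not true; a correct proof must combine the $\delta$-averaging with an accounting that charges across queries (as in \cite{derryberry2009skip}) rather than bounding each query separately.
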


The third lemma in \cite{derryberry2009skip} shows that $\UB'(X)$ is within a constant factor of Unified bound plus a linear term in $m$ greater than $\UB(X)$. The formal statement of the lemma is as follows:

\begin{lemma}(Lemma 3 of \cite{derryberry2009skip})
    For a sequence $X$ of length $m$, the following inequality is true:
    $$\sum_{j=1}^m \log (t(ub(j),j)+ |x_j- ub(j)|)\le \frac{m\pi^2 \log e}{6} + \log e + \sum_{j=1}^m 2\log (w(ub(j),j)+ |x_j- ub(j)|).$$
\end{lemma}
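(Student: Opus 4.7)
Let me write $t_j := t(x_{ub(j)}, j)$, $w_j := w(x_{ub(j)}, j)$, and $d_j := |x_j - x_{ub(j)}|$ for conciseness. Since a query count always dominates the corresponding count of distinct keys over the same interval, we have $t_j \ge w_j$, so the pointwise inequality $\log(t_j+d_j) \le 2\log(w_j+d_j)$ cannot hold in general; the linear slack $\tfrac{m\pi^2\log e}{6}$ on the right must absorb the gap in an amortized fashion.

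My plan is a binary case split on each index $j$: call $j$ \emph{tame} when $t_j + d_j \le (w_j + d_j)^2$, and \emph{wild} otherwise. Tame indices already satisfy $\log(t_j+d_j) \le 2\log(w_j+d_j)$ and contribute nothing to the additive term. For a wild index, the window $(j-t_j, j]$ contains many more queries than distinct keys, reflecting heavy repetition inside the interval, and we must bound the aggregate excess $\sum_{j \text{ wild}} \log\bigl((t_j+d_j)/(w_j+d_j)^2\bigr)$ by $\tfrac{m\pi^2 \log e}{6} + O(1)$.

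The amortization is the crux. I would distribute the excess at each wild $j$ as $1/k^2$-type charges to the repeated queries inside its window: a query at time $s$ that is the $k$-th appearance of its key within the window of a wild $j$ receives $\Theta(\log e / k^2)$. Each query is then charged at most $\sum_{k \ge 1} \log e/k^2 = \pi^2 \log e / 6$ in total (exactly where the $\pi^2/6$ in the lemma statement appears), giving a global additive bound of $\tfrac{m \pi^2 \log e}{6}$. The specific form $(w_j+d_j)^2$ in the threshold is tailored so that the log-excess on a wild $j$ telescopes into precisely such a sum of $1/k^2$ contributions, via $\log k - \log(k-1) = O(1/(k-1))$ together with concavity of $\log$.

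The main technical obstacle will be formalizing the charging so that (i) each wild $j$'s excess is genuinely covered by the charges it collects from its window, and (ii) a single query, which may lie inside the windows of many different wild indices, is not overcharged. Both points hinge on showing that within any window the $k$-th repeat of a fixed key contributes at most $\Theta(1/k^2)$ to the log-ratio, so that per-query summation converges to $\pi^2/6$. Handling the spatial term $d_j$ similarly leans on $(w_j+d_j)^2 \ge w_j^2 + d_j^2$ to ensure that spatial slack is absorbed into the tame case rather than into the charging budget.
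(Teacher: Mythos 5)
First, note that this paper does not prove the statement at all: it is imported verbatim as Lemma~3 of \cite{derryberry2009skip} and used as a black box, so the only proof to compare against is Derryberry and Sleator's. Judged on its own merits, your scheme has a quantitative flaw that kills requirement (i), the within-window coverage. Consider: query key $1$ at time $p$, then query a single far-away key $z$ for $T$ consecutive steps, then query key $1$ again at time $j$, with $z$ chosen so far that minimality makes $ub(j)$ the old occurrence of key $1$. Then $t_j \approx T$, $w_j = 1$, $d_j = 0$, so $j$ is wild with excess $\log\bigl((t_j+d_j)/(w_j+d_j)^2\bigr) \approx \log T$, which can be $\Theta(\log m)$. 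But the window contains only one distinct key, so the total $1/k^2$-type budget it can collect is at most $\sum_{k \ge 1} \Theta(\log e/k^2) = \Theta(1)$: a convergent series cannot pay for an unbounded logarithm. Your telescoping justification via $\log k - \log(k-1) = O(1/(k-1))$ in fact produces harmonic $1/k$ increments (whose sum is the logarithm), not $1/k^2$ pieces; in general the budget your scheme makes available inside a window is only $O(w_j)$, while the excess can be $\log t_j - 2\log w_j$ with $w_j = O(1)$ and $t_j = m^{\Omega(1)}$, which is exactly the paradigmatic wild case.

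Second, obstacle (ii) is not a deferrable technicality: a fixed query can lie in the windows of many wild indices and be the $k$-th appearance of its key, for the same $k$, in all of them, so nothing caps its total charge by $\pi^2\log e/6$. More fundamentally, your argument never uses the defining property that $ub(j)$ is the argmin of the unified expression, and the inequality is simply false for an arbitrary reference choice: query key $1$ once, then a far key for $m/2$ steps, then key $2$ for the remaining steps, referencing key $1$ each time; every late index then has $t \approx m/2$, $w = 2$, $d = 1$, so the left side exceeds the right by $\Theta(\log m)$ per query, i.e.\ $\Theta(m\log m)$ in total, swamping the allowed additive $\frac{m\pi^2\log e}{6} + \log e$. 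Hence any correct proof must exploit the minimality/recency structure of $ub(j)$ to guarantee that each query is charged at most once at each scale $k$ (which is where the convergent $\sum_{k\ge 1} 1/k^2 = \pi^2/6$ legitimately enters in \cite{derryberry2009skip}); your proposal names the right constant but supplies no mechanism delivering either the coverage or the once-per-scale property, and with unspecified $\Theta(\cdot)$ constants it could not recover the stated inequality in any case.
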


Note that the Lemma 3 of \cite{derryberry2009skip} is true for all $X$, so we only need to prove a lemma similar to Lemma 1 of \cite{derryberry2009skip}. We now show that the following lemma holds:

\begin{lemma}(Similar to Lemma 1 of \cite{derryberry2009skip}) \label{lem:unifiedlemma1}
    For a query sequence $X$ that is served by a group access tree $\mathcal{T}$ for Unified bound with $k$-levels, the amortized cost of the query $x_j$ using an arbitrary value of $\delta$ to define the groups is 

    $$O\left(k+ \sum_{i=1}^k \log \WS_j(g_i(x_j))\right).$$
\end{lemma}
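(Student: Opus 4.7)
The plan is to apply the group access bound directly with the weight function $w_t(g) = 1/(\WS_t(g))^2$ already introduced in the text, and show that each level of $\tset$ contributes $O(1 + \log \WS_j(g_i(x_j)))$ to the amortized cost of the $j$-th query. First, I would verify that this weight family is locally bounded: the working set number $\WS_t(g)$ of a level-$i$ group $g$ only ever increases (or resets to $1$) from time $t$ to $t+1$, and it can reset only when the new access $x_t$ lies inside $g$, i.e., when $g \in \tset(x_t)$. Consequently $w_{t+1}(g) \leq w_t(g)$ whenever $g \notin \tset(x_t)$, so the family qualifies as locally bounded in the sense of \Cref{sec:group access tree bound}.

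Next I would bound the denominator normalizer appearing in the edge cost. Fix a query $x_j$ and a level $i$, and let $g = g_{i-1}(x_j)$. Among the children $\dg(g)$, the working set numbers $\WS_j(g')$ at time $j$ are all distinct positive integers, because at most one child can be the most recently visited child of $g$, at most one can be the second most recently visited, and so on. Hence
\[
W^i \;=\; \sum_{g' \in \dg(g)} w_j(g') \;=\; \sum_{g' \in \dg(g)} \frac{1}{(\WS_j(g'))^2} \;\leq\; \sum_{s=1}^{\infty} \frac{1}{s^2} \;=\; \frac{\pi^2}{6} \;=\; O(1).
\]
Plugging this into the per-edge cost formula from \Cref{sec:group access tree bound} gives
\[
c_{e_i}(\tset, w_j, x_j) \;=\; \log\!\frac{W^i}{w_j(g_i(x_j))} \;=\; \log\!\left(O(1) \cdot (\WS_j(g_i(x_j)))^2\right) \;=\; O\!\left(1 + \log \WS_j(g_i(x_j))\right).
\]

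Summing this inequality over the $k$ levels along the root-to-leaf path $\tset(x_j)$ yields the desired amortized bound
\[
c(\tset, w_j, x_j) \;=\; \sum_{i=1}^{k} c_{e_i}(\tset, w_j, x_j) \;=\; O\!\left(k + \sum_{i=1}^{k} \log \WS_j(g_i(x_j))\right),
\]
which is exactly the statement of \Cref{lem:unifiedlemma1}. Since the group access bound is defined as the minimum over locally bounded weight families, the realized cost is an upper bound on $\gab(\tset, X)$ and hence an amortized cost bound for the algorithm simulating $\tset$.

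The argument is essentially a direct computation; there is no serious obstacle once the weight $1/(\WS)^2$ is chosen. The only subtle point is justifying that the working set numbers across siblings at a single level are genuinely distinct integers at every time $j$, which follows from the definition since each integer $s \geq 1$ can be the ``number of time steps since last visit'' for at most one sibling group. The independence of $\delta$ is automatic since the argument never uses the particular offset used to carve the groups; the bound holds for every valid group access tree with the stated arity pattern.
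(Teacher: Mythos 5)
Your proof is correct and follows essentially the same route as the paper: you use the weight $w_t(g)=1/(\WS_t(g))^2$, observe local boundedness, bound $W^i$ by a constant via the distinctness of working-set numbers among sibling groups (which the paper asserts just before the lemma), and conclude that each edge on the search path costs $O(1+\log \WS_j(g_i(x_j)))$, summing over the $k$ levels. The only difference is that you spell out the $\sum_s 1/s^2$ bound that the paper leaves implicit in its claim $W^i\le c$, which is a welcome but inessential elaboration.
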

\begin{proof}
    Let $W^i=\sum_{g'\in \dg(g)} w_j(g')$ where $g=g_{i-1}(x_j)$, then $W^i\le c$, where $c$ is a constant.

    Let $g_1=g_{i}(x_j)$. In the group access tree $\mathcal{T}$, consider the edge $e_i=(g,g_1)$. The cost of the edge $e_i$ at time $j$ is:
    $$c_{e_i}(\mathcal{T},\mathcal{W},x_j)=\log \left(\frac{c}{w_{j}(g_1)}\right)=O(\log \WS_j(g_1)).$$

    Summing over all levels proves the desired lemma.
\end{proof}

Note that as the bound in \Cref{lem:unifiedlemma1} holds for any values of $\delta$, there exists at least one value of $\delta$ such that the bound holds without randomization (same as described in \cite{derryberry2009skip}). Thus, we have proved \Cref{thm:treeub}.

\section{Unified bound with a time window}\label{sec:newubtw}

We reuse the analysis of $k$-finger to bound the time window variant of unified bound. The group access tree is the same as described for the $k$-finger bound. Let $\CC$ be a binary tree which we will use as a reference tree. We will assume that at time $t-1$, the $k$-fingers are stationed at the leaves containing $x_{t-1}, x_{t-2}, \dots, x_{t-k}$ in $\CC$. At the end of time $t$,  the finger {\em disappears} from $x_{t-k}$ and {\em appears} at $x_t$. Note that we are not moving the finger from $x_{t-k}$ to $x_t$. It just changes its position.

Similar to \Cref{sec:newkfinger}, we analyze the case when $k < n^{\frac{1}{2^j}}$. As in \Cref{sec:newkfinger}, the weight of a group may change after the finger's movement. Let $x_t$ be a key accessed at time $t$ and let $g_1=g_j(x_t)$. Then we prove the following lemma:
\begin{lemma}(Similar to \Cref{lem:newcostkfinger})
\label{lem:newcostubtw}
Assume that $k < n^{1/2^j}$. The  cost $c_{e_j}(\tset, \wset,x_t)$ at time $t$ is:

 $c_{e_j}(\tset, \wset,x_t) =
\begin{cases}
	O(\frac{1}{2^j} \log n) \hspace{0.8cm}\text{if no finger lies in  $g_1$ at time $t-1$} \\
	O(\log k) \hspace{1.2cm} \text{otherwise}
\end{cases}$
\end{lemma}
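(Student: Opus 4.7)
The plan is to mimic the proof of \Cref{lem:newcostkfinger} verbatim, using the very same weight function
$$w_t(g_1) = \begin{cases} \max\{1/k, 1/n^{1/2^j}\} & \text{if } g_1 \text{ contains a finger at time } t \\ 1/n^{1/2^j} & \text{otherwise} \end{cases}$$
applied to the time-window setting, and check that the analysis carries over. The only conceptual difference from Section~\ref{sec:newkfinger} is that fingers no longer persist over the whole input: at each step a finger disappears from $x_{t-k}$ and a new one appears at $x_t$. So the first thing to verify is that the induced weight family $\wset$ is still locally bounded. The appearance of a finger at $x_t$ can only raise the weight of the groups $g_0(x_t), g_1(x_t), \ldots, g_{\log\log n}(x_t)$, which is exactly the search path $\tset(x_t)$, so these increases are permitted. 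The disappearance of the finger at $x_{t-k}$ can only decrease the weight of groups off the search path, so it too respects the local boundedness condition.

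Next I would bound $W^j = \sum_{g' \in \dg(g_{j-1}(x_t))} w_t(g')$ by a constant $c$, exactly as in the $k$-finger proof. At any time there are at most $k$ fingers in total, so the number of children of $g_{j-1}(x_t)$ that contain a finger is at most $k$; their total weight contribution is at most $k \cdot (1/k) = 1$. By \Cref{obs:treenoofleveljgroup} we have $|\dg(g_{j-1}(x_t))| = O(n^{1/2^j})$, so the remaining (finger-free) children contribute at most $O(n^{1/2^j}) \cdot (1/n^{1/2^j}) = O(1)$. Thus $W^j \le c$ for some constant $c$.

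Finally I would plug $W^j$ into the definition of the edge cost and split on whether $g_1 = g_j(x_t)$ contains a finger at time $t-1$. Under the assumption $k < n^{1/2^j}$, if a finger lies in $g_1$ at time $t-1$ then $w_{t-1}(g_1) = 1/k$, yielding
$$c_{e_j}(\tset, \wset, x_t) = \log\!\left(\frac{c}{1/k}\right) = O(\log k),$$
whereas if no finger lies in $g_1$ at time $t-1$ then $w_{t-1}(g_1) = 1/n^{1/2^j}$, giving $c_{e_j}(\tset, \wset, x_t) = \log(c \cdot n^{1/2^j}) = O((\log n)/2^j)$, which is the desired dichotomy.

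The only mildly delicate point is the local boundedness check, since unlike the $k$-finger setting the finger set is changing at every step; but once one notes that a finger disappearance can only \emph{lower} a group's weight, this reduces to the same bookkeeping as before. The rest of the argument is identical to \Cref{lem:newcostkfinger} and requires no new ideas.
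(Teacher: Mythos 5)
Your proposal is correct and follows essentially the same route as the paper, which simply notes that the proof is identical to that of \Cref{lem:newcostkfinger}: bound $W^j$ by a constant (fingered children contribute at most $k\cdot(1/k)=1$, finger-free children at most $O(n^{1/2^j})\cdot n^{-1/2^j}=O(1)$) and then split on whether $g_1$ holds a finger at time $t-1$. Your extra check that the appearing/disappearing fingers keep the weight family locally bounded is a welcome addition but not a departure from the paper's argument.
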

The proof is same as \Cref{lem:newcostkfinger}.

\subsection{Finishing the proof of \Cref{thm:treeubtw}}

Let $$\UB^k(t) = \min_{t-k \le t'\le t-1}\{t-t' + |x_t - x_{t'}| +2\}$$ and $$ub^k(t) = \text{argmin}_{t-k \le t'\le t-1}\{t-t' + |x_t - x_{t'}| +2\}$$ Thus, $$\UB^k(X) = \sum_{t=2}^m  \log \UB^k(t)$$ Similar to \Cref{lem:newsamegroupprobability}, we claim the following:
 
 \begin{lemma}
\label{lem:newsamegroupprobabilityubwindow}
	Let $j$ be the largest integer such that  $\UB^k(t) \le \frac{n^{1/2^j}}{\log (n^{1/2^j})}$. Then, for any $i \in [0,j]$, $$Pr\Big[ \text{$x_t$ and the $ub^k(t)$ lie in different level $i$ group}  \Big] \le \frac{cd}{\log n^{1/2^i}}$$  for some constant $c \ge 1$, $d \ge 1$.
\end{lemma}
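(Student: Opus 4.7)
The plan is to follow the proof of \Cref{lem:newsamegroupprobability} almost verbatim, with one small adjustment that exploits the fact that $\UB^k(t)$ directly bounds a key-space distance. For each $p \in [1,i]$, let $Y_p$ denote the event that $x_t$ and $x_{ub^k(t)}$ lie in the same level-$(p-1)$ group but in different level-$p$ groups. The event we want to bound, ``$x_t$ and $x_{ub^k(t)}$ lie in different level-$i$ groups,'' is exactly $Y = \bigcup_{p=1}^i Y_p$, so a union bound on the $Y_p$'s suffices.

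The key difference from the $k$-finger setting is in how we control the number of keys between the two reference points. In the $k$-finger proof, $\FF^k(t)$ bounded the distance in the reference tree $\CC$, and we had to translate this into a key-space count via $2^{\FF^k(t)}$. Here, by definition of $\UB^k(t)$ as the minimum of $t-t' + |x_t - x_{t'}| + 2$, we immediately get
\[|x_t - x_{ub^k(t)}| \;\le\; \UB^k(t) \;\le\; \frac{n^{1/2^j}}{\log n^{1/2^j}},\]
so no exponential translation is needed. Conditioning on $Y_1^c \cap \cdots \cap Y_{p-1}^c$ (the two keys share the same level-$(p-1)$ group), the randomized construction from \Cref{sec:newkfinger} refines that group by picking an independent uniform offset $s \in [n^{1/2^p}]$. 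The two keys land in different level-$p$ groups exactly when some boundary induced by $s$ falls strictly between them; the number of such offsets is at most $|x_t - x_{ub^k(t)}| \le n^{1/2^j}/\log n^{1/2^j}$. Hence
\[\Pr[Y_p] \;\le\; \frac{n^{1/2^j}/\log n^{1/2^j}}{n^{1/2^p}}.\]
Summing over $p \in [1,i]$ and using that the sum $\sum_{p=1}^{i} 1/n^{1/2^p}$ is dominated by its last term up to a constant factor $d$, we obtain
\[\Pr[Y] \;\le\; \frac{n^{1/2^j}}{\log n^{1/2^j}} \sum_{p=1}^{i} \frac{1}{n^{1/2^p}} \;\le\; \frac{d\, n^{1/2^j}}{n^{1/2^i}\log n^{1/2^j}} \;\le\; \frac{cd}{\log n^{1/2^i}},\]
where the last step uses $i \le j$ (so that $n^{1/2^j} \le n^{1/2^i}$) and absorbs constants into $c$.

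No step is genuinely hard; the only subtlety is a bookkeeping check that the proof of \Cref{lem:newsamegroupprobability} really does port over. Specifically, one must verify that (i) the independence of the random offsets $s$ across different levels of the group access tree—already built into the construction—justifies the conditioning at each level, and (ii) the identity $|x_t - x_{ub^k(t)}| \le \UB^k(t)$ is used in place of the $k$-finger inequality $|x_t - f^k(t)| \le 2^{\FF^k(t)}$. Once these two observations are in place, the entire argument is a transcription of \Cref{lem:newsamegroupprobability}, and the resulting cost-accounting step can then be completed exactly as in \Cref{sec:newkfinger}, ultimately yielding $O(\UB^k(X) + m\log k \log\log n)$.
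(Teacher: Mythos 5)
Your proposal is correct and matches the paper's intended argument: the paper simply ports the proof of \Cref{lem:newsamegroupprobability} (same decomposition into events $Y_p$, union bound over levels, and counting of bad offsets $s$), which is exactly what you do. Your one adjustment---using $|x_t - x_{ub^k(t)}| \le \UB^k(t)$ directly instead of the exponential translation from tree distance---is precisely why the threshold in this lemma is $\frac{n^{1/2^j}}{\log(n^{1/2^j})}$ rather than its logarithm, so the transcription is faithful.
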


We are now ready to bound the running time of $c(\tset, \wset,X)$. The proof is identical to the proof of  \Cref{thm:treekfinger}. Let $j$ be the largest integer such that  $\UB^k(t) \le \frac{n^{1/2^j}}{\log (n^{1/2^j})}$. At any level $\ge j+1$, the cost can be bounded as follows: 

\begin{align*}
	\sum_{i=j+1}^{\log \log n} c_{e_i}(\tset, \wset,x_t) &= \sum_{i=j+1}^{\log \log n} \max\{\frac{1}{2^i} \log n, \log k \}\\
	&= O\Big(\max\{\frac{1}{2^j} \log n, (\log \log n - j) \log k \}\Big) \\
	&= O(\max\{\log \UB^k(t), (\log \log n-j) \log k\})
\end{align*}

For the levels $i \in [0,j]$, $x_t$ and $ub^k(t)$ may lie in the same level $i$ group. The expected cost of $c_{e_i}(\tset, \wset,x_t)$ in iteration $[0,j]$ can be given using \Cref{lem:newsamegroupprobabilityubwindow} and \Cref{lem:newcostubtw}:

\begin{align*}
 E[c_{e_i}(\tset, \wset,x_t)] &= \Pr[x_t \ \text{and $ub^k(t)$ lie different level $i$ group} ] \times \frac{1}{2^i} \log n \\
& \hspace{1cm}+ \Pr[x_t \ \text{and $ub^k(t)$ lie in same level $i$ group} ] \times \log k\\
&\le \frac{c^2}{\log n^{1/2^i}} \frac{1}{2^i} \log n +  \Big(1 - \frac{c^2}{\log n^{1/2^i}}\Big) \log k\\
&\le c^2 + \log k\\
&= O(\log k)
\end{align*}

Thus, the total expected cost at time $t$ is 

\begin{align*}
\sum_{i=1}^{\log \log n} E[c_{e_i}(\tset, \wset,x_t)] &= \sum_{i=1}^{j} E[c_{e_i}(\tset, \wset,x_t)] + \sum_{i=j+1}^{\log \log n} E[c_{e_i}(\tset, \wset,x_t)] \\
\intertext{Substituting the values for these quantities, we get }
&\le (j+1) \log k + O(\max\{\log \UB^k(t) , (\log \log n - j)\log k\})  \\
&\le O(\log \log n \log k) +  O(\UB^k(t))
\end{align*}

This implies that the expected cost of $c(\tset, \wset,X)$ on the sequence $X$ is $(O(1), O(\log k \log \log n))$-competitive to the unified bound with time window of size $k$.
 Thus we have proven \Cref{thm:treeubtw}.

\section{$\GGR$ algorithm}
\label{sec:ggreedyalgo}
This section will define a new BST algorithm named $\GGR$, which satisfies the \emph{group access bound}. The $\GGR$ algorithm is very similar to $\GR$. Therefore, we first describe the $\GR$ algorithm and then provide some intuition about the $\GGR$ algorithm before describing it formally.

Given a search sequence $X=\{x_1,x_2,\dots,x_m\}$, intuitively the $\GR$ algorithm works as follows:

\noindent At time $t$, the $\GR$ algorithm performs a horizontal line sweep of the points at $y=t$.  By induction, we can assume that all the pairs of the points below the line $y=t$, are arborally satisfied.  So, at time $t$, we find all the rectangles with one endpoint $x_t$ and the other endpoint $q$, where $q$ is a point below the sweep line. If the rectangle $\square x_tq$ is not arborally satisfied, then add a point at the corner of the rectangle $\square x_tq$ on the sweep line to make it arborally satisfied. See \Cref{alg:greedy} for the formal definition of $\GR$ (\Cref{fig:gredy} illustrates the execution of $\GR$).

\begin{minipage}[t]{0.98\linewidth}
  \begin{minipage}[t]{0.44\linewidth}
  \begin{algorithm}[H]
\caption{\textsc{AddPoints}($A$,$x_t$)}
\label{alg:greedy1}
\ForEach {$\square x_tq$ in $A$}
{
	Add a point at $(q.x,x_t.y)$
}
\end{algorithm}
\end{minipage}
  \begin{minipage}[t]{0.55\linewidth}
    
    \begin{algorithm}[H]
    
\caption{Processing of $\GR$ at time $t$}
\label{alg:greedy}
Let $A$ be the set of all the arborally unsatisfied rectangles with one endpoint as $x_t$ and another endpoint below the line $y=t$\;
$\textsc{AddPoints}(A,x_t)$\;
\end{algorithm}
  \end{minipage}
\end{minipage}

\begin{figure}[hpt!]

\begin{subfigure}{0.5\textwidth}
\hspace{1.5cm}
    \begin{tikzpicture}[scale=.4]
\draw [step=1cm,gray!60!white, very thin] (1,0) grid (10,10);
\foreach \x in {1,...,10}
  \draw (\x,-1pt) node[anchor=north] {$\x$};

\draw[->][black] (1,0) -- (10.1,0);
\draw[->][black] (1,0) -- (1,10.25);
\draw[yellow,very thick] (1,0)--(1,10) (4,0)--(4,10) (7,0)--(7,10) (10,0)--(10,10);
\filldraw[red] (6,1) circle (2pt) (2,2) circle (2pt) (3,3) circle (2pt) (9,4) circle (2pt) (10,5) circle (2pt) (4,6) circle (2pt) (5,7) circle (2pt) (8,8) circle (2pt) (1,9) circle (2pt) (2,10) circle (2pt) ;


\filldraw[cyan] (4,1) circle (3pt) (7,1) circle (3pt) (1,2) circle (3pt) (4,2) circle (3pt) (1,3) circle (3pt) (4,3) circle (3pt) (7,4) circle (3pt) (10,4) circle (3pt) (7,5) circle (3pt) (1,6) circle (3pt) (7,6) circle (3pt) (4,7) circle (3pt) (7,7) circle (3pt) (7,8) circle (3pt) (10,8) circle (3pt) (4,9) circle (3pt) (1,10) circle (3pt) (4,10) circle (3pt);

\filldraw[black] (4,4) circle (3pt) (7,9) circle (3pt);
\end{tikzpicture}
\captionsetup{justification=centering}
    \caption{$\GGR$}
    \label{fig:mgreedy}
\end{subfigure}
\begin{subfigure}{0.4\textwidth}
\hspace{.8cm}
   \begin{tikzpicture}[scale=.4]
\draw [step=1cm,gray!60!white, very thin] (1,0) grid (10,10);
\foreach \x in {1,...,10}
  \draw (\x,-1pt) node[anchor=north] {$\x$};

\draw[->][black] (1,0) -- (10.1,0);
\draw[->][black] (1,0) -- (1,10.25);

\filldraw[red] (6,1) circle (2pt) (2,2) circle (2pt) (3,3) circle (2pt) (9,4) circle (2pt) (10,5) circle (2pt) (4,6) circle (2pt) (5,7) circle (2pt) (8,8) circle (2pt) (1,9) circle (2pt) (2,10) circle (2pt) ;

\filldraw[blue] (6,2) circle (2pt) (2,3) circle (2pt) (6,3) circle (2pt) (6,4) circle (2pt) (9,5) circle (2pt) (3,6) circle (2pt) (6,6) circle (2pt) (9,6) circle (2pt) (4,7) circle (2pt) (6,7) circle (2pt) (6,8) circle (2pt) (9,8) circle (2pt) (2,9) circle (2pt) (3,9) circle (2pt) (4,9) circle (2pt) (6,9) circle (2pt) ;
\end{tikzpicture}
\captionsetup{justification=centering}
\caption{$\GR$}
    \label{fig:gredy}
    
\end{subfigure}
\caption{$\GGR$(on level 1) vs $\GR$ on a sequence $(6,2,3,9,10,4,5,8,1,2)$. There are three groups $\{[1 \dots 4], [4 \dots 7], [7 \dots 10] \}$ in level 1, recursively we can define groups within each group. Red points are the searched keys. (a) Light blue points are added at the boundary of a group $g=g_1(x_t)$. Black points are added to make $\GGR$ arborally satisfied with other groups in level 1 (b) Blue points are added to make $\GR$ arborally satisfied.}
\end{figure}
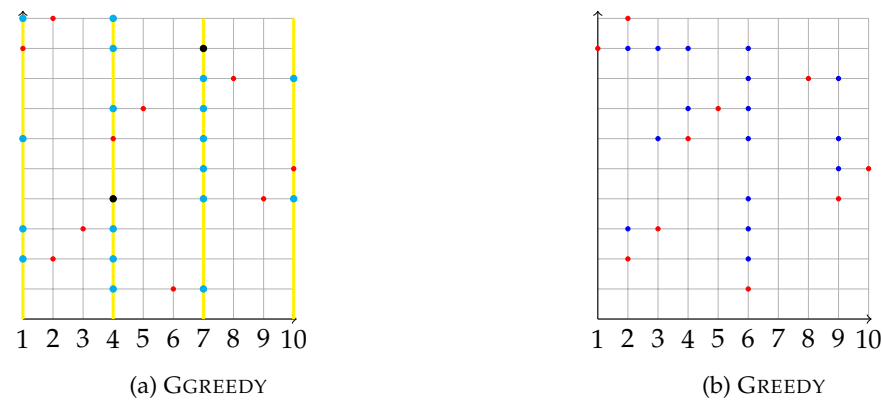

We will now try to provide some intuition (motivation) behind the $\GGR$ algorithm. While analyzing the $\GR$ algorithm, we observed that if we can partition the keys into groups (containing consecutive keys) and treat these groups as keys, then we can apply the $\GR$ algorithm to this new set of keys. We can then recursively do this within each group, making the analysis easy.

As an example,  we can divide the key space into three groups, say from $[1,\dots,n/3],[n/3+1,\dots,2n/3]$ and $[2n/3+1,\dots,n]$ which we can represent as keys $k_1$, $k_2$ and $k_3$. We can now run the $\GR$ algorithm on this set of keys. When a key in the range $[n/3+1,\dots,2n/3]$ is searched at time $t$, we can assume that the key $k_2$ is searched and perform the $\GR$ algorithm accordingly on the keys $k_1$, $k_2$ and $k_3$. We can recursively do this procedure inside each group. To separate the groups and to ensure that the resultant point set is \emph{arborally satisfied}, we add points at the boundaries of the groups.

One can observe that the recursive partition of the keys corresponds to a \emph{group access tree}.  An astute reader can see that $\GGR$ is not a single algorithm but a family of BST algorithms, which depends on the group access tree.

We are now ready to define the $\GGR$ algorithm. Pick a group access tree $\tset$. Let $g$ be a group at any level in the group access tree. The group $g$ contains a set of consecutive keys at the leaves of $\tset$. The group $g$ has two end keys, which we denote as the boundary keys of a group.


\begin{definition}(Boundary of a group)

\noindent Let $g$ be a group that contains consecutive keys $(a,a+1,a+2,\dots,b)
$, then $\LE(g) = a$ is called the left boundary  key 
of the group $g$. Similarly, $\RI(g) = b$ is called the right boundary key. 
Together, they are called the boundary keys of the group $g$.
\end{definition}

In $\GR$, we access a search key $x_t$ at time $t$, but in $\GGR$, when a key is searched at time $t$, we access a group per level of the group access tree $\TT$. We define the access of a group as follows:

\begin{definition}(Accessed group at time $t$ in level $j$)
  A group $g$ is said to be accessed at time $t$ in level $j$ if $g=g_{j}(x_t)$.  
\end{definition}

Note that the searched key $x_t$ lies in the interval of $g$ at time $t$ in level $j$.

\noindent While accessing a key in $\GR$ at time $t$, the algorithm touches keys to form an arborally satisfied set. Similarly, in $\GGR$, when a group is accessed at time $t$, we might touch other groups. We define a touch group in $\GGR$ as follows:

\begin{definition}(Touched group at time $t$ in level $j$)
    A group $g$ is said to be touched at time $t$ in level $j$ if a boundary key(s) of $g$ is touched.
\end{definition}
\begin{remark}
    An accessed group is also a touch group at time $t$ in level $j$ where both the boundary keys are touched.
\end{remark}
Similar to the definition of \emph{unsatisfied rectangle} for $\GR$, we define an \emph{unsatisfied group} in $\GGR$ as follows:
\begin{definition}(Unsatisfied group at time $t$ in level $j$)
    A group $g_1$ is said to be unsatisfied at time $t$ in level $j$ if the boundary key(s) of $g_1$ form an unsatisfied rectangle with the boundary keys of $g=g_j(x_t)$ at time $t$. We denote the unsatisfied group by $\square g_1g$.
\end{definition}

Let $g_1$ be a group at time $t$ in level $j$ which is unsatisfied when the group $g$ is accessed. We touch $g_1$ at time $t$ in level $j$ to make it an \emph{arborally satisfied} group.

Let us now informally describe the $\GGR$ algorithm. When a key $x_t$ is accessed at time $t$, we access one group per level in the group access tree $\tset$. While accessing the group $g=g_j(x_t)$ in level $j$, we find all the unsatisfied groups in level $j$ which lie inside the group $g_{j-1}(x_t)$ and make them arborally satisfied. We recursively do this for level $j+1$ and so on (See \Cref{fig:ggreedyexe}). The $\GGR$ algorithm can be viewed as if we are applying the $\GR$ algorithm on groups at each level of the group access tree $\tset$. Hence the name, $\GR$ on groups or $\GGR$.

\begin{figure}[htp!]
  \centering

  \includegraphics[trim={0 120 10 70},clip,scale=.4]{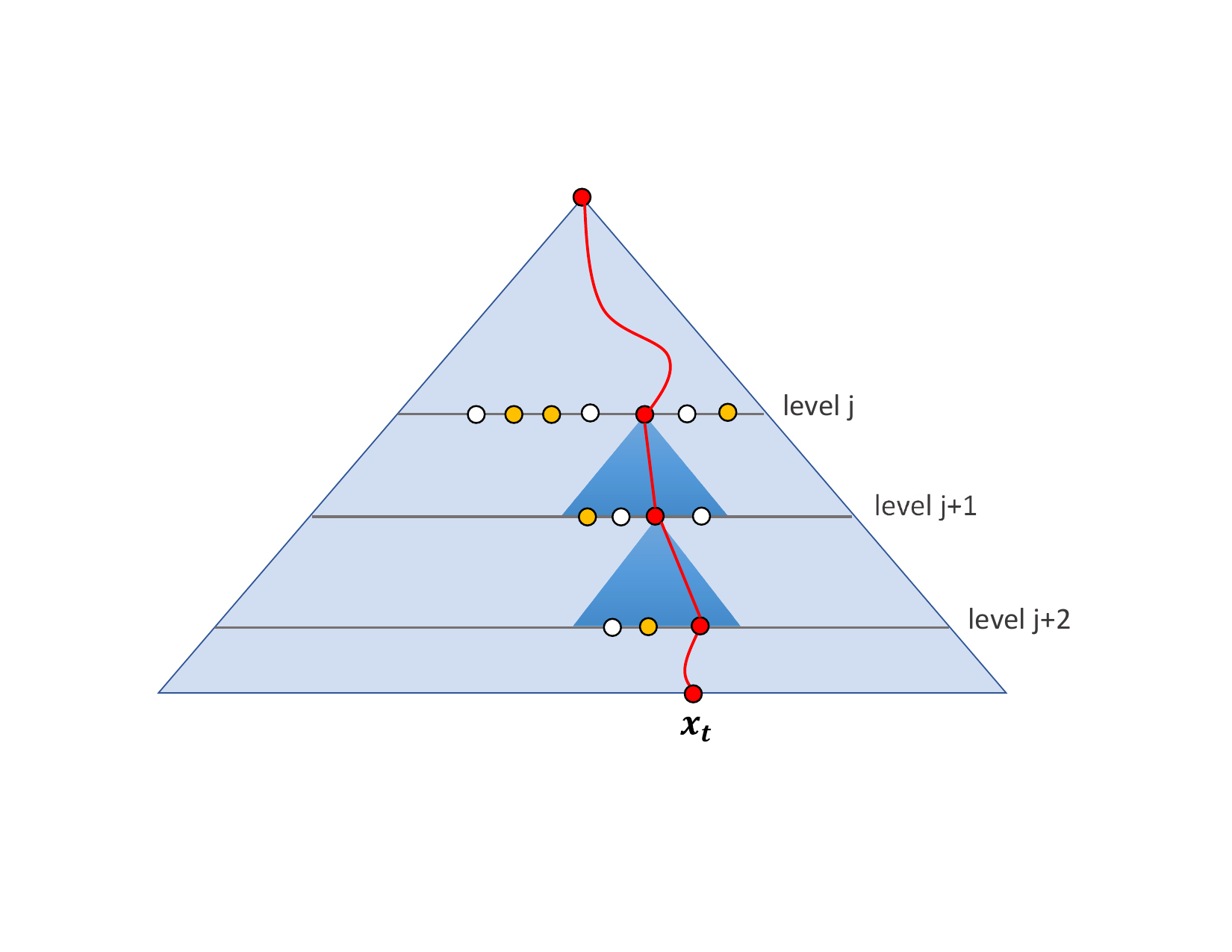}
  \caption{An example when $\GGR$ accesses $x_t$. Red nodes are the accessed groups. Orange nodes are the touched groups. White nodes denote the non-touched groups.} 
  \label{fig:ggreedyexe}
\end{figure}

We now describe the $\GGR$ algorithm formally (See \Cref{alg:GGR}).

\begin{minipage}[t]{0.98\linewidth}
  \begin{minipage}[t]{0.44\linewidth}
  \begin{algorithm}[H]
\caption{Touchgroups($A$,$g$,$j$)}
Touch $g$ in level $j$\;
\ForEach {$\square g_1g$ in $A$}
{
	Touch $g_1$ in level $j$\;
}
\end{algorithm}
\end{minipage}
  \begin{minipage}[t]{0.55\linewidth}
    
    \begin{algorithm}[H]
\caption{Processing of $\GGR$ at time $t$}
\label{alg:GGR}
\ForEach{$j=1$ to $k$}
{
Let $A_j$ be the set of all arborally unsatisfied groups in level $j$ when accessing $g=g_j(x_t)$\;
Touchgroups$(A_j,g,j)$\;
}
\end{algorithm}
  \end{minipage}
\end{minipage}

In the above algorithm, at iteration $j$, we first add points at $\LE(g_{j}(x_t))$ and $\RI(g_{j}(x_t))$ (same as touching group $g_{j}(x_t)$). Then, we process all the arborally unsatisfied groups with one endpoint as $g_j(x_t)$ and the other endpoint inside the group $g_{j-1}(x_t)$ (See \Cref{fig:mgreedy} for the execution of $\GGR$ on level 1). 

We will now try to bound the number of groups touched by $\GGR$ in iteration $j$ of Algorithm \ref{alg:GGR}. Let us give it a special notation:

\begin{definition}
	Let $\TT_j(t)$ be the set of groups touched by $\GGR$ in the $j$-th iteration of Algorithm \ref{alg:GGR} at time $t$.  
\end{definition}

Touching a group is the same as touching the group's boundary key(s). Therefore, we consider the boundary keys of the group for the rest of this section to provide better-detailed proofs. We now show some essential properties of $\TT_j(t)$.

\begin{lemma}
\label{lem:iterj}
	$\TT_j(t)$ only contains points that are at the boundaries of level $j$ groups that lie inside $g_{j-1}(x_t)$.
\end{lemma}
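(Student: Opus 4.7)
The plan is to reduce the lemma to the claim that every group touched by $\GGR$ in iteration $j$ at time $t$ is contained in $G := g_{j-1}(x_t)$. The touched groups comprise the accessed group $g := g_j(x_t)$, which trivially satisfies $g \subseteq G$ by the hierarchical partition structure, together with the unsatisfied groups in $A_j$; so it suffices to prove $g_1 \subseteq G$ for every $g_1 \in A_j$.

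I would argue the contrapositive: if a level-$j$ group $g_1$ is not contained in $G$, then $g_1$ is already arborally satisfied and hence $g_1 \notin A_j$. Since $\pi_j$ refines $\pi_{j-1}$, the hypothesis $g_1 \not\subseteq G$ forces $g_1 \cap G = \emptyset$; assume WLOG that $g_1$ lies entirely to the right of $G$, so every boundary key of $g_1$ exceeds $\RI(G)$ while every boundary key of $g$ is at most $\RI(G)$. The crux is that in iteration $j-1$ of the same time step $t$, the algorithm already touched $G$ itself, inserting the point $p^* := (\RI(G), t)$ into the point set before iteration $j$ begins. For any candidate rectangle $R$ with defining corners $(k, t)$ and $(k_1, t_1)$, where $k$ is a boundary of $g$, $k_1$ is a boundary of $g_1$, and $t_1 \leq t$, the $x$-span of $R$ contains $\RI(G)$ and the top edge of $R$ lies at $y = t$. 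In the generic case $k < \RI(G)$, the point $p^*$ lies on the top edge strictly inside the $x$-span, is distinct from both defining corners, and certifies that $R$ is satisfied.

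The main obstacle is the edge case $\RI(g) = \RI(G)$ (when $g$ is the rightmost level-$j$ child of $G$), since then $p^*$ coincides with the corner $(\RI(g), t)$ of any rectangle using $\RI(g)$ and cannot itself witness satisfiedness. I would resolve this using the shared-boundary convention evident in Figure~\ref{fig:mgreedy} (where the level-$1$ groups are written $[1..4], [4..7], [7..10]$): adjacent groups share a boundary key, so $p^*$ simultaneously equals the boundary point of the level-$(j-1)$ neighbor to the right of $G$ (and of its leftmost level-$j$ subgroup), collapsing the problematic rectangle into a degenerate zero-width rectangle that is trivially discarded, or placing a copy of the shared-boundary point strictly on a non-corner edge of $R$. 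Candidate rectangles that use $\LE(g)$ as the corner on $g$'s side are always handled by the generic argument since $\LE(g) < \RI(g) \leq \RI(G)$, so in every subcase $g_1 \notin A_j$, completing the contrapositive and therefore the lemma.
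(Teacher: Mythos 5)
You have inverted which half of the lemma is trivial and which half carries the content, and in doing so you skip the paper's actual argument entirely. The paper treats the containment ``touched points lie inside $g_{j-1}(x_t)$'' as immediate from the construction of $\TT_j(t)$ (iteration $j$ only processes level-$j$ groups inside $g_{j-1}(x_t)$), and spends the whole proof on the other half: that no point is ever added at a key \emph{strictly inside} a level-$j$ group. This is not definitional, because the points that iteration $j$ must add are driven by point-level unsatisfied rectangles $\square x_t q$ where $q$ may a priori sit at an interior key of some level-$j$ group $\GG_2$ not containing $x_t$; the lemma asserts this never happens, and it is exactly what Corollary~\ref{cor:nottouch} and the arboral-satisfiability proof (Lemma~\ref{lem:GGarborally}) later rely on. The paper proves it by a minimal-counterexample induction over time: take the earliest time $t_1$ at which a point at a column strictly inside $\GG_2$ is added while the search key lies outside $\GG_2$; the witness point $r$ inside $\GG_2$ that creates the unsatisfied rectangle must, by minimality, have been added at a time $t_2$ when the search was inside $\GG_2$, but at time $t_2$ the boundary keys of $\GG_2$ were also touched, so the rectangle $\square x_{t_1} r$ was already satisfied --- contradiction. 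Your proposal contains no counterpart to this argument; declaring the claim ``trivially true by the definition of touching'' begs the question of why the greedy point-level process never needs interior points.

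The half you do argue --- that a level-$j$ group $g_1$ with $g_1 \cap g_{j-1}(x_t) = \emptyset$ cannot be in $A_j$ --- is the half the paper takes as built into the algorithm, and your argument for it is itself shaky at the edges. The resolution of the case $\RI(g_j(x_t)) = \RI(g_{j-1}(x_t))$ via a ``shared boundary key'' convention is read off a figure and contradicts the formal definition of the hierarchical partition as a partition of $[n]$ into disjoint intervals; and when $g_1$ sits strictly inside a different level-$(j-1)$ group $G'$, the point $(\RI(g_{j-1}(x_t)),t)$ added in iteration $j-1$ need not witness satisfaction of a rectangle reaching a boundary key of $g_1$, so you would again need a time-induction argument of the very kind the paper gives. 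In short: the proposal proves (imperfectly) the routine containment statement and omits the lemma's substantive claim, so it does not establish the lemma.
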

\begin{proof}
\begin{figure}[hpt!]
\centering

   \begin{tikzpicture}[scale=0.5]
\draw[->][black] (0,-1) -- (13.1,-1);
\draw[->][black] (0,-1) -- (0,8.1);

\filldraw[green!30!white] (2,7) rectangle (10,5);
\filldraw[yellow!50!white] (4,4) rectangle (10,2);
\draw[olive!60!white,very thick] (1,0)--(1,8) (5,0)--(5,8);
\draw[orange,very thick] (9,0)--(9,8) (13,0)--(13,8);
\filldraw[red] (2,7) node[anchor=south]{$x_t$} circle (2pt) (4,4)node[anchor=south]{$x_{t_1}$} circle (2pt) (12,2) node[anchor=south]{$x_{t_2}$} circle (2pt) ;

\filldraw[blue] (10,2) node[anchor=west]{$r$} circle (2pt) (10,5) node[anchor=west]{$q$} circle (2pt);
\filldraw[cyan] (9,2) circle (3pt) (13,2) circle (3pt);
\filldraw[gray!75!white](10,4) node[anchor=west]{$p_1$} circle (2pt) (10,7) node[anchor=west]{$p
$} circle (2pt);
\draw [<->,,black,thick] (1,0)--(5,0); 
\draw [<->,,black,thick] (9,0)--(13,0); \draw[black](3,0)node[anchor=north]{$\GG_1$} (11,0)node[anchor=north]{$\GG_2$};
\end{tikzpicture}

\caption{$\GGR$ adds only boundary points.}
    \label{fig:endpoints}

\end{figure}
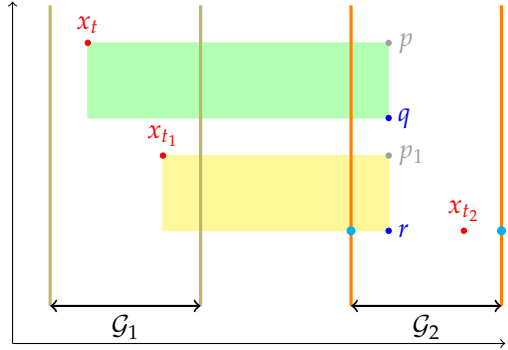

From the construction of $\TT_j(t)$, it is trivial to see that all its points lie inside $g_{j-1}(x_t)$. Let us now show that they will lie only at the boundaries of level $j$ groups.

Let $\GG_1$ and $\GG_2$ be two groups at level $j$ that lie inside $g_{j-1}(x_{t})$. Let us assume $x_t \in \GG_1$. We will now show that $\TT_j(t)$ contains points only at the boundary of $\GG_2$ but not inside it. For contradiction, assume that $\TT_j(t)$ contains a point $p$ that lies inside $\GG_2$ (See \Cref{fig:endpoints}). This implies that $\GGR$ encountered an unsatisfied rectangle, say $\square x_tq$. Let $x_{t_1}$ be the smallest $t_1 <t$ such that $\GGR$ added a point, say $p_{1}$, at $(p_1.x,t_1)$ and $x_{t_1}$ lies outside the group $\GG_2$. This implies that $\GGR$ encountered an unsatisfied rectangle $\square x_{t_1}r$ at time $t_1$. Note that the point $r$ must exist.  If $r.y = t_2$, then by construction $x_{t_2}$ lies in the group $\GG_2$. But at time $t_2$, $\GGR$ would have added points at the boundaries of $\GG_2$. This implies that $\GGR$ will not add the point $p_1$ at time $t_1$ as the rectangle $\square x_{t_1}r$ is already satisfied by one of the endpoints of the group. This leads to a contradiction proving the lemma.
\end{proof}

 An immediate corollary of the above lemma is:

\begin{corollary}
\label{cor:nottouch}
	Let $p$ be a point that lies strictly inside $g_{j-1}(x_t)$. If $\GGR$ adds $p$ at time $t$ then $x_t$  lies in $g_j(p)$.
\end{corollary}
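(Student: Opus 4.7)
The approach is to deduce the corollary directly from \Cref{lem:iterj} by examining which iteration $j'$ of \Cref{alg:GGR} is responsible for adding $p$. Since $p$ is added at time $t$, by \Cref{lem:iterj} it must lie at a boundary key of some level-$j'$ group contained in $g_{j'-1}(x_t)$. The goal is to show that $j' \geq j$, and then in each of the two subcases conclude that $p$ and $x_t$ share the same level-$j$ group.

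The first step is to rule out $j' < j$. Because the partitions $\pi_0, \pi_1, \ldots, \pi_k$ are nested, every boundary key of a level-$j'$ group is also a boundary key at every finer level, so in particular it coincides with a boundary of some level-$(j-1)$ group. The only level-$(j-1)$ boundaries that can lie inside the specific level-$(j-1)$ group $g_{j-1}(x_t)$ are $\LE(g_{j-1}(x_t))$ and $\RI(g_{j-1}(x_t))$ themselves. Since $p$ is assumed strictly inside $g_{j-1}(x_t)$, none of these are candidates for $p$, forcing $j' \geq j$.

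Next, if $j' > j$, then by nesting $g_{j'-1}(x_t) \subseteq g_j(x_t)$, and \Cref{lem:iterj} places $p$ inside $g_{j'-1}(x_t)$, hence inside $g_j(x_t)$; this gives $g_j(p) = g_j(x_t)$ and so $x_t \in g_j(p)$. If $j' = j$, then $p$ is a boundary key of some level-$j$ subgroup $\GG$ of $g_{j-1}(x_t)$, and since $p$ is strictly inside $g_{j-1}(x_t)$, this boundary key cannot coincide with $\LE(g_{j-1}(x_t))$ or $\RI(g_{j-1}(x_t))$. It is therefore a boundary shared between two adjacent level-$j$ subgroups of $g_{j-1}(x_t)$, and I would argue that one of these two adjacent subgroups must be $g_j(x_t)$, so that $p$ can be legitimately assigned to $g_j(x_t)$, giving $x_t \in g_j(p)$.

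The main obstacle will be justifying this last claim in the $j' = j$ case when $\GG \neq g_j(x_t)$. Here $\GGR$ touches the accessed group $g_j(x_t)$ together with any level-$j$ group $g_1$ that forms an unsatisfied rectangle $\square g_1 g_j(x_t)$, and I would need to verify that the specific boundary key of such an unsatisfied $g_1$ that is actually placed on the sweep line at time $t$ is the one \emph{facing} $g_j(x_t)$. This should follow from the same geometric mechanism used in the proof of \Cref{lem:iterj}: any earlier point that generates an unsatisfied rectangle with $g_j(x_t)$ at time $t$ must lie at the shared boundary with $g_j(x_t)$, because otherwise the point in between would already have been added at an earlier time (exactly the contradiction argument used in \Cref{lem:iterj}'s proof).
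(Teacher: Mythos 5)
Your reduction to \Cref{lem:iterj} and the split according to the iteration $j'$ that adds $p$ is exactly the intended route, and your treatment of $j'<j$ (boundary keys propagate to finer levels, and distinct level-$(j-1)$ groups are disjoint) and of $j'>j$ (then $p\in g_{j'-1}(x_t)\subseteq g_j(x_t)$) is correct. The gap is the case $j'=j$, and the patch you sketch does not close it. Under the paper's definitions the level-$j$ groups inside $g_{j-1}(x_t)$ form a partition into disjoint intervals, so adjacent groups do \emph{not} share a boundary key: a touched boundary key of an unsatisfied sibling group $g_1\neq g_j(x_t)$ belongs to $g_1$, and there is no way to ``assign'' it to $g_j(x_t)$. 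Moreover such points genuinely are added at time $t$ strictly inside $g_{j-1}(x_t)$ --- they are precisely the points $p_2,\dots,p_{i-1}$ appearing in the proof of \Cref{lem:GGarborally} --- and for them $x_t\notin g_j(p)$. So no argument about which boundary of $g_1$ is ``facing'' $g_j(x_t)$ can rescue the conclusion here; the claimed implication simply fails for boundary keys of sibling level-$j$ groups.

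The way out is to observe that the corollary is only ever invoked for points that are not boundary keys of their own level-$j$ group (in \Cref{lem:GGarborally} the point $q$ lies strictly inside $g_j(x_t)$), i.e., the hypothesis should be read as (or strengthened to) ``$p$ is not a boundary key of $g_j(p)$''. With that hypothesis the case $j'=j$ disappears for the same reason as $j'<j$: by \Cref{lem:iterj}, iteration $j$ only adds boundary keys of level-$j$ groups, and a boundary key of a level-$j'$ group with $j'\le j$ is also a boundary key of $g_j(p)$. Only $j'>j$ survives, which you already handled, giving $g_j(p)=g_j(x_t)\ni x_t$. The paper presents the corollary as immediate from \Cref{lem:iterj} and implicitly relies on this reading; your write-up should make that restriction to interior keys of level-$j$ groups explicit rather than attempt to prove the $j'=j$ case, which as stated is not true.
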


We will now show that $\GGR$ outputs an arborally satisfied set. Let us assume that we are processing $x_t$. We can visualize $\GGR$ as running $\GR$ first on the level 1 group. This execution is the same as $\GR$. Then, we add points at the boundary of  $g_1(x_t)$. This is the first step when we deviate from the $\GR$ algorithm. Once we have added the boundary points, we again run $\GR$ on level 2 groups, and the process continues. One may feel that touching the boundary keys may create some unsatisfied rectangles. But in the following lemma, we show that this is not the case.
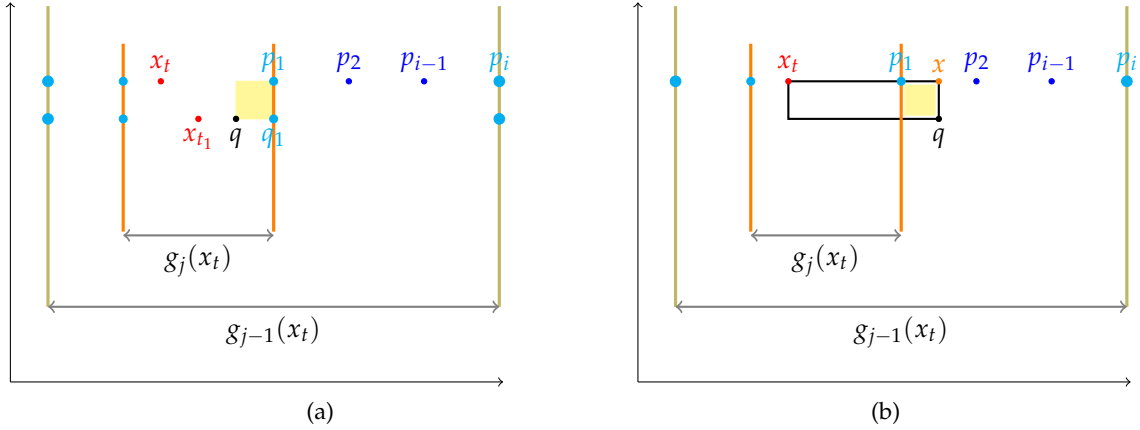
\begin{figure}[hpt!]
\centering
\begin{subfigure}{0.5\textwidth}
   \begin{tikzpicture}[scale=.5]
\draw[->][black] (0,-2) -- (13.1,-2);
\draw[->][black] (0,-2) -- (0,8.1);

\filldraw[yellow!50!white] (7,6) rectangle (6,5);
\draw[olive!60!white,very thick] (1,0)--(1,8) (13,0)--(13,8);
\draw[orange,very thick] (3,2)--(3,7) (7,2)--(7,7);
\filldraw[red] (4,6) node[anchor=south]{$x_t$} circle (2pt) (5,5) node[anchor=north]{$x_{t_1}$} circle (2pt) ;
\filldraw[black] (6,5) node[anchor=north]{$q$} circle (2pt);

\filldraw[blue] (9,6) node[anchor=south]{$p_2$} circle (2pt) (11,6) node[anchor=south]{$p_{i-1}$} circle (2pt);
\filldraw[cyan] (7,6) circle (3pt)  node[anchor=south]{$p_1$} (3,6) circle (3pt) (1,6) circle (4pt) (13,6)  node[anchor=south]{$p_i$} circle (4pt) (7,5) circle (3pt)  node[anchor=north]{$q_1$} (3,5) circle (3pt) (1,5) circle (4pt) (13,5) circle (4pt);

\draw [<->,gray,thick] (3,1.9)--(7,1.9); \draw[black](5,1.9)node[anchor=north]{$g_j(x_t)$};
\draw [<->,gray,thick] (1,0)--(13,0); \draw[black](7,0)node[anchor=north]{$g_{j-1}(x_t)$};
\end{tikzpicture}
\caption{}
    \label{fig:ggreedyarbo2}
    
\end{subfigure}
\begin{subfigure}{0.4\textwidth}
     \begin{tikzpicture}[scale=.5]
\draw[->][black] (0,-2) -- (13.1,-2);
\draw[->][black] (0,-2) -- (0,8.1);

\draw[black,thick] (4,6) rectangle (8,5);
\filldraw[yellow!50!white] (7.1,5.9) rectangle (7.9,5.1);
\draw[olive!60!white,very thick] (1,0)--(1,8) (13,0)--(13,8);
\draw[orange,very thick] (3,2)--(3,7) (7,2)--(7,7);
\filldraw[red] (4,6) node[anchor=south]{$x_t$} circle (2pt) ;
\filldraw[black] (8,5) node[anchor=north]{$q$} circle (2pt);
\filldraw[orange] (8,6) node[anchor=south]{$x$} circle (2pt);

\filldraw[blue] (9,6) node[anchor=south]{$p_2$} circle (2pt) (11,6) node[anchor=south]{$p_{i-1}$} circle (2pt);
\filldraw[cyan] (7,6) circle (3pt)  node[anchor=south]{$p_1$} (3,6) circle (3pt) (1,6) circle (4pt) (13,6)  node[anchor=south]{$p_i$} circle (4pt);

\draw [<->,gray,thick] (3,1.9)--(7,1.9); \draw[black](5,1.9)node[anchor=north]{$g_j(x_t)$};
\draw [<->,gray,thick] (1,0)--(13,0); \draw[black](7,0)node[anchor=north]{$g_{j-1}(x_t)$};
\end{tikzpicture}
    \caption{}
    \label{fig:ggreedyarbo1}
\end{subfigure}
\caption{Illustration of the proof of \Cref{lem:GGarborally}. Figure (a) depicts the case when the point $q$ is inside $g_j(x_t)$. Figure (b) depicts the case when the point $q$ is outside $g_j(x_t)$. }
\label{fig:ggreedyarbo}
\end{figure}
\begin{lemma}\label{lem:GGarborally}
$\GGR$ outputs an arborally satisfied set on any input representing $X$.
\end{lemma}
\begin{proof}
We will show that $\GGR$ outputs an arborally satisfied set for each prefix of the point set representing $X$. For the base case, $\GGR$ is arborally satisfied at time $t=1$ trivially. Using the induction hypothesis, let us assume that $\GGR$ outputs an arborally satisfied set at time $t-1$. We will show that $\GGR$ outputs an arborally satisfied set at time $t$.

To this end, we will use induction a second time to show that $\GGR$ does not create any arborally unsatisfied rectangle due to newly added points in $\TT_j(t)$ where $j \in [1 \dots k]$. 
 We skip the base case as it is similar to the argument we present below. Let us assume for contradiction that there exists some unsatisfied rectangle after iteration $j$. 

Let $p_1,p_2,\dots,p_{i-1}$ be the points in $\TT_j(t)$ to the right of $x_t$ where $\RI(g_j(x_t)) = p_1$. Let us also assume that $p_{i} = \RI(g_{j-1}(x_t))$. If $j=1$, we can assume $p_{i}$ to be a dummy key $n+1$.   
  Since $\GGR$ outputs an arborally satisfied set at time $t-1$, one of the endpoints of the arborally unsatisfied rectangle is a newly added point, say $p_k$. There are two cases:
\begin{enumerate}
	\item $k = 1$
	
	$p_1$ is a boundary point. Let us assume that there is an unsatisfied rectangle $\square p_1q$ such that $q$ lies in $g_j(x_t)$ but is not the left boundary point of $g_j(x_t)$. By \Cref{cor:nottouch}, point $q$ can be touched only while processing a search key that lies in $g_j(x_t)$. Let us assume that the point $q$ is added at time $t_1$ where $t_1 <t$. But then, we would have added another point, say $q_1$, at the right boundary of $g_j(x_t)$ even at time $t_1$. Thus, $\square p_1q$ is already satisfied due to $q_1$ (See \Cref{fig:ggreedyarbo}(a)). If point $q$ is the left boundary key of $g_j(x_t)$ or lies outside $g_j(x_t)$ to the left of $p_1$, then $\square p_1q$ is satisfied by $x_t$ itself.

	Let $q$ be the nearest point to the right of $p_1$ such that $\square p_1q$ is an unsatisfied rectangle. Thus, $\square p_1q$ does not contain any point.  We will now show that even $\square x_tq$ does not contain any point and therefore $\GGR$ will add a point at the corner of $\square x_tq$ at time $t$. This will lead to a contradiction. 
	 
	 Assume that $q$ was added at time $t_1$ due to search $x_{t_1}$. $x_{t_1}$ cannot lie in $g_j(x_t)$ as otherwise $\square p_1q$ will be satisfied by $right(g_j(x_{t_1}))$. In fact, from time $t_1$ to $t-1$ no search can lie inside $g_j(x_t)$ because the right boundary key of $g_j(x_{t'})$ ($t_1\le t' \le t-1$) will make $\square p_1q$ arborally satisfied. As there are no searches inside $g_j(x_t)$ from $t_1$ to $t-1$ no points are added inside $g_j(x_t)$ during this time interval. Using the fact that there are no points in $\square p_1q$, we claim that  $\square x_tq$ is also unsatisfied at time $t$ and $\GGR$ will add a point in the corner of $\square x_tq$.

	\item $k \neq 1$
	
	Let us assume there is an unsatisfied rectangle $\square p_kq$. If $q$ lies in $g_j(x_t)$, then $\square p_kq$ is already satisfied by $p_1$. So,  $q$ lies outside $g_j(x_t)$. If $q$ lies to the left of $p_k$, then the unsatisfied rectangle for which $p_k$ was added at time $t$ is already satisfied by $q$. Therefore $q$ lies to the right of $p_k$. As in the above point, we can now show that in this case also, $\square x_tq$ will be an unsatisfied rectangle.

\end{enumerate}

This completes the proof of the lemma.
\end{proof}



In the next section, we generalize the \emph{access lemma}, which have been proved for Splay tree and $\GR$ \cite{tarjan1985sequential,fox11}.  Let us define a notation before we move to the next section. 

\begin{definition}
	Let $\HTT_j(t)$ denote the amortized number of groups touched by $\GGR$ in the $j$-th iteration of \Cref{alg:GGR} at time $t$.
\end{definition}

\section{The Group Access Lemma}\label{sec:newgroupaccess}
In this section, we introduce the \emph{group access lemma}, which is a generalization of the 
{\em access lemma} and show that the $\GGR$ algorithm satisfies it.

Consider the group access tree $\tset$. When a BST algorithm $\aset$ accesses $x_t$ at time $t$, after the access, it adjusts itself to $\aset'$. Let $g_1=g_j(x_t)$ be the group that contains  $x_t$. Let $\Phi_t^j$ be a potential function that depends on the state of the algorithm with respect to the groups at level $j$ in $\tset$. Define $\Phi_t = \sum_j \Phi^j_t$. Define the group access lemma as follows:

\begin{definition}(Group Access Lemma)
 A BST algorithm $\aset$ satisfies the group access lemma if the amortized cost to access $x_t$  at time $t$ in level $j$ is: $$\HTT_j(t)\le O\left(\log \frac{W^j}{w_{t-1}(g_1)} \right)+ \Phi_{t-1}^j - \Phi_{t}^j$$

\end{definition}
The amortized cost of the algorithm $\aset$ at time $t$ can be denoted as:

    $$\HTT(\aset,x_t)=\sum_j \HTT_j(t)$$
and the amortized cost of the algorithm $\aset$ on the access sequence $X$ can be defined as:
  $$\HTT(\aset,X)=\sum_t \HTT(\aset,x_t).$$
 
With the definition of the group access lemma in hand, we will now show that $\GGR$ satisfies the group access lemma.

\subsection{\GGR satisfies the group access lemma }

The proof of the group access lemma for $\GGR$ is similar to the proof of \emph{access lemma} derived in \cite{chalermsook2015self}. 

Let $x_t$ be the searched key at time $t$ and let $g"=g_{j-1}(x_t)$. Remember that  $\dg(g")$ denotes the child groups of $g"$. Let $g_1=g_j(x_t)$. Following \cite{fox11}, we define the {\emph neighborhood} of a group in $\dg(g")$.  For $g \in \dg(g")$, let $\rho(g,t)$ be the last access of group $g$ at or before time $t$. Let $g'  \in \dg(g")$ be the first group to the left of $g$ such that $\rho(g',t) \ge \rho(g,t)$. If such a $g'$ does not exist, then we set $g'$ to be the leftmost group in $\dg(g")$. The \emph{left neighborhood} of $g$ at time $t$ is all the groups strictly between $g$ and $g'$. The left neighborhood is denoted by $\nset_l(g,t)$. Similarly, define the \emph{right neighborhood}  of $g$ and denote it as $\nset_r(g,t)$. The neighborhood of $g$ at time $t$ is $\nset(g,t)=\nset_l(g,t) \cup \{g\}\cup \nset_r(g,t)$. 
 
 Let $w^j(\nset(g,t))$ be the sum of the weights of the groups in the neighborhood of $g$ at time $t$ in level $j$. Let $\Phi_t^j(g)=\log(w^j(\nset(g,t)))$ denote the potential of $g$. We abuse this notation and extend it to a set of groups. Thus, $\Phi_t^j(\dg(g"))=\sum_{g \in \dg(g")}\log(w^j(\nset(g,t)))$.

Let $\TT_j(t)$ be the groups touched in level $j$ among $\dg(g")$ at time $t$. The amortized cost $\HTT_j(t)$ at time $t$ in level $j$ is given by
$$\HTT_j(t)=\TT_j(t)+\Phi_{t-1}^j(\dg(g"))-\Phi_t^j(\dg(g"))$$
Note that here the change in potential indicates the change in the neighborhood of the groups in $\dg(g")$ from time $t-1$ to time $t$. In the rest of the section, we will bound $\TT_j(t)$. To this end, we first bound the number of groups touched to the left of $g_1$ at time $t$. In a similar way, we can bound the number of groups touched to the right of $g_1$ at time $t$.

Let $Y=\{g^1,g^2,\dots,g^q\}$ be the groups to the left of $g_1$ that are touched in $\dg(g")$ where $g^1$ lies to the left of $g_1$, $g^2$ lies to the left of $g^1$ and so on. Thus, before time $t$, $g_1$ lies in the neighborhood of $g^1$, $g^1$ lies in the neighborhood of $g^2$, and so on. 

Let $w_0=(w^j(\nset(g_1,t-1)))$.
Let $a_1$ be the leftmost group in $Y$ such that $w_{t-1}(a_1)\le 2w_0$. In general, $a_i$ is the leftmost group in $Y$ such that $w_{t-1}(a_i)\le 2^iw_0$. It is easy to observe that the set $\{a_1, a_2, \dots\}$ can have atmost $\log\left(\frac{W^j}{w_0}\right)$ distinct groups.

Let $b$ be a group between $a_i$ and $a_{i+1}$ such that $w^j(\nset(b,t))> 2^{i-1}w_0$. We call $b$ a \emph{heavy group}.
There can be atmost 3 heavy groups between $a_i$ and $a_{i+1}$ as otherwise $w_{t-1}(a_{i+1})>2^{i+1}w_0$, which is a contradiction.

Now, we count the number of light (not heavy) groups between $a_i$ and $a_{i+1}$. Let $b'$ be a light group, then $w^j(\nset(b',t))\le 2^{i-1}w_0$. Also, we know that $w^j(\nset(a_{i+1},t-1))
> 2^iw_0$. Thus the ratio $r_i=w^j(\nset(g^{i+1},t-1))/w^j(\nset(g^i,t))\ge 2$, if $g^i$ is a light group, otherwise $r_i\ge 1$ for $1\le i \le q-1$. Therefore,
$$2^{\text{no. of light groups}}\le \prod_{1\le i \le q-1}r_i=\left(\prod_{1\le i \le q} \frac{w^j(\nset(g^i,t-1))}{w^j(\nset(g^i,t))}\right).\frac{w^j(\nset(g^q,t))}{w_0}$$

Taking $\log$ on both sides, we get that:

\begin{tabbing}
$\text{no. of light groups}$\= $\le$ \= $\sum_{i=1}^{q} \log \left(w^j(\nset(g^i,t-1))\right) - \sum_{i=1}^{q}\log\left(w^j(\nset(g^i,t))\right) + \log\left(\frac{w^j(\nset(g^q,t))}{w_0}\right)$ \\
 \> $=$ \> $\Phi_{t-1}^j(Y)-\Phi_t^j(Y) + \log\left(\frac{W^j}{w_0}\right)$
\end{tabbing}

Therefore,
$$|Y|\le 4\log\left(\frac{W^j}{w_0}\right)+ \Phi_{t-1}^j(Y)-\Phi_t^j(Y)$$
Similarly, we can bound the number of groups touched to the right of $g_1$. Hence, we have shown that $\GGR$ satisfies the group access lemma.

\section{Simulation Theorem}\label{sec:newggreedycomp}
In this section, we show that the cost of $\GGR$ is bounded by the cost of the group access bound on any group access tree $\tset$. This will imply that $\GGR$ satisfies \Cref{thm:simulation}.

 As $\GGR$ satisfies the group access lemma, the cost of $\GGR$ on an access sequence $X$ without the change in potential satisfies \Cref{thm:simulation}. Let $x_t$ be searched at time $t$ and let $g =g_{j-1}(x_t)$ and $g_1 =g_j(x_t)$. In this section, we will try to bound the change in the potential for $\GGR$. To this end, we divide the execution of $\GGR$ into two parts at time $t$ by introducing an intermediate step $t'$.

\begin{enumerate}
	\item $t-1$ to $t'$
	
	In the first part, we assume that $\GGR$ has added the points in $\TT_j(t)$, but the weights have stayed the same. This is an intermediate step, which we denote as $t'$. 
	
	\item $t'$ to $t$
	
	In the second part, we will change the weights of the groups. This may cause a change in the potential for some groups. 
\end{enumerate}

\subsection{From $t-1$ to $t'$}\label{sec:neighborhood}

As $\GGR$ satisfies the group access lemma, the cost at time $t'$ in level $j$ is:
\begin{equation}\label{eq:cost}
\Phi_{t-1}^j(\dg(g)) - \Phi_{t'}^j(\dg(g)) + \log\left(\frac{W^j}{w_{t-1}(g_1)}\right)\ge \HTT_j(t')
\end{equation}

\subsection{From $t'$ to $t$}

We will now try to bound the amortized cost due to the change in weights from time $t'$ to $t$.  We need to bound the change in the potential of the groups in $\dg(g)$ i.e., 
 $\Phi_{t'}^j(\dg(g)) - \Phi_{t}^j(\dg(g))$.

We want to bound the decrease in potential from time $t'$ to $t$ in level $j$. As the weight function is locally bounded, we change the weight of each group in such a way that only the weight of $g_1$ increases from $t'$ to $t$, i.e., $w_{t}(g_1) \ge w_{t'}(g_1)$. In fact, this is true for all the bounds we have proved for the group access tree $\tset$ (See \Cref{tab:inc}).   Thus, the potential of $g_1$ may decrease. In the following lemma, we bound the potential change for $g_1$.

\begin{lemma}\label{lem:potentialdec}
$\Phi_{t'}^j(g_1) - \Phi_{t}^j(g_1) + \log \left(\frac{W^j}{w_{t-1}(g_1)}\right) \ge 0 $.
\end{lemma}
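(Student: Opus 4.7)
\medskip

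\noindent\textbf{Proof plan.} The plan is to use two simple monotonicity facts about the neighborhood $\mathcal{N}(g_1, t)$: it contains $g_1$ itself, and it is contained in $\dg(g)$. These two facts, combined with the observation that the last-access times are frozen between the intermediate moment $t'$ and the real moment $t$ (so the neighborhood as a \emph{set} does not change), will immediately yield the claimed inequality.

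\medskip

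\noindent First, I would unfold the potential. By the definition given just before the lemma,
\[
\Phi_{t'}^j(g_1) = \log\bigl( w^j_{t-1}(\mathcal{N}(g_1, t')) \bigr) \quad\text{and}\quad \Phi_{t}^j(g_1) = \log\bigl( w^j_{t}(\mathcal{N}(g_1, t)) \bigr),
\]
where the only difference between the two weight functions is that some groups in $\dg(g)$ may have changed weight at time $t$ (in the regime we work in, only $w(g_1)$ increases). Since the intermediate step $t'$ was introduced so that points are added but the last-access times are the same as at $t$, we have $\mathcal{N}(g_1, t') = \mathcal{N}(g_1, t)$ as sets of groups. Let
\[
A \;=\; w^j_{t-1}(\mathcal{N}(g_1, t')), \qquad B \;=\; w^j_{t}(\mathcal{N}(g_1, t)).
\]

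\noindent Next, I would establish the two key inequalities. Since $g_1 \in \mathcal{N}(g_1, t')$ and the weight of $g_1$ at time $t'$ is $w_{t-1}(g_1)$, we have $A \geq w_{t-1}(g_1)$. Since $\mathcal{N}(g_1, t) \subseteq \dg(g)$ and the weight function at time $t$ restricted to $\dg(g)$ sums to $W^j$, we have $B \leq W^j$. Multiplying these two inequalities gives
\[
A \cdot W^j \;\geq\; B \cdot w_{t-1}(g_1),
\]
equivalently $\frac{A}{B} \cdot \frac{W^j}{w_{t-1}(g_1)} \geq 1$. Taking logarithms yields
\[
\log A - \log B + \log\!\left(\frac{W^j}{w_{t-1}(g_1)}\right) \;\geq\; 0,
\]
which is exactly $\Phi_{t'}^j(g_1) - \Phi_{t}^j(g_1) + \log\!\left(\frac{W^j}{w_{t-1}(g_1)}\right) \geq 0$.

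\medskip

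\noindent There is no major obstacle here; the argument is a two-line monotonicity calculation. The only subtle point worth flagging is the invariance $\mathcal{N}(g_1, t') = \mathcal{N}(g_1, t)$, which relies on the convention (stated in the setup of the intermediate step $t'$) that last-access times are already updated at $t'$ while weights are not, so the neighborhoods computed at $t'$ and at $t$ coincide as sets. Once this is pinned down, the rest is immediate from $g_1 \in \mathcal{N}(g_1, t) \subseteq \dg(g)$.
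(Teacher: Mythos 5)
Your proposal is correct and follows essentially the same route as the paper: lower-bound $\Phi_{t'}^j(g_1)$ by $\log w_{t-1}(g_1)$ (since $g_1$ lies in its own neighborhood and weights at $t'$ are those of time $t-1$) and upper-bound $\Phi_{t}^j(g_1)$ by $\log W^j$ (since the neighborhood is contained in $\dg(g)$), then combine. The only difference is cosmetic: you phrase it multiplicatively before taking logs, and the invariance $\nset(g_1,t')=\nset(g_1,t)$ you flag is not actually needed, since the two bounds hold for whatever the neighborhoods are at their respective times.
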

\begin{proof}
    At time $t'$, 
    $$\Phi_{t'}^j(g_1)=\log(w^j(\nset(g_1,t')))\ge \log(w_{t'}(g_1))$$
    and at time $t$,
    $$\Phi_{t}^j(g_1)=\log(w^j(\nset(g_1,t)))\le \log(W^j)$$
    Therefore,
    $$\Phi_{t'}^j(g_1) - \Phi_{t}^j(g_1)\ge \log(w_{t'}(g_1))-\log(W^j)=-\log \left(\frac{W^j}{w_{t'}(g_1)}\right)=-\log \left(\frac{W^j}{w_{t-1}(g_1)}\right)$$
\end{proof}

\begin{table}[h!]
\centering

\begin{tabular}{||c|c|c||} 
 \hline
 Bound & $t-1$ & $t$\\ [0.5ex] 
 \hline\hline
 $\sqrt{\log n}$ & 1 & $2^{\sqrt{\log n}}$ \\ 
 \hline
 k-finger & $\frac{1}{n^{1/2^j}}$ & $\frac{1}{k}$\\
 \hline
 Unified bound & $\frac{1}{(\WS_{t-1}(g_1))^2}$ & 1\\
 \hline
 Unified bound time window  & $\frac{1}{n^{1/2^j}}$ & $\frac{1}{k}$\\ [1ex] 
 \hline

\end{tabular}
\caption{Weight of $g_1$ increases from time $t-1$ to $t$.}
\label{tab:inc}
 \end{table}

 For all the other groups $g' \in \dg(g) \setminus g_1$, we do not increase the weight of $g'$  from time $t'$ to $t$. This is true for all the bounds we have shown for the group access tree $\tset$ (See \Cref{tab:dec}). Thus, $w_{t}(g') \le w_{t'}(g')$. So, the weight of only $g_1$ can increase in $\dg(g)$. However, $g_1$ does not lie in the neighborhood of any other group at time $t$ in level $j$ as it lies in $\TT_j(t)$. Therefore, for any $g' \in \dg(g) \setminus g_1$, the potential of group $g'$ can only increase. Hence, we can claim the following corollary:
\begin{corollary}\label{cor:potentai}
 In $\dg(g)$, as the potential of only $g_1$ decreases we get that $\Phi_{t'}^j(\dg(g)) - \Phi_{t}^j(\dg(g)) + \log \left(\frac{W^j}{w_{t-1}(g_1)}\right) \ge 0$.
\end{corollary}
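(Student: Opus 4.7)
The plan is to decompose $\Phi_{t'}^j(\dg(g)) - \Phi_t^j(\dg(g))$ along the individual children of $g$ and handle $g_1$ separately from the rest. The entire $\log(W^j/w_{t-1}(g_1))$ term is absorbed by $g_1$'s contribution via \Cref{lem:potentialdec}, which already tells us that $\Phi_{t'}^j(g_1) - \Phi_t^j(g_1) + \log(W^j/w_{t-1}(g_1)) \ge 0$. So I am left to verify that, for each $g' \in \dg(g) \setminus \{g_1\}$, the single-group potential difference satisfies $\Phi_{t'}^j(g') - \Phi_t^j(g') \ge 0$, after which summing over all children gives the claim.

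To establish the per-group inequality, I would first observe that the intermediate time $t'$ differs from $t$ only in the weight function: the sequence of touches is already what it will be at $t$, so the last-access values $\rho(\cdot,\cdot)$ at $t'$ and $t$ coincide. In particular, $\nset(g',t') = \nset(g',t)$ as sets, and the comparison reduces to showing that $\sum_{g'' \in \nset(g',t)} w_{t'}(g'') \ge \sum_{g'' \in \nset(g',t)} w_t(g'')$. I then check that $g_1 \notin \nset(g',t)$: since $g_1 \in \TT_j(t)$ was accessed at time $t$, we have $\rho(g_1,t)=t \ge \rho(g',t)$ for every $g' \in \dg(g)\setminus\{g_1\}$, and by the definition of the left/right neighborhood (which terminates at the first same-or-newer group) $g_1$ serves as a barrier and cannot lie inside any such $\nset(g',t)$.

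With $g_1$ excluded from $\nset(g',t)$, every remaining element $g''$ of this neighborhood satisfies $w_{t'}(g'') \ge w_t(g'')$ by \Cref{tab:dec}, since only $g_1$'s weight is permitted to grow from $t'$ to $t$. Pointwise monotonicity of the weights over a fixed index set then gives $w^j(\nset(g',t')) \ge w^j(\nset(g',t))$, equivalently $\Phi_{t'}^j(g') \ge \Phi_t^j(g')$. Combining the contribution from $g_1$ (via \Cref{lem:potentialdec}) with the nonnegative contributions from all other children yields the corollary.

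The main obstacle is really just bookkeeping the three distinct roles played by $t-1$, $t'$, and $t$, so that the neighborhood index set $\nset(g',\cdot)$ does not silently shift while weight monotonicity is being applied. Once the access pattern at $t'$ is recognized to be identical to that at $t$, and once the barrier property of $g_1$ is used to purge the one coordinate whose weight grew, the remainder of the argument is a short $\log$-of-a-sum comparison.
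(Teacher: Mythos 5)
Your proposal is correct and follows essentially the same route as the paper: handle $g_1$ via \Cref{lem:potentialdec} and show the remaining children of $g$ contribute nonnegatively because only $g_1$'s weight may grow from $t'$ to $t$ while $g_1$, being the just-accessed group, cannot lie in any other child's neighborhood. Your write-up merely makes explicit two points the paper leaves implicit (that $\nset(g',t')=\nset(g',t)$ since no touches occur between $t'$ and $t$, and the barrier argument via $\rho(g_1,t)=t$), which is a faithful elaboration rather than a different proof.
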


Using \Cref{eq:cost} and \Cref{cor:potentai}, we get:
$$\Phi_{t-1}^j(\dg(g)) - \Phi_{t}^j(\dg(g)) + 2\log\left(\frac{W^j}{w_{t-1}(g_1)}\right)\ge \HTT_j(t)$$

But $\log (\frac{W^j}{w_{t-1}(g_1)})$ can be written as $c_{e_j}(\tset, \wset,x_t))$. So we get,

$$\Phi_{t-1}^j(\dg(g)) - \Phi_{t}^j(\dg(g)) + 2c_{e_j}(\tset, \wset,x_t)\ge \HTT_j(t)$$

Summing over all levels $j$ we get:

$$\sum_j\Phi_{t-1}^j(\dg(g)) - \sum_j\Phi_{t}^j(\dg(g) + 2c(\tset, \wset,x_t))\ge \HTT(\GGR,x_t)$$

Summing over all $t$ we get:

$$\sum_t\sum_j\Phi_{t-1}^j(\dg(g)) - \sum_t\sum_j\Phi_{t}^j(\dg(g)) + 2c(\tset, \wset,X)\ge \HTT(\GGR,X)$$

As the change in potential forms a telescoping series we get:

$$  \HTT(\GGR,X) \le \Phi_0 - \Phi_t + O(c(\tset, \wset,X))$$

\begin{table}
\centering

\begin{tabular}{||c|c|c||} 
 \hline
 Bound & $t-1$ & $t$\\ [0.5ex] 
 \hline\hline
 $\sqrt{\log n}$ & $2^{\sqrt{\log n}}$ & 1 \\ 
 \hline
 k-finger & $\frac{1}{k}$ & $\frac{1}{n^{1/2^j}}$ \\
 \hline
 Unified bound & $\frac{1}{(t_1)^2}$ & $\frac{1}{(t_1+1)^2}$\\
 \hline
 Unified bound time window  & $\frac{1}{k}$ & $\frac{1}{n^{1/2^j}}$\\ [1ex] 
 \hline

\end{tabular}
\caption{Weight of $g' \in \dg(g) \setminus g_1$ does not increase from time $t-1$ to $t$. Here $t_1=\WS_{t-1}(g')$.}
 \label{tab:dec}
 \end{table}

Next, we calculate the initial and the final potential. Let us assume that the weight of each group at any level in the group access tree is bounded by $[1/M, M]$ where $M$ is a polynomial in $n$, i.e., $M = n^c$, for some constant $c\ge 1$.
The reader can check that this fact is actual for all the weights that we have assigned in the group access tree. Consider level $j$ in the group access tree $\tset$. There can be at most $n$ groups at level $j$ (as there is only $n$ singleton). Thus, the potential of each group at level $j$ is bounded by $\log(nM) = O(\log n)$. Therefore,  the potential at time 0 in level $j$ can be bounded as $\Phi^j_0 = O(n \log n)$. Since there can be at most $\log n$ levels in the group access tree, $\Phi_0 = O(n \log^2 n)$. Similarly, we can bound $-\Phi_t = O(n \log^2 n)$. 

 Let $m=|X|$ such that $m=\Omega(n \log^2 n)$. Then, the amortized cost of $\GGR$ on an access  sequence $X$ of length $m$ is:

$$
 \HTT(\GGR,X) = O(c(\tset, \wset,X) +m)
$$
Therefore, $\GGR$ satisfies \Cref{thm:simulation}.

\bibliographystyle{alpha}
\bibliography{references}

\end{document}